\documentclass[12pt,reqno]{amsart}

\PassOptionsToPackage{dvipsnames,svgnames}{xcolor}
\usepackage{xcolor}

\usepackage{amsmath}
\usepackage{amssymb}
\usepackage{amsfonts}
\usepackage{mathscinet}
\usepackage{diagbox}
\usepackage{booktabs}

\usepackage{graphicx}
\usepackage{tikz}
\usetikzlibrary{arrows.meta}
\tikzset{
  mynode/.style={draw, circle, minimum size=1cm, inner sep=0pt},
  myarrow/.style={very thick},
}
\usepackage{multirow}
\usepackage{float}
\usepackage{subcaption}

\usepackage{algorithm}
\usepackage{algorithmic}
\usepackage[left=2.9cm,right=2.9cm,top=3cm,bottom=3cm]{geometry}
\usepackage{enumerate}
\usepackage{verbatim}
\usepackage[title]{appendix}

\newcommand{\N}{\mathbb N}

\usepackage{amsthm}

\newtheorem{theorem}{Theorem}
\newtheorem{lemma}{Lemma}

\newtheorem{proposition}{Proposition}
\newtheorem{assumption}{Assumption}

\newtheorem{corollary}{Corollary}

\theoremstyle{definition} 
\newtheorem{rem}{Remark}
\newtheorem{definition}{Definition}
\date{\today}

\title{Strategic Partitioning and Manipulability in Two-Round Elections}

\author{Emilio {\bf De Santis}}
\address{Universit\`a di Roma La Sapienza, Dipartimento di Matematica Guido Castelnuovo, 
Piazzale Aldo Moro, 5, 00185, Rome, Italy -- ORCID: 0000-0001-9563-7685}
\email{desantis@mat.uniroma1.it}
\author{Antonio {\bf Di Crescenzo}}
\address{Universit\`a degli Studi di Salerno, Dipartimento di Matematica,
 Via Giovanni Paolo II, 132, 84084, Fisciano (SA), Italy -- ORCID: 0000-0003-4751-7341}
\email{adicrescenzo@unisa.it}

\author{Verdiana {\bf Mustaro}}
\address{Universit\`a degli Studi di Salerno, Dipartimento di Matematica,
 Via Giovanni Paolo II, 132, 84084, Fisciano (SA), Italy -- ORCID: 0000-0003-4583-2612}
\email{vmustaro@unisa.it}

\begin{document}
\begin{abstract} 
We consider a two-round election model involving $m$ voters and $n$ candidates. Each voter is endowed with a strict preference list ranking the candidates. In the first round, the candidates are partitioned into two subsets, $A$ and $B$, and voters select their preferred candidate from each. Provided there are no ties, the two respective winners advance to a second round, where voters choose between them according to their initial preference lists. We analyze this scenario using a probabilistic framework based on a spatial voting model with cyclically constructed preference lists and uniformly distributed ideal points. Our objective is to determine the optimal initial partition of $A$ and $B$ that maximizes a target candidate's probability of winning. We analytically evaluate this success probability and derive its asymptotic behavior as the number of candidates $n \to \infty$. A key finding is that the asymptotically optimal relative width of the main discrete cluster converges precisely to one-fifth of the total number of candidates. Finally, we provide computational results and confidence intervals derived from simulation algorithms that validate the analytical framework. Specifically, we demonstrate that the probability of the universal victory event rapidly approaches $1$ as the electorate size increases.

\medskip
\noindent
\emph{Keywords:} Agenda control, Manipulation, Voting pattern, Voting simulation, Winning probability.

\medskip \noindent
\emph{AMS MSC 2010:}    91B12, 91B14.
\end{abstract}

\maketitle

% ======================================================
\section{Introduction} 
% ======================================================

Decision-making mechanisms concerning voting procedures play a fundamental role in political science, economics, and operations research. Within this context, a recurring theme is the analysis of voting rules under uncertainty, especially when voters' preferences are subject to random fluctuations. While many voting systems are designed with the aim of ensuring fair outcomes, they often exhibit vulnerabilities to structural manipulation. In this paper, we analyze a widely adopted two-round voting scheme under the lens of \emph{agenda control}: we investigate how the initial partition of candidates into subsets can predetermine the final winner. 

In our model, an election involves $m$ voters and $n$ candidates. Each voter is endowed with a strict preference list, classically represented as a permutation of all candidates. The voting mechanism operates as follows: whenever voters are required to choose from a given subset of candidates, they assign their vote to the candidate in that subset who appears highest in their individual preference list. Initially, the entire set of candidates is divided into two mutually exclusive groups, $A$ and $B$. In the first round, applying the aforementioned rule, the voters select a favorite candidate from $A$ and one from $B$. In the second round, all voters express a preference between the two winners of the first round. 

It is important to emphasize a strict rule regarding ties in our model: if a tie occurs among the top candidates in either the primary elections (within $A$ or $B$) or in the final ballot, the election is considered void and no candidate is declared the winner. Consequently, the sum of the winning probabilities of all individual candidates is strictly less than $1$, reflecting the strictly positive probability of such unresolved draws.

In a deterministic setting, the power of the agenda setter is highly dependent on the specific preference profiles of the voters. While trivial scenarios (such as unanimous voters) preclude any manipulation, there exist strategically vulnerable profiles where the agenda setter can completely dictate the outcome, steering the victory toward any desired candidate simply by tailoring the subsets $A$ and $B$. 

The vulnerability of aggregation systems to such strategic manipulation is a central theme in social choice theory. In the context of knockout tournaments, foundational works and the extensive literature on the \emph{Tournament Fixing Problem} have demonstrated how the final outcome can be decisively manipulated simply by acting on the initial seeding. The question of how much preference profiles can be structured to generate arbitrary rankings might seem analytically intractable. However, recent studies have shown that the problem can be formally addressed by exploiting an isomorphism with load-sharing models from reliability theory \cite{DSSp2023, DeSantisSpizzichino2023_DEF}. In these works, it was constructively proven that it is always possible to generate a voter population capable of realizing any arbitrary family of rankings. Building on this constructive approach, a recent paper \cite{desantisdebaets} quantified this fragility in single-elimination tournaments: for a tournament with $2^n$ candidates ($n \ge 3$), a specific profile of merely $4n-3$ voters is sufficient to guarantee that \emph{any} candidate can win, simply by choosing a proper initial board order. Furthermore, in the same work, a probabilistic scenario was explored, demonstrating that universal manipulability could be achieved with a number of voters proportional to $(\log n)^3$.

While our previous work focused on the sequential nature of single-elimination tournaments (where the manipulator can exploit the depth of the tournament tree), the present paper tackles the intrinsically different and ``flatter'' dynamics of a two-round voting system. In the tournament setting, manipulability is achieved with a relatively small electorate (polylogarithmic in $n$). In contrast, for the two-round system analyzed here, we establish a different asymptotic guarantee: the probability of universal manipulation converges to $1$ as the number of voters $m \to \infty$. 

To understand the vulnerability of this two-round system, we frame the manipulation as an adversarial mechanism design problem. First, the agenda setter (acting as the manipulator) designs an ``urn'' containing a specific distribution of preference lists. Since the adversary will announce the target candidate \emph{after} the urn is fixed, the manipulator cannot simply bias the urn toward a specific individual; doing so would allow the adversary to trivially win the challenge by choosing a disadvantaged candidate. Once the urn is established, the adversary randomly draws an electorate of $m$ voters from it (with replacement). 

At this stage, the agenda setter operates in an \emph{open-loop} control regime: they are entirely blind to the actual preference lists drawn by the $m$ voters, knowing only their initial urn design. The adversarial challenge can then unfold at two distinct levels. In the standard challenge, the adversary reveals a single target candidate, and the manipulator must ensure their victory by choosing an appropriate initial partition of candidates into subsets $A$ and $B$. We denote this \emph{individual winning guarantee} as event $F_1$. 

In the second, more extreme version of the challenge, the adversary is allowed to repeatedly change the target candidate \emph{after} the single electorate has been drawn. To win, the manipulator must be able to guarantee the victory of \emph{any} sequentially requested candidate using that exact same electorate, adapting the outcome solely by modifying the initial subsets $A$ and $B$. We denote this \emph{universal victory event} as $F^{(2)}$. The fundamental research question is: does there exist a symmetric composition of the urn that grants the manipulator this open-loop control, guaranteeing both the individual and the universal victory, with high probability, regardless of the specific electorate realization? 

In this paper, we answer affirmatively by demonstrating that an urn consisting of cyclically constructed preference lists provides exactly this level of control. It is crucial to emphasize that these strictly cyclic preferences are not proposed as a realistic sociological or behavioral model of the electorate. Rather, they represent the mathematical solution to the adversary's challenge. They act as an extreme theoretical environment---an upper bound for manipulability---that effectively masks pairwise defeats and unequivocally proves the structural fragility of the voting system under blind manipulation.

Interestingly, we show that increasing the number of candidates $n$ does not hinder the agenda setter; rather, it facilitates the manipulation. A larger candidate pool allows the manipulator to exploit the dispersion of votes among the target's opponents, making it easier to isolate and eliminate dangerous competitors in the first round. 

By employing a probabilistic analysis, we analytically show that $\mathsf{P}(F^{(2)}) \to 1$ as $m \to \infty$, irrespective of $n$. Beyond this asymptotic limit, we demonstrate that the practical size of the electorate required to obtain such a universal guarantee is surprisingly low. To achieve this manipulation, we derive a key geometric property: the asymptotically optimal relative width of the main clusters in the partitions $A$ and $B$ converges precisely to $1/5$ of the total number of candidates. Our numerical simulations confirm that universal manipulability is reached rapidly: for an electorate of just $m = 51$ voters (and $n=30$ candidates), the probability of the universal event $F^{(2)}$ already exceeds $91\%$. This confirms the severe structural vulnerability of the system, showing that near-total control can be established even in small-scale scenarios.

To study the regime where the number of candidates is large relative to the electorate, we introduce a \emph{continuous limit model}. By mapping the discrete cyclic preferences onto the continuous domain $[0,1)$, voters are treated as independent uniform random variables. We show that, provided the number of voters grows as $m = o(\sqrt{n})$, the discrete winning probabilities converge to this continuous limit. This formulation yields closed-form polynomial expressions for the victory probabilities and justifies the convergence of the discrete optimal cluster width to its continuous counterpart. Finally, this framework allows us to extend our asymptotic results to the universal victory event.

% === PARAGRAFO CON LE CITAZIONI RECUPERATE ===
The exploration of such paradoxes and probabilistic bounds is well-rooted in the broader literature. The choice of a two-round voting system is well established; see, for example, \cite{CM2020}. A broader perspective on voting paradoxes and group coherence can be found in \cite{GL2012}. In \cite{Hazla_etal2020}, the authors discuss the phenomenon of intransitivity in different models. Intransitivity in dice tournaments is also assessed in \cite{SaSa2024}, which proves a series of asymptotic results about their distribution. A subsequent study \cite{SaSa25} presents a random majority dynamics scenario to calculate the asymptotical winning probability of one opinion over another. The issue of predetermining victory through structural choices is also tackled in \cite{KSV17}, which proves sufficient conditions to assure that each player can be favored to win by means of a proper initial seeding.
% ==============================================

The plan of the paper is as follows. In Section \ref{sect:prelim}, we introduce the voting system and the related notation. Section \ref{main} is devoted to the main results, with special emphasis on the asymptotic optimal ratio disclosed in Theorem \ref{Bound}. Section \ref{continuo} formalizes the continuous limit and the universal victory event. Finally, in Section \ref{computational} we illustrate the effectiveness of the analysis through computational results, leading to numerical estimates of the optimal cluster length and demonstrating the rapid onset of universal manipulability.

% ==============================
\section{Preliminary results}\label{sect:prelim}
% ==============================
  In order to describe the voting system, 
  we give few preliminary definitions. For simplicity, the set of all possible candidates will be denoted as  $[n]=\{1,2,\ldots,n\}$ and the set of voters as 
  $\mathcal{V}=\{v_1,v_2,\ldots,v_m\}$, where $m,n \in \mathbb{N}$. For any $j\in [m]$, we define the  \emph{preference list $ {\bf r}_j=(r_{j,1},r_{j,2},\ldots,r_{j,n})$
of voter $v_j$} as a permutation  of the elements in $[n]$. 
In the preference list $ {\bf r}_j $,  all the candidates are listed, from left to right, in order of preference of voter $v_j$ (no ties are allowed). 
 Every voter expresses one vote in any election. 
 We consider  elections where a set $A \subseteq [n]$ of candidates \emph{participates}, i.e. only candidates belonging to set $A$ can be voted for by voters.

Therefore, for any $j \in [m]$ voter $v_j$ prefers the candidate $r_{j,h}$ over the candidate $r_{j,k}$, for all $h<k$, where $h,k\in [n]$. 
Hence, if a voting round only contains a set of candidates  $A\subseteq [n]$ then 
the preferred candidate of voter $v_j$ is $r_{j, \ell}$, 
where $\ell =\min\{i\in [n]:r_{j,i}\in A\}$.  Each voter will always cast their vote in each election, expressing a preference. 
  \begin{definition}      In a vote with fixed sets of candidates $[n]$ and voters $\mathcal{V}=\{v_1,v_2,\ldots,v_m\}$ we define the voting profile $R$ (or preference matrix) as an $m\times n$ matrix such that
      \begin{equation} \label{voting_profile}
          R=||r_{ji}||=\left[\begin{matrix}
          {\bf r}_1 \\
          \vdots \\
          {\bf r}_m
      \end{matrix}\right],
      \end{equation}
      where ${\bf r}_j$ represents the preference list of voter $v_j$.
  \end{definition}
    In the following, we will focus on specific choices for the voting preferences. Indeed, we shall restrict our investigation to voting preferences in which the candidates are oriented clockwise.
\begin{definition}\label{def:Vc}
 Given $m$ voters and $n$ candidates, for any $k \in [n]$, the preference list
    $$
(k,k+1,\ldots,n-1,n,1,\ldots, k-1)
$$
     is said to be oriented clockwise.  
      Let ${\mathcal S}_n$ denote the set of the $n$ clockwise oriented preference lists. 
We define $O_{m,n}$ as the set of voting profiles in which each row of the voting profile is a clockwise oriented preference list, according to ${\mathcal S}_n$.   
  \end{definition}
  For $R \in O_{m,n}$, it will be useful to consider the graph 
 $G_n=(V_n,E_n)$, in which  (see Figure \ref{grafo})
  $$V_n:=[n], \qquad E_n:=\{\{i,i+1\}_{1\leq i\leq n-1},\{n,1\}\}.$$
 For each given subset $A\subseteq [n]$, we say that an edge $\{i,i'\} \in E_
 n$ is \emph{$A$-open} if and only if $i,i'\in A$ or $i,i'\in A^c$, where $$A^c=[n]\setminus A.$$
 Furthermore, we say that two nodes $i,i'\in [n]$ are connected if and only if there exists a path of $A$-open edges from $i$ to $i'$. Thus, we partition all nodes into maximal connected components, which are called \emph{clusters}. Given $A \subseteq [n]$, we denote by $\mathcal{C}_{n,A}$ the collection of clusters of $G_n$ generated by $A$. We notice that $\mathcal{C}_{n,A}=\mathcal{C}_{n,A^c}$.
  For each $i\in [n]$, we also define 
  \begin{equation} \label{def_cluster}
      C_{A^*}(i):=\begin{cases}
      \emptyset & {\rm if}\; i \notin A^*, \\
      C\in\mathcal{C}_{n,A}:i\in C &{\rm if}\; i \in A^{*},
  \end{cases}
  \end{equation}
  where $A^*\in \{A,A^c\}$.
    \begin{figure}[!ht]
    \centering
\begin{tikzpicture}[scale=1.4]
  \foreach \i/\name in {
     0/1, 1/2, 2/3, 3/4} {
    \node[draw=red, circle, minimum size=0.8cm, inner sep=0pt] (n\name) at ({-360/14*\i + 90}:2cm) {\name};
  }

    \foreach \i/\name in {10/11,
     11/12, 12/13,13/14} {
    \node[draw=blue, circle, minimum size=0.8cm, inner sep=0pt] (n\name) at ({-360/14*\i + 90}:2cm) {\name};
  }

    \foreach \i/\name in {4/5} {
    \node[draw=blue, circle, minimum size=0.8cm, inner sep=0pt] (n\name) at ({-360/14*\i + 90}:2cm) {\name};
  }

  \foreach \i/\name in {5/6} {
    \node[draw=red, circle, minimum size=0.8cm, inner sep=0pt] (n\name) at ({-360/14*\i + 90}:2cm) {\name};
  }

    \foreach \i/\name in {6/7} {
    \node[draw=blue, circle, minimum size=0.8cm, inner sep=0pt] (n\name) at ({-360/14*\i + 90}:2cm) {\name};
  }

\foreach \i/\name in {7/8} {
    \node[draw=red, circle, minimum size=0.8cm, inner sep=0pt] (n\name) at ({-360/14*\i + 90}:2cm) {\name};
  }

\foreach \i/\name in {8/9} {
    \node[draw=blue, circle, minimum size=0.8cm, inner sep=0pt] (n\name) at ({-360/14*\i + 90}:2cm) {\name};
  }

\foreach \i/\name in {9/10} {
    \node[draw=red, circle, minimum size=0.8cm, inner sep=0pt] (n\name) at ({-360/14*\i + 90}:2cm) {\name};
  }
  \node[draw=blue, circle, minimum size=0.8cm, inner sep=0pt] (n14) at ({-360/14*13 + 90}:2cm) {$14$};

  \foreach \i/\j in {
    1/2, 2/3, 3/4} {
    \draw[color=red] (n\i) -- (n\j);
  }

 \foreach \i/\j in {4/5} {
    \draw[dashed] (n\i) -- (n\j);
  }

  \foreach \i/\j in {11/12,
    12/13,13/14} {
    \draw[color=blue] (n\i) -- (n\j);
  }

\foreach \i/\j in {4/5,5/6,6/7,7/8,8/9,9/10,10/11} {
    \draw[dashed] (n\i) -- (n\j);
  }

\foreach \i/\j in {14/1} {
    \draw[dashed] (n\i) -- (n\j);
  }
\end{tikzpicture}
\caption{Graph $G_{14}$, with $A=\{1,2,3,4,6,8,10\}$ and A-open (non A-open) edges drawn with continuous (dashed) lines.}
 \label{grafo}
\end{figure}
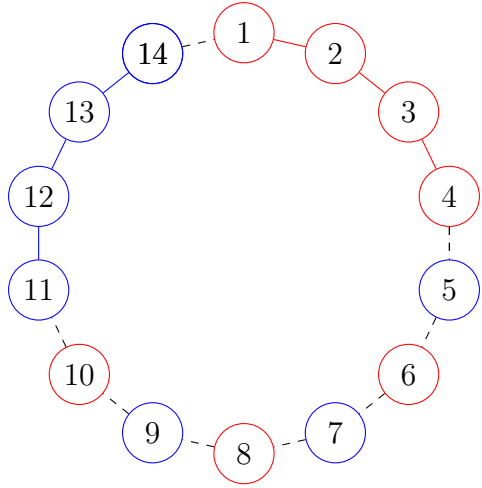

\begin{definition} \label{cluster} 
For $A \subseteq [n]$ and $i \in A$, we define the \emph{recruitment cluster} $C^-_{A}(i)$ as:
    \begin{equation}\label{def_c_meno}
          C^-_{A}(i) := 
          \begin{cases} 
             C_{A^c}(i-1) & \text{if } i > 1, \\
             C_{A^c}(n)   & \text{if } i = 1,
          \end{cases}
    \end{equation}
    where $C_{A^c}(i)$ is defined as in Equation \eqref{def_cluster}.
\end{definition}

    The cluster $C^-_{A^c}(i)$ is obtained from Equation \eqref{def_c_meno} by exchanging $A$ with $A^c$.
        Note that the sets $A$ and $\{C_A^-(k)\}_{k\in A}$ form a partition of $[n]$. \par
\begin{definition} \label{semi}
    Let $R=||r_{ji}||\in O_{m,n}$ be a voting profile. 
    We define the \emph{seed element}
    $$s_j=r_{j1}, \quad j=1,\ldots,m,$$
   and the \emph{seed vector} 
    $${\bf s}=(r_{11},r_{21},\ldots,r_{m1}).$$
\end{definition}

\begin{definition} \label{votiperi}
    Let $A \subseteq [n]$ be non-empty, and let $i \in A$. Assuming that an election is held among the elements of $A$, we define $\xi_A(i)$ as the number of votes cast for $i$ in this election.
\end{definition}
 Note that, recalling that there are no abstentions, one has
    $$
    \sum_{i\in A}\xi_A(i)=m,
    $$
and $\xi_A(i)=0$ if $i\notin A.$
\begin{lemma} \label{lemma-voti}
    Let $m,n \in \mathbb{N}$. For any   $R \in O_{m,n}$ and for any  non-empty $A \subseteq [n]$, if $i\in A$, then
    \begin{equation}
   \xi_A(i)= \sum_{j=1}^m\mathbf{I}_{C^-_{A}(i)\cup \{i\}}(s_j).
     \end{equation}
     \end{lemma}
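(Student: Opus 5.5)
The plan is to reduce the lemma to a statement about a single voter and then sum over voters. For each $j\in[m]$, voter $v_j$ casts exactly one vote in the election among $A$. So it suffices to show that $v_j$ votes for $i\in A$ if and only if $s_j\in C^-_A(i)\cup\{i\}$. Summing the indicator of this event over $j$ then gives $\xi_A(i)$ directly from Definition \ref{votiperi}.

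First, I will make the vote of a single voter explicit. Since $R\in O_{m,n}$, the row ${\bf r}_j$ is the clockwise list starting at $s_j$, namely $(s_j,s_j+1,\dots)$ with indices taken cyclically mod $n$. By the voting rule, $v_j$ votes for the first element of $A$ met when moving clockwise from $s_j$, including $s_j$ itself. There are two cases.
\begin{itemize}
\item If $s_j\in A$, the vote goes to $s_j$. Moreover $s_j\notin C^-_A(k)$ for every $k$, because these clusters lie in $A^c$. So the claim holds: $v_j$ votes for $i$ iff $s_j=i$.
\item If $s_j\in A^c$, the voter moves clockwise through consecutive elements of $A^c$ until reaching the first element $k\in A$. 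In this case I must show that $s_j\in C^-_A(k)$, that is, $s_j$ lies in the $A^c$-cluster containing $k-1$ (or $n$ if $k=1$).
\end{itemize}

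The key step is the second case. The cyclic stretch $s_j, s_j+1,\dots,k-1$ lies entirely in $A^c$ by the minimality of $k$. Hence every edge between consecutive nodes of this stretch is $A$-open, and $s_j$ is connected to $k-1$. Therefore $C_{A^c}(s_j)=C_{A^c}(k-1)=C^-_A(k)$. Conversely, $s_j$ can lie in at most one set of the form $C^-_A(k)\cup\{k\}$, since the remark after Definition \ref{cluster} states that $A$ and $\{C^-_A(k)\}_{k\in A}$ partition $[n]$. So $v_j$ votes for $i$ exactly when $s_j\in C^-_A(i)\cup\{i\}$.

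The main obstacle is bookkeeping with the wrap-around and one degenerate configuration. The index convention $i=1\mapsto n$ in the definition of $C^-_A$ matches the cyclic predecessor, which handles the wrap-around. The degenerate case needs a short check: when $A^c$ is a single arc closing around the cycle, each $A^c$-cluster is still an arc ending just before an element of $A$, because $A$ is non-empty. I would also justify the partition remark if needed by this arc structure: every $A^c$-cluster is a maximal cyclic interval, which is immediately followed clockwise by a unique element of $A$.
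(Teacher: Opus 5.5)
Your proof is correct and is essentially the paper's argument: both reduce to a single voter, who votes for the first element of $A$ met clockwise from the seed $s_j$, and the run of $A^c$-candidates immediately preceding $i$ is exactly $C^-_A(i)$. The only difference is one of direction. The paper shows that a seed in $C^-_A(i)\cup\{i\}$ produces a vote for $i$, and tacitly leaves the converse to the partition remark after Definition~\ref{cluster}. You instead prove that a vote for $k$ forces $s_j\in C^-_A(k)\cup\{k\}$, and then use the disjointness half of that remark, which makes your version slightly more complete.
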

    \begin{proof}
        Let $i\in A$. If $s_j=i$, the vote is assigned to $i$ being the preferred candidate. Otherwise, if $s_j\in C^-_{A}(i),$ by Definition \ref{cluster} we have $s_j \in A^c$. Hence, the elements of the $j$-th row of $R$ preceding $i$ are in $A^c$, therefore the first element eligible for a vote for $A$ is $i$. 
    \end{proof}

\begin{assumption}\label{assumption_1}
   For $m, n \in \mathbb{N}$, the voting profile $R \in O_{m,n}$ is randomly constructed such that each of the $m$ voters, independently from the others, chooses the seed uniformly over $[n]$.
\end{assumption}

Under Assumption \ref{assumption_1}, $\xi_A(i)$ is a random variable, denoted as $\Xi_A(i)$. Its joint distribution is multinomial:
    \begin{equation} \label{prob_unif}
        \mathsf{P}\left(\bigcap_{i \in A} \{\Xi_A(i)=k_i\}\right)=m!\cdot \prod_{i\in A}\left[\frac{1}{k_i!}\left(\frac{|C^-_{A}(i)|+1}{n}\right)^{k_i}\right],
    \end{equation}
    where $k_i\in \mathbb{N}_0$ for $i \in A$, and $\displaystyle\sum_{i\in A}k_i=m$. 

%======================
\section{Optimal Agenda Setting and Asymptotic Winning Probabilities} \label{main}
%======================  

In the following, we shall describe a voting scenario where the set of candidates is $[n]$, $n\in2\mathbb{N}$. Assume that the candidates' set is partitioned into two sets $A$ and $A^c$, with $|A|=|A^c|$. Two rounds of voting are held: in the first, all voters give a preference for two candidates, one belonging to $A$ and one to $A^c$, based on their preference list. An element $i \in A$ is declared winner of the first round if $\Xi_A(i)>\Xi_A(i')$ for any $i'\in A$ with $i'\neq i$.
Subsequently, if two winners, say $i$ and $\ell$, emerge from the first round, a second and final voting round is held between them, using the same voting profile of the first round. The candidate $i $ will be the winner of the final round if 
$$
\Xi_{\{i, \ell\}}(i) > \Xi_{\{i, \ell\}}(\ell) . 
$$
So, the probability that a candidate wins the election depends only on $n,m$ and the structure of $A$. 
Before formally defining the partition subsets, it is instructive to highlight the power of the agenda setter. From an optimization standpoint, the selection of the candidate partition can be viewed as a constrained stochastic program. Since the agenda setter must divide the candidates before observing the actual realization of the random voting profile $R \in O_{m,n}$, the strategy is strictly open-loop.  Our analytical derivation of the optimal cluster width provides the exact solution to this problem in the asymptotic regime. To intuitively grasp the implications of this mathematical framework, suppose an adversary is allowed to observe the entire realized preference matrix $R$. The adversary challenges us to ensure the victory of one specific candidate, betting on any of the others. We are entirely blind to the voters' actual preferences, but we are granted the ability to design the partition structure. Remarkably, our results demonstrate that this single degree of freedom is decisive. By optimally choosing the partition, the probability of our designated candidate winning converges to 1 as $m \to \infty$, see Corollary \ref{Cor:tutti}.

For an even number of candidates $n $ and for an integer $l$  with  $2\leq l < n/2$, we specify the two sets $A$ and $B:=A^c$  as
\begin{equation} \label{defA}
  A : = A^{(1,n,l)} =\{1,\ldots,l\}\cup C, \quad {\rm where }\;\;C:=C^{(n,l)}=  \bigcup_{i=1}^{\frac{n-2l}{2}}\{l+2i\},
\end{equation}
see Figure \ref{grafo}, where $n=14$ and $l=4$. Moreover, 
\begin{equation} \label{defB}
  B := B^{(1, n,l)} =[n]
  \setminus A^{(1,n,l)} =\{n -l +1, n -l +2 \ldots,n\}\cup C', \quad {\rm where }\;\;C':=\bigcup_{i=1}^{\frac{n-2l}{2}}\{l+2i- 1 \}. 
\end{equation}

We also define, for $i = 2, \ldots, n$,  the set
\begin{equation} \label{defAi}
  A^{(i,n,l)}: =\{[\, (a+ i-2) \mod{(n)}\,] +1 : a \in A^{(1,n,l)}  \}
\end{equation}
and $B^{(i, n,l)}: =[n] \setminus A^{(i,n,l)}$. We can think of $A^{(i,n,l)}$ as a proper rotation of the set $A^{(1,n,l)}$. \par
We will study the probability 
\begin{equation} \label{prob_f}
    p_i(n,m,l) := \mathsf{P} (F_{i,n,m,l}),
\end{equation}
where
\begin{equation*}
  F_{i,n,m,l} : =\{ \text{$i$ wins  the  two voting rounds with voting profile $R\in O_{m,n}$ and initial round $A^{(i,n,l)}$}\} .
    \label{F}
\end{equation*} 
In addition, we denote by $F^{(2)}_{n,m,l} := \bigcap_{i=1}^n F_{i,n,m,l}$ the universal victory event for all candidates, i.e. $F^{(2)}_{n,m,l} $ represents the simultaneous victory condition for all candidates, provided that the sets $A^{(i,n,l)}$ and $B^{(i,n,l)}$ are chosen according to \eqref{defA}, \eqref{defB} and \eqref{defAi}.
The winning event for candidate $i$  after the two voting rounds where the candidates are partitioned in $A^{(i,n,l)}$,  $B^{(i,n,l)}    = [n] \setminus A^{(i,n,l)}$ and there are $m $ voters.
Note that, given Assumption \ref{assumption_1}, we see that $p_i(n,m,l) $ does not depend on the index $i$. 
Then, by the union bound, we can conclude that 
\begin{equation}\label{tutti}
\mathsf{P} \Big ( F^{(2)}_{n,m,l} \Big ) \geq 1 - \sum_{i=1}^n (1- p_i(n,m,l))=1 - n (1- p_1(n,m,l)). 
\end{equation}

In order to simplify the notation of the events, in the following we will suppress the dependence on the indices $i,n,m,l$. Moreover, 
as a reference case, from now on we assume $i=1$.
For obtaining upper and lower bounds for $p_1(n,m,l)$, 
we define some relevant events. We first consider 
\begin{equation*}
  E_{1,1} :=\{ \text{Candidate 1 wins the voting in $A$}\} \cap \{ \text{Candidate $(l+1)$ wins the  voting in $B$}\} 
    \label{E11}
\end{equation*}   
\begin{equation*}
=\Big ( \bigcap_{i \in A \setminus \{1\}} \big \{ \Xi_A(i) < \Xi_A(1) \big \} \Big ) \cap \Big ( \bigcap_{i \in B \setminus \{l +1\}} \big \{ \Xi_B(i) < \Xi_B( l+1) \big \} \Big ) 
=R_{A,1}^c\cap R_{B,1}^c,
    \label{E11cap}
\end{equation*}  
where
\begin{equation}
 R_{A,1}:=\bigcup_{i \in A \setminus \{1\}} \big \{ \Xi_A(i) \geq  \Xi_A(1) \big \},
\qquad
 R_{B,1}:=\bigcup_{i \in B \setminus \{l+1\}} \big \{ \Xi_B(i) \geq  \Xi_B(l+ 1) \big \}.
 \label{eq:defRAB1}
\end{equation}
Specifically, $R_{A,1}$ occurs when candidate 1 is not winning among the candidates in $A$, and similarly $R_{B,1}$ occurs when candidate $l+1$ is not winning among the candidates in $B$. 
Moreover, we consider 
\begin{equation*}
  E_{1,2} :=\{ \text{Candidate 1 wins the voting in $A$}\} \cap \{ \text{Candidate $(l+1)$ wins the  voting in $B$}\}^c,
    \label{E12}
\end{equation*}   
\begin{equation*}
  E_{2}: =\{ \text{Candidate 1 wins  against the winner of   $B$}\} ,
    \label{E2}
\end{equation*}  
\begin{equation}
  E_{3} :=\{ \text{Candidate 1 wins  against candidate $(l+1)$}\} 
  =\big \{ \Xi_{\{  1, l+1   \}}(l+1) < \Xi_{\{  1, l+1   \}}(1) \big \}.
    \label{E3}
\end{equation} 
From now on we assume $i=1$ and define $F := F_{1, n,m,l}$. 
The event $F$
can be expressed as 
\begin{equation} \label{def_F}
    F = (E_{1,1}\cup E_{1,2}) \cap E_2 = (E_{1,1 } \cap E_3) \cup (E_{1,2}\cap E_2), 
\end{equation}
since $E_{1,1 } \cap E_2=E_{1,1 } \cap E_3$.

In the next lemma we focus on the event $F^c$.

\begin{lemma} \label{subset}
Recalling \eqref{eq:defRAB1} and \eqref{def_F}, for any voting profile in $O_{m,n}$ the following inclusions hold:
\begin{align}
     F^c &\subset  R_{A,1} \cup R_{B,1} \cup E_{3}^c; 
     \label{sub1} 
     \\
        F^c& \supset   \big \{ \Xi_A(i) \geq  \Xi_A(1) \big \},   \qquad \forall\; i \in A\setminus\{1\};%\nonumber 
    \label{sub2}    \\
        F^c& \supset  E_3^c \setminus 
       R_{B,1}.  %\nonumber
 \label{sub3}   
 \end{align}
\end{lemma}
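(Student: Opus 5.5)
Since all three inclusions are set-theoretic consequences of the decomposition \eqref{def_F}, $F=(E_{1,1}\cap E_3)\cup(E_{1,2}\cap E_2)$, we work with complements and with $E_{1,1}=R_{A,1}^c\cap R_{B,1}^c$. Recall that $E_{1,1}\cup E_{1,2}$ is exactly the event that candidate $1$ wins the first round in $A$, i.e.\ $R_{A,1}^c$. Hence $F = R_{A,1}^c\cap E_2$ and $F^c = R_{A,1}\cup E_2^c$. Every inclusion will be read off from this identity together with $E_{1,1}\cap E_2=E_{1,1}\cap E_3$.

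For \eqref{sub1}, take an outcome in $F^c$ and suppose it lies in neither $R_{A,1}$ nor $R_{B,1}$. Then $E_{1,1}$ holds, so $F^c\cap E_{1,1} = E_{1,1}\cap E_2^c = E_{1,1}\cap E_3^c\subset E_3^c$. Thus $F^c\subset R_{A,1}\cup R_{B,1}\cup E_3^c$.

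For \eqref{sub2}, fix $i\in A\setminus\{1\}$. On $\{\Xi_A(i)\ge\Xi_A(1)\}$ candidate $1$ is not the strict winner in $A$: either $i$ strictly beats it, or there is a tie, which voids the election. In either case $R_{A,1}$ occurs, and since $R_{A,1}\subset F^c$ the inclusion follows.

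For \eqref{sub3}, take an outcome in $E_3^c\setminus R_{B,1}$. Then $l+1$ is the strict winner of $B$. If $1$ does not win $A$, then $F^c$ holds trivially. Otherwise $1$ and $l+1$ meet in the final, and $E_3^c$ says $1$ does not strictly beat $l+1$, so $E_2^c$ occurs and again $F^c$ holds.

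The only delicate point is the tie convention. We must check that $E_2$ is interpreted so that a void election (a tie in the $B$ round or in the final) counts as a non-win for $1$. Under the paper's rule that ties make the election void with no winner, this holds, and then all three steps are purely logical.
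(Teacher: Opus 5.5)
Your proof is correct and follows essentially the same route as the paper. All three inclusions are read off from $F=(E_{1,1}\cap E_3)\cup(E_{1,2}\cap E_2)$, which is equivalent to your identity $F^c=R_{A,1}\cup E_2^c$, together with $E_{1,1}\cap E_2=E_{1,1}\cap E_3$; your case split in \eqref{sub3} and your remark that a void $B$-round or final must count against $E_2$ only make explicit what the paper leaves implicit.
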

\begin{proof} We  start by proving \eqref{sub1}. 
Recalling (\ref{eq:defRAB1}), \eqref{E3} and  (\ref{def_F}), we have 
\begin{eqnarray*}
  F^c & = & \Big  (( E_{1,1}\cap E_3 ) \cup (E_{1,2} \cap E_2) \Big )^c 
  = ( E_{1,1}\cap E_3 )^c \cap  (E_{1,2} \cap E_2)^c 
  \subset ( E_{1,1}\cap E_3 )^c 
\\
 &= & E_{1,1}^c \cup E_3^c
 = R_{A,1} \cup R_{B,1} \cup E_3^c.
\end{eqnarray*}
In order to prove the inclusion in  \eqref{sub2} it is enough to observe that if $\Xi_A(i) \geq  \Xi_A(1)$ then 
  the candidate $1 $ does not win the first round. In conclusion, let us now prove \eqref{sub3}. We have $ E_3^c \setminus 
       R_{B,1}= E_3^c \cap  R_{B,1}^c$. Therefore, for any $\omega \in E_3^c \cap 
       R_{B,1}^c$, candidate $(l+1) $ is the winner of the first round in $B$. 
 Since $ \omega \in  E_3^c $, candidate $1$ does not win over candidate $(l+1)$ when voting among the set $\{1,l+1\}$. Hence, candidate 1 cannot be the winner of the two rounds, thus $\omega \in F^c$. This ends the proof. 
\end{proof}

\medskip 

For the reader's convenience, we start by presenting a particular application of the Chernoff inequality and Cramér's theorem (cf.\ \cite{dembo}) to our framework.
\begin{lemma}\label{chernoff_bound}
    Let $m \in \mathbb{N}$ and $Y^{(m)}$ be a random variable defined as
    \begin{equation} \label{def_y}
        Y^{(m)}:=\sum_{j=1}^m Y_j,
    \end{equation}
    where $\{Y_1,Y_2,\ldots,Y_m\}$ represents a family of  i.i.d. random variables such that
    \begin{align}
        p_{-1}:=\mathsf{P}(Y_j=-1)&; \qquad
        p_{0}:=\mathsf{P}(Y_j=0); \qquad 
        p_{1}:=\mathsf{P}(Y_j=1); \nonumber\\ 
        p_{-1}&<p_{1}, \quad p_{-1}+p_{0}+p_{1}=1. \label{cond_p}
    \end{align}
    Then,
    \begin{equation}\label{upper_bound}
        {\mathsf{P}}(Y^{(m)}\leq 0)\leq \exp\{ m \log {(1-(\sqrt{p_{1}}-\sqrt{p_{-1}})^2)}\}.
    \end{equation}
Furthermore,
    \begin{equation} \label{lim_log_1}
\lim_{m \to \infty }\frac{ \log ( {\mathsf{P}}(Y^{(m)}\leq 0))}{m} =\log {(1-(\sqrt{p_{1}}-\sqrt{p_{-1}})^2)}.
\end{equation}
\end{lemma}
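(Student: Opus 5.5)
The bound \eqref{upper_bound} follows from the exponential Chernoff bound, optimized over a free parameter. The limit \eqref{lim_log_1} follows from Cramér's theorem in the form of \cite{dembo}. For any $\lambda \ge 0$, Markov's inequality applied to $e^{-\lambda Y^{(m)}}$ gives
\begin{equation*}
\mathsf{P}(Y^{(m)}\le 0)=\mathsf{P}\big(e^{-\lambda Y^{(m)}}\ge 1\big)\le \mathsf{E}\big[e^{-\lambda Y^{(m)}}\big]=\big(\phi(\lambda)\big)^m .
\end{equation*}
Here I use independence and set $\phi(\lambda):=\mathsf{E}[e^{-\lambda Y_1}]=p_{-1}e^{\lambda}+p_0+p_1e^{-\lambda}$. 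Writing $t=e^{\lambda}\ge 1$, the function $p_{-1}t+p_0+p_1/t$ is minimized over $t>0$ at $t^*=\sqrt{p_1/p_{-1}}$. Since $p_{-1}<p_1$, we have $t^*>1$, so the minimizer is admissible ($\lambda^*>0$). The minimum value is
\begin{equation*}
p_0+2\sqrt{p_1p_{-1}}=1-p_1-p_{-1}+2\sqrt{p_1p_{-1}}=1-(\sqrt{p_1}-\sqrt{p_{-1}})^2 .
\end{equation*}
Substituting this into the bound gives \eqref{upper_bound}. The degenerate case $p_{-1}=0$ is handled separately: then $\inf_t(p_0+p_1/t)=p_0$, which again equals $1-(\sqrt{p_1}-0)^2$, and one lets $\lambda\to\infty$.

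For \eqref{lim_log_1}, the upper bound just proved gives the $\limsup$ inequality. For the matching $\liminf$, I apply Cramér's theorem to the bounded i.i.d. variables $Y_j$ with the closed set $(-\infty,0]$ and the open set $(-\infty,0)$. The rate function is $\Lambda^*(x)=\sup_\lambda\{\lambda x-\log \mathsf{E}e^{\lambda Y_1}\}$. Because the mean $p_1-p_{-1}>0$, the rate function is nonincreasing on $(-\infty,0]$. Its value at $0$ is $-\inf_\lambda \log \mathsf{E}e^{\lambda Y_1}=-\log(1-(\sqrt{p_1}-\sqrt{p_{-1}})^2)$, by the computation above.

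The main obstacle is the lower bound at the boundary point $0$. The open-set lower bound involves $\inf_{x<0}\Lambda^*(x)$, so I need continuity of $\Lambda^*$ at $0$ from the left. When $p_{-1}>0$, $\Lambda^*$ is finite and convex on $[-1,1]$, and therefore continuous at the interior point $0$. This closes the argument.

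Alternatively, one can avoid the continuity issue with a direct lattice argument. Since $Y^{(m)}$ is integer-valued, one can bound $\mathsf{P}(Y^{(m)}\le 0)\ge \mathsf{P}(Y^{(m)}=0)$. This probability can then be estimated from below by the standard exponential tilting with parameter $\lambda^*$: under the tilted law the mean is zero, and a local CLT shows that $\mathsf{P}(Y^{(m)}=0)$ decays only polynomially under the tilted measure. This yields $\liminf \frac1m\log \mathsf{P}(Y^{(m)}\le0)\ge \log \phi(\lambda^*)$.

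If $p_{-1}=0$, the probability equals $p_0^m$ exactly, and the limit is immediate.
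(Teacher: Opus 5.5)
Your proof is correct and follows the paper's route. The optimized Chernoff bound gives \eqref{upper_bound}: your $\lambda$ is the paper's $-t$, and your $\lambda^*$ equals $-t_{max}$ with $t_{max}=\log\sqrt{p_{-1}/p_1}$. Cram\'er's theorem then gives \eqref{lim_log_1}. You are in fact more careful than the paper, which neither treats $p_{-1}=0$ nor addresses the mismatch between the closed set $(-\infty,0]$ and the open set $(-\infty,0)$ in the lower bound; you close that gap using continuity of $\Lambda^*$ on $(-1,1)$. One small caveat concerns only your optional lattice alternative: if $p_0=0$ then $Y^{(m)}\equiv m \pmod 2$, so $\mathsf{P}(Y^{(m)}=0)=0$ for odd $m$, and there you should bound from below by $\mathsf{P}(Y^{(m)}\in\{-1,0\})$ instead.
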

\begin{proof}
We recall from large deviations theory that the rate function, i.e. the Legendre transform of the moment generating function of $Y_1$, is
    \begin{equation*}
        I(x) = \sup_{t \in \mathbb{R}} \{ t x -\log \mathsf{E}\left[\exp\{t\ Y_1\}\right] \}.
    \end{equation*}
    In order to obtain the expressions in \eqref{upper_bound} and \eqref{lim_log_1}, we take $x=0$ in the previous definition. Thus, we obtain
\begin{equation}\label{ratefunct}
        I(0) = \sup_{t\in \mathbb{R}} \{-\log \mathsf{E}\left[\exp\{t\ Y_1\}\right]\}=\sup_{t<0} \{- \log\mathsf{E}\left[\exp\{t\ Y_1\}\right]\},
    \end{equation}
    where $t$ is restricted to negative values since
    $\mathsf{E}[Y_1]=p_1-p_{-1}>0$ (see \eqref{cond_p}). 
    Since
    \begin{align*}
        \frac{{\rm d}}{{\rm d} t}\left\{-\log\mathsf{E}\left[\exp\{t\ Y_1\}\right]\right\}
        % =\frac{e^{-t} p_{-1}-e^{t}p_{1}}{e^{t}p_{1}+e^{-t}p_{-1}+p_0}
        =0 \quad \Leftrightarrow \quad t=t_{max}:=\log \left(\sqrt{\frac{p_{-1}}{p_1}}\right)<0,
    \end{align*}
    and
    $\mathsf{E}\left[\exp\{t\ Y_1\}\right]=e^{t}p_{1}+e^{-t}p_{-1}+p_0,$
   we get $$I(0)=-\log\mathsf{E}[t_{max}\ Y_1]=-\log {(1-(\sqrt{p_{1}}-\sqrt{p_{-1}})^2)}.$$
    From Chernoff bound we get 
    \begin{equation} \label{bound_Y}
        \mathsf{P}(Y^{(m)}\leq 0)\leq \exp\Big\{-m\ I(0)\Big\},
    \end{equation}
    which leads to \eqref{upper_bound}.
Furthermore, from Cramér's theorem we get \eqref{lim_log_1}. 
\end{proof}
In the following lemma, given two random variables $Y^{(m)}$ and $\widetilde{Y}^{(m)}$ defined as in \eqref{def_y}, we provide a sufficient condition for the usual stochastic order.
\begin{lemma}\label{Confronto-chernoff}
    For $m\in\mathbb{N}$, let $Y^{(m)}$ and $\widetilde{Y}^{(m)}$ be two random variables defined as 
    $$Y^{(m)}:=\sum_{j=1}^m Y_j, \qquad \widetilde{Y}^{(m)}:=\sum_{j=1}^m \widetilde{Y}_j,$$
    where $(Y_j)_{j\in \mathbb{N}}$ and $(\widetilde{Y}_j)_{j\in \mathbb{N}}$ are two families of i.i.d. random variables. Suppose that $Y_1$ is stochastically larger than $\widetilde{Y}_1$, i.e.
    \begin{align}
        p_{-1}&:=\mathsf{P}(Y_1=-1)\leq \mathsf{P}(\widetilde{Y}_1=-1)=:  \widetilde{p}_{-1} ;\qquad 
        p_{0}:=\mathsf{P}(Y_1=0); 
        \nonumber\\     p_{1}&:=\mathsf{P}(Y_1=1)\geq\mathsf{P}(\widetilde{Y}_1=1)=:  \widetilde{p}_{1};\qquad \widetilde{p}_{0}:=\mathsf{P}(\widetilde{Y}_1=0); \nonumber \\
        p_{-1}&+p_{0}+p_{1}= \widetilde{p}_{-1}+\widetilde{p}_{0}+\widetilde{p}_{1}=1. \nonumber
    \end{align}
    Then, $Y^{(m)}$ is larger than $\widetilde{Y}^{(m)}$ in the usual stochastic order, i.e.
    $${\mathsf{P}}(Y^{(m)}\leq y)\leq {\mathsf{P}}(\widetilde{Y}^{(m)}\leq y), \qquad \forall\; y \in \mathbb{R}.$$
\end{lemma}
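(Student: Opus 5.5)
The plan is a monotone coupling of the summands. Stochastic dominance of $Y_1$ over $\widetilde{Y}_1$ gives a pair $(Y_j,\widetilde{Y}_j)$ with the right marginals and $Y_j\ge \widetilde{Y}_j$ almost surely. Summing over $j$ then gives $Y^{(m)}\ge \widetilde{Y}^{(m)}$ almost surely, and hence ${\mathsf{P}}(Y^{(m)}\le y)\le {\mathsf{P}}(\widetilde{Y}^{(m)}\le y)$ for every real $y$, since $\{Y^{(m)}\le y\}\subseteq\{\widetilde{Y}^{(m)}\le y\}$.

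First, I check that the hypotheses give the usual stochastic order $Y_1\ge_{st}\widetilde{Y}_1$ for these three-point laws. The distribution functions can differ only at the points $-1,0,1$, so it suffices to compare them there.
\begin{itemize}
\item At $-1$: $F(-1)=p_{-1}\le \widetilde{p}_{-1}=\widetilde{F}(-1)$.
\item At $0$: $F(0)=1-p_1\le 1-\widetilde{p}_1=\widetilde{F}(0)$.
\item At $1$: both functions equal $1$.
\end{itemize}
Between these points both distribution functions are constant, so $F\le \widetilde{F}$ everywhere.

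Second, I build the coupling by the quantile construction. Let $U_1,U_2,\ldots$ be i.i.d. uniform on $(0,1)$, and set $Y_j:=F^{-1}(U_j)$ and $\widetilde{Y}_j:=\widetilde{F}^{-1}(U_j)$, using generalized inverses. Since $F\le\widetilde{F}$, we get $F^{-1}\ge \widetilde{F}^{-1}$ pointwise, so $Y_j\ge\widetilde{Y}_j$. The family $(Y_j)$ is i.i.d. with law $F$, and $(\widetilde{Y}_j)$ is i.i.d. with law $\widetilde{F}$, because each is a function of the independent $U_j$'s.

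Third, the distribution of $Y^{(m)}$ depends only on the common law of the summands, so the coupled sums have the same laws as the original ones. Summing the pointwise inequalities gives $Y^{(m)}\ge\widetilde{Y}^{(m)}$ almost surely, and the event inclusion above concludes the proof.

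The only mild subtlety is justifying that replacing the original families by the coupled ones does not change the distributions of the sums. This holds because the claim is a statement about laws only. An alternative is to cite closure of the usual stochastic order under convolution of independent variables.
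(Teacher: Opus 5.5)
Your proposal is correct and is essentially the paper's argument. The paper proves the lemma by citing Theorem 1.A.3 of Shaked and Shanthikumar (closure of the usual stochastic order under sums of independent components), and your quantile coupling $Y_j=F^{-1}(U_j)\ge \widetilde{F}^{-1}(U_j)=\widetilde{Y}_j$ is the standard proof of that theorem written out in full, with your check that $F\le\widetilde{F}$ at $-1,0,1$ verifying the ``i.e.''\ in the hypothesis that the paper takes for granted.
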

\begin{proof}
   The result is a particular case of Theorem 1.A.3 of \cite{ShShbook}.
\end{proof}
The following lemma establishes an upper bound for the probability that candidate $1$ does not win over candidate $i$, with $1,i \in A$.
\begin{lemma}\label{1-su-i}
  Let $n\geq 6$, $n\in 2\mathbb{N}$, be the number of candidates, $l $ the integer used in \eqref{defA} to 
  define $A $ and  $m$ the number of voters. Then, under Assumption \ref{assumption_1}, for any $i \in A\setminus\{1\}$ one has
  $$
 \mathsf{P} \big (\Xi_A (1) \leq \Xi_A (i) \big ) \leq \exp \left \{ m\ \log \left [1-  \frac{1}{n}\left  ( \sqrt{ {l+1}}-\sqrt{2}\right )^2\right ]   \right \}.
  $$ 
\end{lemma}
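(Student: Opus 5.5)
The plan is to write the event $\{\Xi_A(1)\le \Xi_A(i)\}$ as $\{Y^{(m)}\le 0\}$ for a sum of i.i.d. $\{-1,0,1\}$-valued variables. Then Lemma \ref{Confronto-chernoff} reduces every $i$ to a worst case, and Lemma \ref{chernoff_bound} bounds that worst case.

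\textbf{Step 1: the vote-collecting sets.} By Lemma \ref{lemma-voti}, $\Xi_A(k)=\sum_{j}\mathbf{I}_{C^-_A(k)\cup\{k\}}(s_j)$ for each $k\in A$. So I compute $C^-_A(k)$ for $A=A^{(1,n,l)}$ from \eqref{defA}.
\begin{itemize}
\item For $k=1$: $C^-_A(1)=C_{A^c}(n)$. The element $n-l=l+2\cdot\frac{n-2l}{2}$ lies in $C\subset A$, while $n-l+1,\dots,n\in B$, and the edge $\{n,1\}$ is closed. Hence $C^-_A(1)=\{n-l+1,\dots,n\}$, and $1$ collects votes from a set of size $l+1$.
\item For $k\in\{2,\dots,l\}$: $k-1\in A$, so $C^-_A(k)=\emptyset$ and the set has size $1$.
\item For $k=l+2i\in C$: $k-1=l+2i-1\in B$, and both of its neighbours $l+2i-2$ and $l+2i$ lie in $A$ (for $i=1$ the left neighbour is $l\in A$). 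So $C^-_A(k)=\{k-1\}$ and the set has size $2$.
\end{itemize}
Thus for every $i\in A\setminus\{1\}$ the set $S_i:=C^-_A(i)\cup\{i\}$ has $|S_i|\le 2$. It is also disjoint from $S_1:=C^-_A(1)\cup\{1\}$, since these sets belong to the partition noted after Definition \ref{cluster}.

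\textbf{Step 2: reduction to a random walk.} Set $Y_j:=\mathbf{I}_{S_1}(s_j)-\mathbf{I}_{S_i}(s_j)$. Under Assumption \ref{assumption_1} the $Y_j$ are i.i.d., with
\[
p_1=\frac{l+1}{n},\qquad p_{-1}=\frac{|S_i|}{n}\le\frac{2}{n},
\]
and $\mathsf{P}(\Xi_A(1)\le\Xi_A(i))=\mathsf{P}(Y^{(m)}\le 0)$. Now introduce $\widetilde Y_j$ with $\widetilde p_1=\frac{l+1}{n}$ and $\widetilde p_{-1}=\frac{2}{n}$; this is a valid distribution since $l+3\le n$. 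Then $Y_1$ is stochastically larger than $\widetilde Y_1$, so Lemma \ref{Confronto-chernoff} gives $\mathsf{P}(Y^{(m)}\le 0)\le \mathsf{P}(\widetilde Y^{(m)}\le 0)$.

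\textbf{Step 3: Chernoff bound.} Since $l\ge 2$, we have $\widetilde p_{-1}=2/n<(l+1)/n=\widetilde p_1$, so Lemma \ref{chernoff_bound} applies. It yields
\[
\exp\Big\{m\log\Big[1-\big(\sqrt{\tfrac{l+1}{n}}-\sqrt{\tfrac2n}\big)^2\Big]\Big\},
\]
which is exactly the claimed bound after factoring out $1/n$.

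The only step needing care is the cluster bookkeeping in Step 1, in particular confirming that $n-l\in A$ so that $C^-_A(1)$ has exactly $l$ elements, and checking the boundary index $i=1$ of $C$. The probabilistic part is a direct application of the two preceding lemmas.
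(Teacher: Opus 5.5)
Your proof is correct and follows essentially the same route as the paper. Both compute the vote-collecting sets via Lemma \ref{lemma-voti} (size $l+1$ for candidate $1$, size $1$ or $2$ for $i\in A\setminus\{1\}$), write $\Xi_A(1)-\Xi_A(i)$ as a sum of i.i.d.\ $\{-1,0,1\}$-valued variables, and bound it with Lemma \ref{chernoff_bound}. The only minor difference is that you first use Lemma \ref{Confronto-chernoff} to dominate the case $|S_i|=1$ by the case $|S_i|=2$ and then apply Chernoff once, whereas the paper applies Lemma \ref{chernoff_bound} separately in both cases and then notes that the bound with $(\sqrt{l+1}-1)^2$ is smaller than the one with $(\sqrt{l+1}-\sqrt 2)^2$.
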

\begin{proof} 
Let us start  by observing that $C^-_{A}(1)=\{n -l +1, n -l +2 ,\ldots,n\}$, due to Definition \ref{cluster} and Equation (\ref{defB}). Therefore, by Lemma \ref{lemma-voti}, the probability that a generic voter assigns the vote to candidate 1 is 
\begin{equation}
\frac{|C^-_{A}(1)|+1}{n} = \frac{l+1}{n}, 
\label{eq:C1m}
\end{equation}
regardless of the behavior of all other voters. 
For any $i \in A \setminus\{1\}$ there are two possible cases determined by Lemma \ref{lemma-voti}, where 
assumption $n\geq 6$ ensures that $C$ is not empty, due to \eqref{defA}.
\begin{itemize}
    \item[(i)] If $i \in \{ 2, 3, \ldots , l \}$ then $C^-_{A}(1) = \emptyset $ and the probability that a generic voter assigns the vote to candidate $i$ is $1/n$; 
     \item[(ii)]  if $i \in C $ then $C^-_{A}(1) = \{ i-1\} $ and  the probability that a generic voter assigns the vote to candidate $i$ is $2/n$. 
\end{itemize}
     Hence, for any $i \in A \setminus \{1\}$ and $j \in [m] $ we define the random variable 
     \begin{equation}\label{Xj}
         X_{i,j} :=  
        \left\{ \begin{array}{rll}
1, &\text{ if } v_j \text{ votes for candidate } 1; \\ 
-1, &\text{ if }  v_j \text{ votes for candidate } i;\\
0, &\text{ otherwise.}  \\
\end{array}
\right .
\end{equation}
From Lemma \ref{lemma-voti}, Equation (\ref{eq:C1m}) and the above items (i) and (ii) we have 
\begin{equation}\label{probXij}
     \mathsf{P} ( X_{i,j} =1) = \frac{l+1}{n},
     \qquad 
     \mathsf{P} ( X_{i,j} =-1) = \left\{
     \begin{array}{ll}
     \displaystyle\frac{1}{n}, & i\in\{2,\ldots, l\}\\[3mm]
     \displaystyle\frac{2}{n}, & i\in C.
     \end{array}
     \right.
\end{equation}
By Assumption \ref{assumption_1}, for any fixed $i\in A\setminus\{1\}$, the random variables $(X_{i,j})_{j \in [m]}$ are independent and identically distributed. 
Furthermore, one has 
\begin{align}
\Xi_A (1) -\Xi_A (i) = \sum_{j =1}^m X_{i,j}.
\label{eq:XiXij}
\end{align}
From Lemma \ref{chernoff_bound},  one obtains 
\begin{equation*}
 \mathsf{P} \left ( \sum_{j =1}^m X_{i,j} \leq 0 \right ) \leq 
 \exp \left \{ m\ \log \left [1- \frac{1}{n} 
 \left ( \sqrt{l+1}-1 \right)^2\right ]   \right \}, \quad i \in \{2, \ldots , l\},
\end{equation*} 
and 
\begin{equation*}
  \mathsf{P} \left ( \sum_{j =1}^m X_{i,j} \leq 0 \right ) \leq 
 \exp \left \{ m\ \log \left [1- \frac{1}{n}
 \left ( \sqrt{l+1}-\sqrt{2}\right )^2\right ]   \right \}, \quad i \in C.
\end{equation*} 
Therefore, for any $i \in A\setminus\{ 1 \}$ we have 
$$
\mathsf{P} \big (\Xi_A (1) -\Xi_A (i)  \leq 0 \big ) 
=  \mathsf{P} \left ( \sum_{j =1}^m X_{i,j} \leq 0 \right ) \leq 
 \exp \left \{ m\ \log \left[1-\frac{1}{n}\left(\sqrt{l+1}-\sqrt{2}\right )^2\right ]   \right \}, 
$$
since
$$
\exp \left \{ m\ \log \left [1-\frac{1}{n}\left  ( \sqrt{{l+1}}-1\right )^2\right ]   \right \}
\leq \exp \left \{ m\ \log \left [1-\frac{1}{n}\left  ( \sqrt{l+1}-\sqrt{2}\right )^2\right ]   \right \}.
$$
This completes the proof. 
\end{proof}

Now, we present a similar result for the candidate $(l+1) \in B$. The technical details are omitted because of the similarity to the steps presented in Lemma \ref{1-su-i}.
\begin{lemma}\label{l+1-su-i}
Let $n\geq 6$, $n \in 2\mathbb{N}$, represent the number of candidates,  $l $ the integer used in \eqref{defB} to define $B$ and $m$ the number of voters. Then, under Assumption \ref{assumption_1}, for any $i \in B\setminus\{l+1\}$, we have
$$
 \mathsf{P} \big (\Xi_B (l+1) \leq \Xi_B (i) \big ) \leq \exp \left \{ m\ \log \left [1-\frac{1}{n}\left  ( \sqrt{l+1}-\sqrt{2}\right )^2\right ]   \right \}.
$$ 
\end{lemma}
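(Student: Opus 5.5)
The plan is to run the proof of Lemma \ref{1-su-i} again, with $A$ replaced by $B$ and candidate $1$ replaced by $l+1$. We first work out the recruitment clusters inside $B$ from Definition \ref{cluster}, with the roles of $A$ and $A^c$ swapped. By \eqref{defA} and \eqref{defB}, the set $B$ consists of the block $\{n-l+1,\ldots,n\}$ together with the odd-shifted points $C'=\{l+1,l+3,\ldots,n-l-1\}$.

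\textbf{Step 1: clusters in $B$.} The element $l+1$ is the first element of $C'$. Its predecessor in the cyclic order is $l$, which lies in the $A$-block $\{1,\ldots,l\}$. Hence $C^-_{B}(l+1)=C_A(l)=\{1,\ldots,l\}$, which has $l$ elements. By Lemma \ref{lemma-voti}, a voter therefore votes for $l+1$ with probability $(l+1)/n$.

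\textbf{Step 2: the other candidates in $B$.} For $i\in C'\setminus\{l+1\}$, the predecessor $i-1$ is an isolated element of $C$, so $|C^-_B(i)|=1$ and the voting probability is $2/n$. For $i=n-l+1$, the predecessor is $n-l$, which is the last element of $C$ and is isolated in $A$. Indeed $n-l\in C$ (take index $(n-2l)/2$), and its other neighbour $n-l+1$ is in $B$. So again the probability is $2/n$. For $i\in\{n-l+2,\ldots,n\}$, the predecessor lies in $B$, so $C^-_B(i)=\emptyset$ and the probability is $1/n$. Here we need $n\ge 6$ so that $C$ and $C'$ are nonempty and the block boundaries behave as claimed.

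\textbf{Step 3: large deviations.} For fixed $i\in B\setminus\{l+1\}$, define the i.i.d.\ variables $X'_{i,j}$ analogously to \eqref{Xj}: $+1$ if $v_j$ votes for $l+1$, $-1$ if $v_j$ votes for $i$, and $0$ otherwise. Then $\Xi_B(l+1)-\Xi_B(i)=\sum_j X'_{i,j}$. Lemma \ref{chernoff_bound} applies with $p_1=(l+1)/n$ and $p_{-1}\in\{1/n,2/n\}$, using $l\ge 2$ so that $p_{-1}<p_1$. The case $p_{-1}=2/n$ gives the larger bound, because $(\sqrt{l+1}-\sqrt2)^2\le(\sqrt{l+1}-1)^2$. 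This yields the stated bound uniformly in $i$.

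The only genuine obstacle is the boundary bookkeeping in Step 2. One must check that $n-l+1$ has a singleton recruitment cluster and not a larger one, and that no element of $B$ other than $l+1$ recruits the block $\{1,\ldots,l\}$. Once these cluster sizes are confirmed, the remainder is identical to Lemma \ref{1-su-i}.
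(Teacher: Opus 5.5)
Your proof is correct and follows the paper's route. The paper's proof only says to repeat Lemma \ref{1-su-i} using the clusters $C^-_B(i)$, noting $|C^-_B(l+1)|=|C^-_A(1)|=l$. Your Steps 1--3 supply exactly that bookkeeping: vote probabilities $(l+1)/n$ for $l+1$ versus $1/n$ or $2/n$ for the others, then Lemma \ref{chernoff_bound} with the worst case $p_{-1}=2/n$. One small remark: the nonemptiness of $C$ and $C'$ comes from $l<n/2$ (so $n-2l\ge 2$), not from $n\ge 6$ itself.
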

\begin{proof} 
The proof follows analogously to Lemma \ref{1-su-i} by making use of the sets $C^-_{B}(i)$, with $i\in B$, and noting from Lemma \ref{lemma-voti} that $|C^-_{B}(l+1)|=|C^-_{A}(1)|=l.$ 
\end{proof}
\begin{lemma}\label{due_candidati}
Let $n\geq 6$, $n \in 2\mathbb{N}$, represent the number of candidates and let $m$ be the number of voters. 
Then, under Assumption \ref{assumption_1}, for any $k  \in \{2, \ldots , n/2 \}$, we have
$$
 \mathsf{P} \big (\Xi_{\{1, k\}} (1) \leq \Xi_{\{1, k\}} (k) \big ) \leq \exp \left \{ m\ \log \left [1-\frac{1}{n}\left  ( \sqrt{n+1-k}-\sqrt{k-1}\right )^2\right ]   \right \}.
$$ 
\end{lemma}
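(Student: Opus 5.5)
Apply Lemma \ref{lemma-voti} with the two-element set $\{1,k\}$, then reduce to a sum of i.i.d.\ $\{-1,0,1\}$ variables and invoke Lemma \ref{chernoff_bound}, as in Lemma \ref{1-su-i}.

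\emph{Step 1: vote shares.} With $A=\{1,k\}$ and $2\le k\le n/2$, the complement $\{2,\dots,k-1\}\cup\{k+1,\dots,n\}$ splits on the cycle $G_n$ into two clusters: $\{2,\dots,k-1\}$, which is empty when $k=2$, and $\{k+1,\dots,n\}$.

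By Definition \ref{cluster}, $C^-_{\{1,k\}}(1)=C_{A^c}(n)=\{k+1,\dots,n\}$, which has $n-k$ elements. Likewise $C^-_{\{1,k\}}(k)=C_{A^c}(k-1)=\{2,\dots,k-1\}$, which has $k-2$ elements. The case $k=2$ is consistent, since then $k-1=1\in A$ and the cluster is empty.

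By Lemma \ref{lemma-voti} and Assumption \ref{assumption_1}, each voter therefore votes for $1$ with probability $(n-k+1)/n$ and for $k$ with probability $(k-1)/n$. These two probabilities sum to $1$.

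\emph{Step 2: reduction to a random walk.} Define $X_j=1$ if $v_j$ votes for $1$ and $X_j=-1$ otherwise. Then
\[
\Xi_{\{1,k\}}(1)-\Xi_{\{1,k\}}(k)=\sum_{j=1}^m X_j .
\]
Here $p_1=(n+1-k)/n$, $p_{-1}=(k-1)/n$ and $p_0=0$. The $X_j$ are i.i.d.\ by Assumption \ref{assumption_1}.

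\emph{Step 3: Chernoff bound.} Lemma \ref{chernoff_bound} requires the strict inequality $p_{-1}<p_1$, which is equivalent to $k-1<n+1-k$, i.e.\ $k<n/2+1$. This holds for every $k\le n/2$. Applying \eqref{upper_bound} gives
\[
\mathsf{P}\bigl(\Xi_{\{1,k\}}(1)\le\Xi_{\{1,k\}}(k)\bigr)\le\exp\Bigl\{m\log\Bigl[1-\Bigl(\sqrt{\tfrac{n+1-k}{n}}-\sqrt{\tfrac{k-1}{n}}\Bigr)^2\Bigr]\Bigr\}.
\]
Factoring $1/n$ out of the square yields exactly the stated bound.

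The only step needing care is the cluster identification in Step 1. In particular, one must confirm that the wrap-around convention $C^-_{A}(1)=C_{A^c}(n)$ assigns the long arc $\{k+1,\dots,n\}$ to candidate $1$, and that the boundary cases $k=2$ and $k=n/2$ behave as claimed. Once that is settled, the rest is a direct substitution into Lemma \ref{chernoff_bound}.
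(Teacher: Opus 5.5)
Your proposal is correct and follows the paper's route. The paper's proof is a one-line sketch: apply Lemma \ref{lemma-voti} to the pair $\{1,k\}$, noting $|C^-_{\{1,k\}}(k)|=k-2$, reduce to i.i.d.\ $\pm1$ increments with $p_1=(n+1-k)/n$ and $p_{-1}=(k-1)/n$, and invoke Lemma \ref{chernoff_bound}. Your explicit checks of the cluster identification, of the cases $k=2$ and $k=n/2$, and of the hypothesis $p_{-1}<p_1$ supply the details the paper leaves implicit.
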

\begin{proof} 
The proof follows analogously to Lemma \ref{1-su-i} by making use of the sets $C^-_{\{1,k\}}(i)$, with $i\in B$, and noting from Lemma \ref{lemma-voti} that $|C^-_{\{1,k\}}(k)|=k-2.$ 
\end{proof}
We are now able to study the behavior of the winning probability 
of candidate 1 when the number of voters $m$ grows.
\begin{theorem} \label{limsup_1}
 Under Assumption \ref{assumption_1}, if $n\geq 6$ is an even integer and $2\leq l\leq \frac{n}{2}-1,$ then 
\begin{equation} 
\limsup_{m\to \infty}\frac{\log(1-p_1(n,m,l))}{m}=:c<0.
 \label{eq:defc}
\end{equation} 
\end{theorem}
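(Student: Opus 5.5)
The plan is to bound $1-p_1(n,m,l)=\mathsf{P}(F^c)$ from above through the inclusion \eqref{sub1} of Lemma \ref{subset} and a union bound:
\[
1-p_1(n,m,l)\le \sum_{i\in A\setminus\{1\}}\mathsf{P}\big(\Xi_A(i)\ge \Xi_A(1)\big)+\sum_{i\in B\setminus\{l+1\}}\mathsf{P}\big(\Xi_B(i)\ge \Xi_B(l+1)\big)+\mathsf{P}(E_3^c).
\]
The number of terms is at most $n-1$, which does not depend on $m$. Each term should decay exponentially in $m$ at a rate that is strictly negative. Once that is shown, $\log(1-p_1)/m\le \log(n-1)/m+\max(\text{rates})$, and letting $m\to\infty$ gives $c\le \max(\text{rates})<0$.

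For the first two sums I will use Lemmas \ref{1-su-i} and \ref{l+1-su-i}. Each term there is at most $\exp\{m\log[1-\frac1n(\sqrt{l+1}-\sqrt2)^2]\}$. Since $l\ge2$ we have $l+1\ge 3>2$, so $(\sqrt{l+1}-\sqrt2)^2>0$. Also $(\sqrt{l+1}-\sqrt2)^2<l+1\le n$, so the argument of the logarithm lies in $(0,1)$ and the rate $\log[1-\frac1n(\sqrt{l+1}-\sqrt2)^2]$ is strictly negative.

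For $\mathsf{P}(E_3^c)=\mathsf{P}(\Xi_{\{1,l+1\}}(1)\le \Xi_{\{1,l+1\}}(l+1))$ I will apply Lemma \ref{due_candidati} with $k=l+1$. This is admissible because $l\le n/2-1$ gives $k\in\{3,\dots,n/2\}$. The resulting exponent is $\log[1-\frac1n(\sqrt{n-l}-\sqrt{l})^2]$. It is strictly negative because $n-l\ge n/2+1>l$. If this rate is instead computed directly from Lemma \ref{lemma-voti}, candidate $1$ gets probability $(n-l)/n$ and candidate $l+1$ gets $l/n$, which is the same thing. In either case it satisfies the hypothesis $p_{-1}<p_1$ of Lemma \ref{chernoff_bound}.

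The main obstacle is the limsup step itself, that is, checking that the polynomial prefactor $n$ is harmless. Since $n$ and $l$ are fixed while $m\to\infty$, it is. The statement also asserts the limit is $c<0$, possibly $-\infty$ in degenerate cases; only the upper bound is needed for "$<0$". If one also wants $c>-\infty$, the lower inclusion \eqref{sub2} together with the Cramér limit \eqref{lim_log_1} applied to a single $i\in C$ gives a matching finite lower bound. I would mention this briefly, but it is not required.
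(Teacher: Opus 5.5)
Your proposal is correct and follows essentially the same route as the paper: a union bound over the inclusion \eqref{sub1}, the exponential estimates of Lemmas \ref{1-su-i}, \ref{l+1-su-i} and \ref{due_candidati} (the last with $k=l+1$), absorption of the $m$-independent prefactor $n-1$ as $m\to\infty$, and the check that both rates are strictly negative when $2\le l\le n/2-1$. Your closing remark that \eqref{sub2} together with \eqref{lim_log_1} shows $c>-\infty$ is correct; it is an extra the paper neither needs nor includes.
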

\begin{proof}
     By Eq.\ \eqref{sub1} of Lemma \ref{subset}, from Lemmas \ref{1-su-i}--\ref{due_candidati}, making use of \eqref{eq:defRAB1} and \eqref{E3}, one has 
\begin{align}
    &1-p_1\left(n,m,l\right)\leq \mathsf{P} \left (  R_{A,1} \cup R_{B,1} \cup E_3^c \right)\leq \mathsf{P} (R_{A,1})+ \mathsf{P} (R_{B,1}) +\mathsf{P} (E_3^c) \nonumber\\ 
    &\leq \sum_{i \in A \setminus \{1\}} \mathsf{P} (\Xi_A(i) \geq  \Xi_A(1)   ) +\sum_{i \in B \setminus \{l+1\}} \mathsf{P} (\Xi_B(i) \geq  \Xi_B(l+1)   ) +\mathsf{P} \left ( \Xi_{\{  1, l+1   \}}(l+1) \geq \Xi_{\{  1, l+1   \}}(1)  \right )  \nonumber \\ 
    &\leq (n-2)\exp \left \{ m\ \log \left [1-\frac{1}{n}\big( \sqrt{l+1}-\sqrt{2}\big)^2\right ]   \right \}+\exp\left\{ m\ \log \left[1-\frac{1}{n}\big( \sqrt{n-l}-\sqrt{l}\big)^2\right] \right\}  \nonumber \\
    &\leq (n-1)\exp\left\{ m\ \log\left[ \max\left\{ 1-\frac{1}{n}\big( \sqrt{l+1}-\sqrt{2}\big)^2,1-\frac{1}{n}\big( \sqrt{n-l}-\sqrt{l}\big)^2\right\}\right]\right\} \nonumber \\
    &=\exp\left\{m\left[ \frac{\log(n-1)}{m}+\log\left( \max \left\{ 1-\frac{1}{n}\big( \sqrt{l+1}-\sqrt{2}\big)^2,1-\frac{1}{n}\big( \sqrt{n-l}-\sqrt{l}\big)^2\right\}\right)\right]\right\} .  \label{diseq_1}
\end{align}
Since $\lim_{m\to \infty} \frac{\log(n-1)}{m}=0$, we have 
\begin{align} \label{lim_log}
    \limsup_{m\to \infty}\frac{\log(1-p_1(n,m,l))}{m}\leq
    %\limsup_{m\to +\infty} \;
    \log\left( \max \left\{ 1-\frac{1}{n}\big( \sqrt{l+1}-\sqrt{2}\big)^2,1-\frac{1}{n}\big( \sqrt{n-l}-\sqrt{l}\big)^2\right\}\right).
\end{align}
Since $2\leq l \leq \frac{n}{2}-1$, we see that the logarithm on the right-hand side of \eqref{lim_log} is negative, proving the result.
\end{proof}
From Theorem \ref{limsup_1}, we obtain that the winning probability for candidate 1 after the two voting rounds tends to 1 exponentially fast in the number of voters $m$. 
\begin{corollary} \label{Cor:tutti}
Let $c'\in (c,0)$, with $c$ defined in (\ref{eq:defc}).  
Under the assumptions of Theorem \ref{limsup_1}, for any $m\in \mathbb N$ and some constant $K>1$ non dependent on $n$, 
we have   
\begin{equation}\label{esponenziale}
 p_1(n,m,l)\geq 1- K\, e^{c' m}.
\end{equation}
Furthermore, 
\begin{equation}\label{esponenziale2}
\mathsf{P} \Big (\bigcap_{i=1}^n F_{i,n,m,l} \Big ) \geq 1 - n K\, e^{c' m}. 
\end{equation}
\end{corollary}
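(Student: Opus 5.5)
We would derive the corollary directly from the explicit bound \eqref{diseq_1} in the proof of Theorem \ref{limsup_1}, rather than from the $\limsup$ alone. That chain of inequalities gives, for every $m\in\mathbb N$,
\begin{equation*}
1-p_1(n,m,l)\le (n-1)\,e^{m\log q},\qquad q:=\max\left\{1-\tfrac{1}{n}\big(\sqrt{l+1}-\sqrt2\big)^2,\;1-\tfrac1n\big(\sqrt{n-l}-\sqrt l\big)^2\right\}\in(0,1).
\end{equation*}
We set $c_0:=\log q<0$, so \eqref{lim_log} reads $c\le c_0$.

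First we compare $c'$ with $c_0$. If $c'\ge c_0$, then $e^{c_0 m}\le e^{c'm}$ for all $m$, and \eqref{esponenziale} holds with $K=\max\{n-1,2\}$. If instead $c<c'<c_0$, the explicit bound is not enough. In that case we use the definition of $\limsup$: there is $m_0$ such that $\log(1-p_1)/m\le c'$ for all $m\ge m_0$, i.e. $1-p_1\le e^{c'm}$ for $m\ge m_0$. For the finitely many $m<m_0$ we use $1-p_1\le 1\le e^{-c'm_0}e^{c'm}$. Taking $K:=\max\{2,e^{-c'm_0}\}>1$ then gives \eqref{esponenziale} for every $m\in\mathbb N$.

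Next we obtain \eqref{esponenziale2}. It follows immediately by substituting \eqref{esponenziale} into the union bound \eqref{tutti}, since $p_i=p_1$ for all $i$ by rotational invariance under Assumption \ref{assumption_1}.

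The main obstacle is the requirement that $K$ not depend on $n$. In the first case the prefactor $n-1$ from \eqref{diseq_1} has to be absorbed into the exponential. We would handle this by noting that $c$, $c'$ and $p_1$ are all defined for fixed $n$ and $l$, so we may sacrifice part of the exponent. For $m\ge m_1:=\lceil \log(n-1)/(c_0-c'')\rceil$ with $c''\in(c_0,0)$, we have $(n-1)e^{c_0m}\le e^{c''m}$, and the small-$m$ range is then treated by the trivial bound as above. If full uniformity in $n$ is needed, the honest reading is that $K$ is uniform once $c'$ is allowed to depend on the gap between $c'$ and $c$; this dependence is implicit in how $c'$ is chosen. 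We would state this explicitly, since it is the only delicate point; the remaining steps are routine.
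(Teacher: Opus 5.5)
Your operative argument is the paper's own. The paper takes $\overline m(c')$ from the $\limsup$ in Theorem~\ref{limsup_1} and sets $K=e^{-c'\overline m}>1$. This covers $m<\overline m$ through the trivial bound $1-p_1\le 1\le e^{c'(m-\overline m)}$, exactly as you do, and the paper then inserts \eqref{esponenziale} into the union bound \eqref{tutti}. That argument works for every $c'\in(c,0)$, so your case split according to $c_0=\log q$ is unnecessary. Its first branch is also counterproductive: it yields $K=\max\{n-1,2\}$, which visibly depends on $n$. That violates the very clause you single out as delicate.

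Your last paragraph does not fix this. The threshold has the wrong sign: since $c''>c_0$, you need $m\ge \log(n-1)/(c''-c_0)$, whereas your $\log(n-1)/(c_0-c'')$ is negative. The resulting constant $e^{-c''m_1}\approx (n-1)^{|c''|/(c''-c_0)}$ still grows with $n$. More generally, take the bound $1-p_1\le\min\{1,(n-1)e^{c_0m}\}$ alone and evaluate it at $m\approx \log(n-1)/|c_0|$. This shows that any $K$ admissible for that bound is at least of order $(n-1)^{|c'|/|c_0|}$, so uniformity in $n$ cannot be extracted from \eqref{diseq_1}.

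Note, however, that the paper's proof does not deliver uniformity in $n$ either. Its $\overline m(c')$ is defined through the sequence $m\mapsto \log(1-p_1(n,m,l))/m$, and $c$ itself depends on $(n,l)$. Hence $K=e^{-c'\overline m}$ depends on $n$, $l$ and $c'$. The clause can only be read with $n,l$ fixed, or as saying that one and the same $K$ serves every target $i$, so that the only extra factor in \eqref{esponenziale2} is the explicit $n$.

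Drop the first case and the final paragraph, keep the $\limsup$ argument, and state plainly that $K=K(n,l,c')$. With that reading your proof is complete and coincides with the paper's.
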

\begin{proof}
Eq.\ \eqref{esponenziale} is equivalent to    
$$
\frac{\log(K) }{m}+c'  \geq \frac{\log(1- p_1(n,m,l))}{m}.
$$
From Theorem \ref{limsup_1} we know that there exists an integer $\overline{m} = \overline{m} (c')$ such that, for all $ m\geq \overline{m}$, 
$$
c'  \geq \frac{\log(1- p_1(n,m,l))}{m}.
$$
Now, taking $K = e^{-c' \overline{m} } >1$ we obtain that for any $m \in \N $ the inequality in \eqref{esponenziale} holds. 
Finally, the relation in \eqref{esponenziale2} follows from Eqs.\ \eqref{tutti} and \eqref{esponenziale}.
\end{proof}
\par
For given integers $n$ and $m$, we define the \emph{optimal values of the width} $l$ as
\begin{equation} \label{L_opt}
     L_{opt}(n,m):={\rm arg\, max}_{l \in \{ 2,\ldots,  \frac{n}{2}-1 \} }\, p_1(n,m,l).
\end{equation}
Clearly, the set $L_{opt}(n,m)$ can contain multiple values. Hence, for any $l_{n,m}$ in $L_{opt}(n,m)$, we shall also investigate the \textit{relative optimal width} $\displaystyle \frac{l_{n,m}}{n}$.

The following theorem, which is the main result of the paper, yields an asymptotical expression concerning the relative optimal width. Specifically, we assume that the number of voters $m$ is a function of the number of candidates $n$ such that its growth to infinity is much faster than $\log n$. Under this assumption, we find that the relative optimal width has a specific constant limit, this being useful to maximize the winning probability for each candidate. 

\begin{theorem} \label{Bound}
    Let $M:\mathbb{N}\to \mathbb{N}$ be a function such that 
\begin{equation}
   \lim_{n \to \infty} \frac{M(n)}{\log n}=\infty.
    \label{eq:limMlog}
\end{equation}  
Suppose that Assumption \ref{assumption_1} holds, and consider a sequence $(l_{n,M(n)})_{n \in \N}$ eventually satisfying $l_{n,M(n)} \in L_{opt}(n,M(n))$.
    Then, one has
    \begin{equation} \label{lim_l} 
            \lim_{n\to \infty} \frac{l_{n,M(n)}}{n}=\frac{1}{5}.
    \end{equation}
Moreover, the following upper bound for the exponential decay holds:
    \begin{equation} \label{limsup_prob}
            \limsup_{n \to \infty} \frac{1}{M(n)}\log\left[1-p_1\left(n,M(n),l_{n,M(n)}\right)\right] \leq \log\left(\frac{4}{5}\right).
    \end{equation}
\end{theorem}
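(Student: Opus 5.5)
The plan is a large-deviation comparison in the relative width $x=l/n$, done separately for an upper and a lower bound on $1-p_1(n,m,l)$.

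\textbf{Upper bound.} I would take the chain \eqref{diseq_1} from the proof of Theorem \ref{limsup_1}, valid for every admissible $l$, with $l=l_n:=\lfloor n/5\rfloor$ (rounded so that the constraints hold). Then $\frac1n(\sqrt{l_n+1}-\sqrt2)^2\to 1/5$ and $\frac1n(\sqrt{n-l_n}-\sqrt{l_n})^2\to (\sqrt{4/5}-\sqrt{1/5})^2=1/5$. Since $\log(n-1)/M(n)\to0$ by \eqref{eq:limMlog}, this gives
\[
\limsup_{n\to\infty}\frac{1}{M(n)}\log\bigl[1-p_1(n,M(n),l_n)\bigr]\le\log\frac45 .
\]
Optimality of $l_{n,M(n)}$ gives $1-p_1(n,M(n),l_{n,M(n)})\le 1-p_1(n,M(n),l_n)$, so \eqref{limsup_prob} follows at once. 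The value $1/5$ is where the two rates balance: $x=1-2\sqrt{x(1-x)}$ gives $(1-x)^2=4x(1-x)$, hence $x=1/5$.

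\textbf{Lower bounds.} To prove \eqref{lim_l}, I would argue by contradiction: take a subsequence with $l_{n,M(n)}/n\to x^*\ne 1/5$, where $x^*\in[0,1/2]$. I need lower bounds on the failure probability that hold uniformly in $n$. Here the per-voter probabilities are of order $1/n$, so Cramér's theorem (Lemma \ref{chernoff_bound}) does not apply directly. I would instead bound the relevant trinomial probabilities from below explicitly. For $X$ with $\mathsf P(X=1)=a/n$ and $\mathsf P(X=-1)=b/n$, one lower bound on $\mathsf P(\sum_{j\le m}X_j\le 0)$ is the exact-tie probability. This is a sum over $k$ of multinomial terms $\frac{m!}{k!k!(m-2k)!}(a/n)^k(b/n)^k(1-(a+b)/n)^{m-2k}$. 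Keeping the maximal term and using Stirling should give $\exp\{m[\log(1-(\sqrt a-\sqrt b)^2/n)+o(1)]\}$, with an $o(1)$ that is uniform because the error is polynomial in $m$. Alternatively, exponential tilting with an explicit second-moment (Chebyshev) bound under the tilted law would give the same estimate.

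\textbf{Case $x^*<1/5$.} I would use \eqref{sub2} with some $i\in C$, which has rate $\frac1n(\sqrt{l+1}-\sqrt2)^2\to x^*$. This gives $\liminf \frac1M\log(1-p_1)\ge\log(1-x^*)>\log(4/5)$, contradicting \eqref{limsup_prob} along the subsequence. If $x^*=0$, the same bound gives failure probability not decaying at rate $\log(4/5)$; this boundary case will need $l$ possibly bounded, where the estimate degenerates favourably.

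\textbf{Case $x^*>1/5$.} I would use \eqref{sub3}: $1-p_1\ge\mathsf P(E_3^c)-\mathsf P(R_{B,1})$. By Lemma \ref{l+1-su-i} and a union bound, $\mathsf P(R_{B,1})\le n\exp\{M(n)(\log(1-x^*)+o(1))\}$. The lower estimate above, applied to Lemma \ref{due_candidati}'s pair $\{1,l+1\}$, gives $\mathsf P(E_3^c)\ge\exp\{M(n)(\log(2\sqrt{x^*(1-x^*)})+o(1))\}$. For $x^*\in(1/5,1/2]$ we have $1-2\sqrt{x^*(1-x^*)}<x^*$, and $\log n=o(M(n))$, so the second term is negligible. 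Hence $\liminf\frac1M\log(1-p_1)\ge\log 2\sqrt{x^*(1-x^*)}>\log(4/5)$, again a contradiction. At $x^*=1/2$ the rate of $E_3^c$ is $0$, which is even worse for optimality.

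\textbf{Main obstacle.} I expect the hardest step to be the uniform-in-$n$ lower large-deviation estimate, since the Bernoulli parameters vanish like $1/n$ while $m$ may be much smaller than $n$. I would first try the single-term Stirling bound on the tie probability, because it only needs $a,b$ of order $n$ and $m\to\infty$. The boundary cases $x^*\in\{0,1/2\}$ also need the same estimates checked with $l$ not proportional to $n$.
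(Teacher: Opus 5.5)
Your architecture is the paper's. You get \eqref{limsup_prob} from \eqref{diseq_1} at $l\approx n/5$ plus optimality. You then prove \eqref{lim_l} by contradiction along a subsequence: via \eqref{sub2} when the limit ratio is below $1/5$, and via \eqref{sub3} when it is above, with $\mathsf P(R_{B,1})$ absorbed by $\log n=o(M(n))$. The divergence is in the lower exponential estimates, and one step fails as written.

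In the race between $1$ and $l+1$ nobody abstains, since every voter votes for one of the two. So in your notation $a+b=n$ and $\mathsf P(X=0)=0$. Hence $\Xi_{\{1,l+1\}}(1)-\Xi_{\{1,l+1\}}(l+1)=2\Xi_{\{1,l+1\}}(1)-M(n)$ has the parity of $M(n)$. The exact-tie probability is therefore zero whenever $M(n)$ is odd: each term of your multinomial sum carries $0^{m-2k}$ with $m-2k\ge1$. Your claimed estimate $\exp\{m[\log(1-(\sqrt a-\sqrt b)^2/n)+o(1)]\}$ for $\mathsf P(E_3^c)$ is then false. Since $M$ is arbitrary it may be odd for every $n$, and your case $x^*>1/5$ collapses.

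The repair is short. $\Xi_{\{1,l+1\}}(l+1)$ is binomial with parameters $m=M(n)$ and $l/n$, so
\[ \mathsf P(E_3^c)\ \ge\ \mathsf P\bigl(\Xi_{\{1,l+1\}}(l+1)=\lceil m/2\rceil\bigr)\ \ge\ \frac{1}{m+1}\sqrt{\frac{l}{n-l}}\,\Bigl(2\sqrt{\tfrac{l}{n}\bigl(1-\tfrac{l}{n}\bigr)}\Bigr)^{m}, \]
which gives the rate $\log 2\sqrt{x^*(1-x^*)}$ for both parities. Your exponential-tilting alternative also works, but only if you tilt slightly past the zero-mean point. Chebyshev at the zero-mean tilt does not by itself place mass on $(-\infty,0]$.

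What you call the main obstacle, uniform lower large deviations with parameters of order $1/n$, never has to be faced. The paper sidesteps it with Lemma \ref{Confronto-chernoff}, dominating the $n$-dependent increments by $n$-independent laws.

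\emph{Ratio below $1/5$.} The paper compares $X_{2,j}$ with a $\{0,1\}$-valued $X'_j$ whose success probability $p'_1$ lies slightly above the limit ratio. For this law $\mathsf P(\sum_j X'_j\le 0)=(1-p'_1)^{M(n)}$ exactly. This is just the event that candidate $1$ receives no vote, which is also the $k=0$ term of your own tie sum. It covers $x^*=0$ with no extra work.

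\emph{Ratio above $1/5$.} The pair probabilities $l/n$ and $(n-l)/n$ do not vanish at all. The paper dominates by a fixed $\pm1$ law with $\mathsf P(\tilde X_j=-1)\in(1/5,\min\{4/5,\tilde p\})$ and applies Cram\'er's theorem.

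These comparisons give rates slightly weaker than the sharp ones your Stirling or tilting estimates target. Only a strict comparison with $\log(4/5)$ is needed, so the triangular-array machinery is unnecessary.
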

\begin{proof}
The proof proceeds by contradiction. 
From Equation \eqref{L_opt} one gets  
$$
 L_{opt}(n,m):=\frac{1}{m}{\rm arg\, min}_{l \in \{2,\ldots, \frac{n}{2}-1\}}\, \log \left[1-p_1(n,m,l)\right].
$$
Now we begin by finding an upper bound of
$$
\frac{1}{M(n)}\log\left[1-p_1\left(n,M(n),l_{n,M(n)}\right)\right]
$$
by means of the Chernoff bound. For the reader's convenience, we recall Equation \eqref{diseq_1}, i.e.\ 
\begin{align}
    &1-p_1\left(n,m,l\right) \nonumber \\
    &\leq \exp\left\{m\left[ \frac{\log(n-1)}{m}+\log\left( \max \left\{ 1-\frac{1}{n}\big( \sqrt{l+1}-\sqrt{2}\big)^2,1-\frac{1}{n}\big( \sqrt{n-l}-\sqrt{l}\big)^2\right\}\right)\right]\right\} .  \nonumber
\end{align}
Hence, by choosing $m=M(n)$, where $M(n)$ satisfies assumption (\ref{eq:limMlog}), and $ l= \widehat{l}_n $ where  $(\widehat{l}_n)_{n \in \N}$ is a sequence such that 
$ \lim_{n\to \infty}\frac{\widehat{l}_n}{n}=\frac{1}{5} $, 
one has 
\begin{align} 
&\limsup_{n\to \infty} \frac{1}{M(n)}\log\left[1-p_1\left(n,M(n),  \widehat{l}_n\right)  \right]
\nonumber\\
& \leq \limsup_{n\to \infty} 
\left[ \frac{\log(n-1)}{M(n)}
+\log\left( \max \left\{ 1-\frac{1}{n}
\left( \sqrt{\widehat{l}_n+1}-\sqrt{2}\right)^2,
1-\frac{1}{n}\left( \sqrt{n-\widehat{l}_n}-\sqrt{\widehat{l}_n}\right)^2\right\}\right)\right]
\nonumber\\
& =
\log\left(\limsup_{n\to \infty}  \max \left\{ 1-\frac{1}{n}
\left( \sqrt{\frac{n}{5}+o(n)}-\sqrt{2}\right)^2,
1-\frac{1}{n}\left( \sqrt{n-\frac{n}{5}+o(n)}
-\sqrt{\frac{n}{5}+o(n)}\right)^2\right\}\right)
\nonumber\\
& =
\log\left(1-\min \left\{\lim_{n\to \infty}   \frac{1}{n}
\left( \sqrt{\frac{n}{5}+o(n)}-\sqrt{2}\right)^2,
\lim_{n\to \infty} \frac{1}{n}\left( \sqrt{ \frac{4}{5}\,n+o(n)}
-\sqrt{\frac{n}{5}+o(n)}\right)^2\right\}\right)
\nonumber\\
& = \log\left(\frac{4}{5}\right).
\label{limsup_log}
\end{align}
If the thesis of the theorem were false, then there would exist a sequence  $ (\widetilde{l}_n)_{n \in \N}$    with    $ \widetilde{l}_n \in L_{opt}(n,M(n))$ such that
$$
(i) \quad \liminf_{n \to \infty} \frac{\widetilde{l}_n}{n}< \frac{1}{5}
\qquad \vee \qquad
(ii)\quad \limsup_{n \to \infty} \frac{\widetilde{l}_n}{n}> \frac{1}{5}.
$$
Let us analyze case $(i)$.  Then, up to a subsequence, one
has
\begin{equation} \label{l_k}
  \lim_{k \to \infty} \frac{\widetilde{l}_{n_k}}{n_k}    = \liminf_{n \to \infty} \frac{\widetilde{l}_n}{n}= :p < \frac{1}{5} \,.   
\end{equation}
 However, we will show that any such sequence $(\widetilde{l}_{n_k})$ satisfies
 $$
 \liminf_{k\to \infty} \frac{1}{M(n_k)}\log\left[1-p_1\left(n_k,M(n_k),\widetilde{l}_{n_k}\right)  \right]> \log\left(\frac{4}{5}\right) \,.
 $$
This inequality, together with \eqref{limsup_log}, implies that there exist infinite values of $n \in \mathbb{N}$ such that $\widetilde{l}_n \notin L_{opt}(n,M(n))$. This is in contradiction with the hypothesis of the sequence $(l_{n,M(n)})_{n \in \N}$ eventually satisfying $l_{n,M(n)} \in L_{opt}(n,M(n))$.

\medskip 

$\bullet$ By Equation \eqref{sub2} of Lemma \ref{subset}, with $i=2\in A$, one has 
$$
1- p_1(n_k,M(n_k), \widetilde l_{n_k} ) \geq \mathsf{P} \left( \big \{ \Xi_A(2) \geq  \Xi_A(1) \big \}\right).
$$
Furthermore, 
\begin{align*}
\liminf_{k \to \infty } \frac{1}{M(n_k)}\log[ 1- p_1(n_k,M(n_k),  \widetilde l_{n_k} )] 
&\geq 
\liminf_{k \to \infty } \frac{1}{M(n_k)}\log[ \mathsf{P} ( \big \{ \Xi_A(2) \geq  \Xi_A(1) \big \}  ) ]
\\
&= 
\liminf_{k \to \infty }  \frac{1}{M(n_k)}
\log\Bigg[ \mathsf{P}\Big( \sum_{j=1}^{M(n_k)} X_{2,j} \leq 0 \Big)\Bigg],
\end{align*}
where random variables $(X_{2,j} : j \in  [M(n_k)])$ are defined in the proof of Lemma \ref{1-su-i}, cf.\ Eq.\ (\ref{eq:XiXij}).

By  Equations \eqref{Xj} and \eqref{probXij}, we recall that, for any $j \in [M(n_k)] $, the random variable $ X_{2,j}$
satisfies
\begin{equation*}
     \mathsf{P} ( X_{2,j} =1) = \frac{\widetilde{l}_{n_k}+1}{n_k},
     \qquad 
     \mathsf{P} ( X_{2,j} =-1) = \frac{1}{n_k}.
\end{equation*}
Next, consider another sequence of i.i.d. random variables $(X'_{j})_{j \in \mathbb{N}}$ satisfying
$$
\mathsf{P} ( X'_{j} =1)=p'_{1}   \in \left(p, \frac{1}{5}\right), 
\qquad \mathsf{P} ( X'_{j} =0)=1-p'_{1}. 
$$
Then, there exists $k_0 \in \mathbb{N}$ such that for all $k \geq k_0$, it holds that
$$
\frac{\widetilde{l}_{n_k}+1}{n_k} < p'_1 .
$$
Notice that $X_{2,j} $ is smaller than $X'_j$ in the usual stochastic order. Then, applying Lemma \ref{Confronto-chernoff}, we obtain
\begin{align*} 
\liminf_{k \to \infty }  \frac{1}{M(n_k)}\log\Bigg[ \mathsf{P}\Bigg( \sum_{j=1}^{M(n_k)} X_{2,j} \leq 0 \Bigg)\Bigg] 
&\geq 
\liminf_{k \to \infty }  \frac{1}{M(n_k)}\log\Bigg[ \mathsf{P}\Bigg( \sum_{j=1}^{M(n_k)} X'_{j} \leq 0 \Bigg)\Bigg] 
\nonumber \\
& = \log \left (1- p'_1 \right) > 
\log \left ( \frac{4}{5} \right ), 
\end{align*}
where the equality follows from large deviation theory (see \eqref{lim_log_1} of Lemma \ref{1-su-i}), while the 
last inequality follows from $p'_{1}   \in \left(p, \frac{1}{5}\right)$.

\medskip 

Let us analyze case $(ii)\quad  \widetilde p : = \limsup_{n \to \infty} \frac{\widetilde{l}_n}{n}> \frac{1}{5}$. 
Then, up to a subsequence, one has  
\begin{equation} \label{r_k}
  \lim_{k \to \infty} \frac{\widetilde{l}_{m_k}}{m_k} = \widetilde p > \frac{1}{5}.  
\end{equation}
Again, we will show that under condition \eqref{r_k} one has 
$$
\liminf_{k\to \infty} \frac{1}{M(m_k)}\log\left[1-p_1\left(m_k,M(m_k),\widetilde{l}_{m_k}\right)\right]
> \log\left(\frac{4}{5}\right). 
$$
By Equation \eqref{sub3} of Lemma \ref{subset}, one has 
\begin{align}\label{diseq_p1}
    1- p_1(m_k,M(m_k), \widetilde{l}_{m_k} ) &\geq \mathsf{P} \big(  \Xi_{\{1, \widetilde{l}_{m_k}+1\}}(\widetilde{l}_{m_k}+1) \geq  \Xi_{{\{1, \widetilde{l}_{ m_k}+1}  \}} (1) \big   )\nonumber\\
    &- \mathsf{P} \Bigg(  \bigcup_{i \in B \setminus 
\{\widetilde{l}_{m_k} +1\}} \big \{ \Xi_B(i) \geq \Xi_B( \widetilde{l}_{m_k}+1) \big \} \Bigg). 
\end{align}
For the union bound, we get that
\begin{align*}
    \mathsf{P} \Bigg(  &\bigcup_{i \in B \setminus 
\{\widetilde{l}_{m_k} +1\}} \big \{ \Xi_B(i) \geq \Xi_B( \widetilde{l}_{m_k}+1) \big \} \Bigg)\leq  \sum_{i \in B \setminus 
\{\widetilde{l}_{m_k} +1\}} \mathsf{P}\left( \Xi_B(i) \geq \Xi_B( \widetilde{l}_{m_k}+1) \right) \\ 
&\leq        \sum_{i \in B \setminus \{\widetilde{l}_{m_k} +1\}}  \exp\left\{ M(m_k) \cdot \log\left[  1-\frac{1}{m_k}\big( \sqrt{\widetilde{l}_{m_k}+1}- \sqrt{2}\Big)^2
\right] \right\}  
\end{align*}
by Lemma \ref{l+1-su-i}. 
Hence, the left-hand side of Eq.\ \eqref{diseq_1} is larger or equal than
\begin{align*}
    \mathsf{P} \Big (  \Xi_{\{1, \widetilde{l}_{m_k}+1\}}(\widetilde{l}_{m_k}+1  ) \geq  \Xi_{\{1, \widetilde{l}_{m_k}+1\}} (1) \Big  ) - \frac{m_k}{2} \exp{ \Big\{M(m_k) \cdot \log\left[  1-\frac{1}{m_k}\big( \sqrt{\widetilde{l}_{m_k}+1}- \sqrt{2}\Big)^2
\right] \Big \}}  \\ 
\geq   \mathsf{P} \big ( \Xi_{\{1, \widetilde{l}_{m_k}+1\}}(\widetilde{l}_{m_k}+1) \geq  \Xi_{\{1, \widetilde{l}_{m_k}+1\}} (1) \big  ) - \Big( \frac{4}{5}\Big)^{M(m_k)}  ,
\end{align*}
where the last inequality follows from Lemma \ref{l+1-su-i} and inequality \eqref{r_k}.

Furthermore, 
\begin{align} \label{prob_diseq}
\liminf_{k \to \infty } &\frac{1}{M(m_k)}\log[ 1- p_1(m_k,M(m_k), \widetilde{l}_{m_k} )] \nonumber  \\
&\geq 
\liminf_{k \to \infty } \frac{1}{M(m_k)}\log \Big [ \mathsf{P} \big ( \Xi_{\{1, \widetilde{l}_{m_k}+1\}}(\widetilde{l}_{m_k}+1) \geq  \Xi_{\{1, \widetilde{l}_{m_k}+1\}} (1) \big   ) - \Big( \frac{4}{5}\Big)^{M(m_k)}       \Big ] 
\nonumber \\
& =
\liminf_{k \to \infty }  \frac{1}{M(m_k)}\log\Bigg[ \mathsf{P}\Bigg( \sum_{j=1}^{M(m_k)} X_{\widetilde{l}_{m_k}+1,j} \leq 0 \Bigg)    
- \Big( \frac{4}{5}\Big)^{M(m_k)}  
\Bigg] ,
\end{align}
where the random variables $(X_{\widetilde{l}_{m_k}+1,j} : j \in  [M(m_k)] ) $ are defined in the proof of Lemma \ref{chernoff_bound}. 
\par
By making use of Equations \eqref{Xj} and \eqref{probXij}, we recall that, for any $j \in [M(m_k)] $, the random variable 
     \begin{equation*}
         X_{\widetilde{l}_{m_k} +1,j} =  
        \left\{ \begin{array}{rll}
1, &\text{ if } v_j \text{ votes for candidate } 1; \\ 
-1, &\text{ if }  v_j \text{ votes for candidate } \widetilde{l}_{m_k} +1;\\
0, &\text{ otherwise}  \\
\end{array}
\right .
     \end{equation*}
is such that
\begin{equation*}
     \mathsf{P} \left( X_{\widetilde{l}_{m_k} +1,j} =1\right) = \frac{m_k -\widetilde{l}_{m_k}}{m_k},
     \qquad 
     \mathsf{P} \left( X_{\widetilde{l}_{m_k} +1,j} =-1\right) = 
     \displaystyle\frac{\widetilde{l}_{m_k}}{m_k}.
\end{equation*}
Let us now consider a sequence of i.i.d. random variables $(\widetilde{X}_{j})_{j \in \mathbb{N}}$ satisfying
$$
\mathsf{P} ( \widetilde{X}_{j} =-1)=\widetilde{p}_{-1}   \in \left(\frac{1}{5}, \min\left\{\frac{4}{5}, \widetilde{p}\right\}\right), 
\qquad \mathsf{P} ( \widetilde{X}_{j} =1)=1-\widetilde{p}_{-1}. 
$$
Notice that $X_{\widetilde{l}_{m_k}+1,j} $ is smaller than $\widetilde{X}_j$ in the usual stochastic order for a sufficiently large $k$. Hence, recalling Eq.\ \eqref{prob_diseq} and by means of Lemma \ref{Confronto-chernoff}, we get that
\begin{align} \label{liminf_p}
    &\liminf_{k \to \infty }  \frac{1}{M(m_k)}\log\Bigg[ \mathsf{P}\Bigg( \sum_{j=1}^{M(m_k)} X_{\widetilde{l}_{m_k}+1,j} \leq 0 \Bigg)    
- \Big( \frac{4}{5}\Big)^{M(m_k)}  
\Bigg] \geq \nonumber\\ 
&\liminf_{k \to \infty }  \frac{1}{M(m_k)}\log\Bigg[ \mathsf{P}\Bigg( \sum_{j=1}^{M(m_k)} \widetilde{X}_{j} \leq 0 \Bigg)    
- \Big( \frac{4}{5}\Big)^{M(m_k)}  
\Bigg]\geq \nonumber\\
&\liminf_{r \to \infty }  \frac{1}{r}\log\Bigg[ \mathsf{P}\Bigg( \sum_{j=1}^{r} \widetilde{X}_{j} \leq 0 \Bigg)    
- \Big( \frac{4}{5}\Big)^{r}  
\Bigg].
\end{align}
From Eq.\ \eqref{lim_log_1}, one has that
\begin{equation*} 
{\mathsf{P}}\left(\sum_{j=1}^{r} \widetilde{X}_{j} \leq 0 \right) =\left[1-\left(\sqrt{1-\widetilde{p}_{-1}}-\sqrt{\widetilde{p}_{-1}}\right)^2\right]^{r+ o(r)}.
\end{equation*}
Then, we notice that
\begin{equation} \label{diseq_p-1}
    1-\left(\sqrt{1-\widetilde{p}_{-1}}-\sqrt{\widetilde{p}_{-1}}\right)^2>\frac{4}{5},
\end{equation}
or equivalently
$$ \widetilde{p}_{-1}^2-\widetilde{p}_{-1}+\frac{4}{25}<0,
$$
holds true for any $\widetilde{p}_{-1}   \in \left(\frac{1}{5}, \min\left\{\frac{4}{5}, \widetilde{p}\right\}\right)$. Thus, from Eq.\ \eqref{diseq_p-1}, the last expression of \eqref{liminf_p} becomes
\begin{align*}
    \liminf_{r \to \infty }  \frac{1}{r}\log\left[ \mathsf{P}\Bigg( \sum_{j=1}^{r} \widetilde{X}_{j} \leq 0 \Bigg)\right]& =\liminf_{r \to \infty }  \frac{r+o(r)}{r}\log\left[  1-\left(\sqrt{1-\widetilde{p}_{-1}}-\sqrt{\widetilde{p}_{-1}}\right)^2\right]\\
    &=\log\left[  1-\left(\sqrt{1-\widetilde{p}_{-1}}-\sqrt{\widetilde{p}_{-1}}\right)^2\right]>\log\left(\frac{4}{5}\right).
\end{align*}
This proves the result.
\end{proof} 

The result presented in Theorem \ref{Bound} is particularly useful, as it provides the value of the optimal width of the main cluster. Indeed, it becomes evident that, in a voting round with a sufficiently large number of candidates $n$, choosing $l_{n,M(n)}=\displaystyle\frac{1}{5}\, n$ will increase the winning probability for the preferred candidate. In addition, this can be achieved without involving an excessively large number of voters $m=M(n)$, as condition \eqref{eq:limMlog} guarantees that the result holds as long as $M(n)$ grows asymptotically faster than $\log n$. It is worth emphasizing that it is not necessary for the number of voters $m$ to be greater than or equal to $n$ to attain the preferred candidate's success.

\section{A Continuous Limit of the Discrete Model}  \label{continuo}

While Theorem \ref{Bound} characterizes the asymptotic regime $M(n) \gg \log n$, we now examine the system under the low-density condition $M(n) = o(\sqrt{n})$. To this end, we introduce a continuous framework that corresponds---in a precise probabilistic sense detailed below---to the exact limit of the discrete model as $n \to \infty$.

\subsection{Formalization of the Continuous Model}

Let $U_1, U_2, \dots, U_m$ be independent and identically distributed (i.i.d.) random variables strictly uniformly distributed on $[0, 1)$. For a given width $\eta \in (0, 1/2)$, we define two disjoint open intervals representing the continuous counterparts of the discrete recruitment clusters:
\begin{itemize}
    \item $\tilde{B} = (1-\eta, 1)$,
    \item $\tilde{A} = (0, \eta)$.
\end{itemize}

For any generic open arc $(a,b)$ modulo 1 on the circular domain, let $N_{(a,b)}$ denote the random variable counting the number of points falling into this interval:
\begin{equation}
    N_{(a,b)} = \sum_{j=1}^m \mathbf{I}_{(a,b)}(U_j).
\end{equation}
By definition, the random vector $(N_{\tilde{B}}, N_{\tilde{A}}, m - N_{\tilde{B}} - N_{\tilde{A}})$ follows a multinomial distribution with parameters $m$ and $(\eta, \eta, 1-2\eta)$.

Since the probability of any $U_j$ falling exactly on the boundaries is zero, we can neglect boundary ties almost surely. Thus, we define the continuous decisive event $F_{\infty, m, \eta}$ simply as the simultaneous occurrence of the following conditions:
\begin{equation} \label{eq:F_infinity_conditions}
    N_{(1-\eta, 1)} \ge 2, \quad N_{(0, \eta)} \ge 2, \quad \text{and} \quad N_{(0, \eta)} < \frac{m}{2}.
\end{equation}

This event arises naturally as the continuous limit of the rules governing the discrete voting process. Specifically, it identifies the winning threshold for candidates associated with non-negligible recruitment clusters, effectively precluding ties---scenarios that, in the discrete model, would arbitrarily interrupt the voting process.

The probability of victory $p(\infty, m, \eta) := \mathsf{P}(F_{\infty,m, \eta})$ for the target candidate can be directly computed. By expressing this probability through the multinomial mass function and setting $s = N_{(0, \eta)}$ and $t = N_{(1-\eta, 1)}$, we obtain:
\begin{equation}\label{sommasemplice}
    p(\infty, m, \eta) = \sum_{s=2}^{\lfloor \frac{m-1}{2} \rfloor} \sum_{t=2}^{m-s} \frac{m!}{s! \, t! \, (m - s - t)!} \, \eta^{s} \, \eta^{t} \, (1 - 2\eta)^{m - s - t}.
\end{equation}
The summations iterate over all realizations $(s, t)$ satisfying the conditions defined in \eqref{eq:F_infinity_conditions}, ensuring that both the target candidate and the primary challenger receive at least 2 votes, while strictly preventing the challenger from reaching the winning threshold of $m/2$.

This polynomial expression in $\eta$ establishes a direct link between the geometric configuration of the continuous intervals and the electoral outcome. As we establish in Theorem \ref{T3}, these constraints are necessary and sufficient to characterize the asymptotic behavior of the system. 

However, justifying this continuous approximation requires proving that the discrete model behaves regularly as $n \to \infty$. Specifically, we must show that the probability of structural anomalies---such as multiple voters colliding on the exact same candidate---vanishes. Returning to the discrete model with $n$ candidates, $m$ voters, and $R \in O_{m,n}$, we define the collision-free event:
\begin{equation} \label{H_n}
    H_{m,n} = \left( \bigcap_{i \in A \setminus \{1\}} \{ \Xi_A(i) \leq 1 \} \right) \cap \left( \bigcap_{j \in B \setminus \{l+1\}} \{ \Xi_B(j) \leq 1 \} \right).
\end{equation}

The following lemma demonstrates that the probability of $H_{m,n}$ tends to $1$ as $n \to \infty$. This result follows directly from the generalized birthday problem \cite{DasGupta}; a brief proof is provided for completeness.

\begin{lemma}[Asymptotic Absence of Voter Collisions] \label{compleanni}
Let $n \in \mathbb{N}$ denote the number of candidates and $m = m(n)$ the number of voters. Under Assumption \ref{assumption_1}, if $m(n) = o(\sqrt{n})$, then 
\begin{equation*}
    \lim_{n \to \infty} \mathsf{P}(H_{m(n), n}) = 1.
\end{equation*}
\end{lemma}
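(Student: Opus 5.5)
The strategy is a union bound over pairs of voters, as in the classical birthday problem. The complement of $H_{m,n}$ occurs when some candidate $i\in A\setminus\{1\}$ receives at least two votes in the $A$-election, or some $j\in B\setminus\{l+1\}$ receives at least two votes in the $B$-election. By Lemma \ref{lemma-voti}, $\Xi_A(i)=\sum_{j}\mathbf{I}_{C^-_A(i)\cup\{i\}}(s_j)$. For the construction \eqref{defA}--\eqref{defB}, every $i\in A\setminus\{1\}$ has a \emph{catchment set} $D_i:=C^-_A(i)\cup\{i\}$ of cardinality at most $2$, as computed in the proof of Lemma \ref{1-su-i}: it is $1$ for $i\in\{2,\dots,l\}$ and $2$ for $i\in C$. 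The same bound $|C^-_B(j)\cup\{j\}|\le 2$ holds for $j\in B\setminus\{l+1\}$ by the symmetric structure used in Lemma \ref{l+1-su-i}.

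For a fixed such $i$, the event $\{\Xi_A(i)\ge 2\}$ means that two distinct voters $j<j'$ both have seeds in $D_i$. Under Assumption \ref{assumption_1} the seeds are i.i.d. uniform on $[n]$, so the union bound over pairs gives
\[
\mathsf{P}(\Xi_A(i)\ge 2)\le \binom{m}{2}\Big(\frac{|D_i|}{n}\Big)^2\le \binom{m}{2}\frac{4}{n^2}.
\]
Summing over the at most $n/2$ candidates in $A\setminus\{1\}$ and the at most $n/2$ candidates in $B\setminus\{l+1\}$ yields
\[
1-\mathsf{P}(H_{m,n})\le n\binom{m}{2}\frac{4}{n^2}\le \frac{2m^2}{n}.
\]

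With $m=m(n)=o(\sqrt n)$, we have $m^2/n\to 0$, which proves the lemma. The argument is uniform in $l$; the only thing to verify is the cardinality bound $|D_i|\le 2$. In particular, one must confirm that the large clusters of size $l$ feed only the excluded candidates $1$ (in $A$) and $l+1$ (in $B$). This is the only point needing care, and it follows directly from Definition \ref{cluster} together with the alternating structure of $C$ and $C'$.
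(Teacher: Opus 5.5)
Your proof is correct and is essentially the paper's argument: a pairwise union bound gives $\binom{m}{2}(2/n)^2$ per candidate, and summing over the $n-2$ candidates of $A\setminus\{1\}$ and $B\setminus\{l+1\}$ yields $2m^2/n\to 0$. Your explicit check that the size-$l$ clusters feed only candidates $1$ and $l+1$, so that every other catchment set has at most two elements, is a detail the paper merely asserts with ``by definition $|V_i|\le 2$.''
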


\begin{proof}
Let $V_i := C^-_A(i) \cup \{i\}$ denote the voter selection set for candidate $i \in A \setminus \{1\}$. By definition, $|V_i| \leq 2$. For each $i \in A \setminus \{1\}$, the probability that $\Xi_A(i) > 1$ is bounded by the union bound over all pairs of voters:
\begin{equation}
    \mathsf{P}(\Xi_A(i) > 1) \leq \binom{m}{2} \left( \frac{|V_i|}{n} \right)^2 \leq \frac{m(m-1)}{2} \cdot \frac{4}{n^2} \leq \frac{2m^2}{n^2}.
\end{equation}
Taking the union bound over all $i \in A \setminus \{1\}$, where $|A \setminus \{1\}| = n/2 - 1$, the probability of a collision in $A$ is bounded by:
\begin{equation}
    \mathsf{P}\left( \bigcup_{i \in A \setminus \{1\}} \{ \Xi_A(i) > 1 \} \right) \leq \frac{n}{2} \cdot \frac{2m^2}{n^2} = \frac{m^2}{n}.
\end{equation}
By symmetry, the exact same bound applies to the set $B \setminus \{l+1\}$:
\begin{equation}
    \mathsf{P}\left( \bigcup_{j \in B \setminus \{l+1\}} \{ \Xi_B(j) > 1 \} \right) \leq \frac{m^2}{n}.
\end{equation}
Applying the union bound to the complement event $H_{m,n}^c$:
\begin{equation}
    \mathsf{P}(H_{m,n}^c) \leq \mathsf{P}\left( \bigcup_{i \in A \setminus \{1\}} \{ \Xi_A(i) > 1 \} \right) + \mathsf{P}\left( \bigcup_{j \in B \setminus \{l+1\}} \{ \Xi_B(j) > 1 \} \right) \leq \frac{2m^2}{n}.
\end{equation}
Since $m(n) = o(\sqrt{n})$, it follows that $2m(n)^2/n \to 0$ as $n \to \infty$. This confirms that $\mathsf{P}(H_{m(n), n}) \to 1$, completing the proof.
\end{proof}

The result above ensures that, as $n \to \infty$, the likelihood of local collisions or adjacency interferences vanishes asymptotically. This absence of collisions allows us to establish that the discrete winning probability converges, in the limit, to the probability defined by the continuous model, as formalized in the following theorem.

\begin{theorem} \label{T3}
Let $\{m(n)\}_{n \in \mathbb{N}}$ be a sequence of the number of voters such that $m(n) = o(\sqrt{n})$. Let $\eta \in \left(0, \frac{1}{2}\right)$ be fixed, and let $l_n$ be a sequence of integers representing the discrete cluster size such that $l_n = \eta n + o(\sqrt{n})$. The following implications hold:
\begin{enumerate}
    \item[(i)] If $\lim_{n \to \infty} m(n) = m_0 \in \mathbb{N}$, then 
    $
    \lim_{n \to \infty} p_1 (n, m(n), l_n) = p(\infty, m_0, \eta).
    $
    \item[(ii)] If $\lim_{n \to \infty} m(n) = \infty$, then 
    $
    \lim_{n \to \infty} p_1 (n, m(n), l_n) = \lim_{m \to \infty} p(\infty, m, \eta) = 1.
    $
\end{enumerate}
\end{theorem}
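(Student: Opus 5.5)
The plan is a coupling argument: I would realize the discrete profile and the continuous model on one probability space, and show that outside an event of vanishing probability the discrete winning event $F=F_{1,n,m,l_n}$ coincides with the continuous decisive event $F_{\infty,m,\eta}$.

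\textbf{Coupling.} Take $U_1,\dots,U_m$ i.i.d.\ uniform on $[0,1)$ and define the seed $s_j:=\lceil nU_j\rceil$, with the convention $s_j=1$ if $nU_j\le 1$. Then $s_j$ is uniform on $[n]$, so Assumption \ref{assumption_1} holds. By Lemma \ref{lemma-voti} and the structure \eqref{defA}--\eqref{defB}, the relevant vote counts are as follows.
\begin{itemize}
\item Candidate $1$ receives in $A$ exactly the voters with seeds in $\{n-l+1,\dots,n,1\}$, an arc of $l+1$ sites.
\item Candidate $l+1$ receives in $B$ exactly the voters with seeds in $\{2,\dots,l+1\}$.
\item In the final between $1$ and $l+1$, candidate $1$ receives the remaining voters.
\end{itemize}
Under the coupling, these arcs of sites correspond to intervals of length $(l_n+1)/n$ and $l_n/n$ adjacent to $0$ on the circle. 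Their symmetric differences with $(1-\eta,1)$ and $(0,\eta)$ have total length $O(|l_n/n-\eta|+1/n)=o(1/\sqrt n)$. Let $G_n$ be the event that no $U_j$ falls in these symmetric differences. A union bound over the $m$ voters gives $\mathsf P(G_n^c)\le m\cdot o(1/\sqrt n)\to 0$, since $m=o(\sqrt n)$. On $G_n$ the counts $t:=\Xi_A(1)$ and $s:=\Xi_B(l+1)$ equal $N_{(1-\eta,1)}$ and $N_{(0,\eta)}$ respectively.

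\textbf{Identification on $H_{m,n}\cap G_n$ (for $m\ge 2$).} On $H_{m,n}$, every candidate in $A\setminus\{1\}$ and every candidate in $B\setminus\{l+1\}$ has at most one vote. I would check three cases.
\begin{itemize}
\item If $t=0$, candidate $1$ cannot win in $A$. If $t=1$, some other candidate of $A$ also has exactly one vote (since $m\ge2$ and all other counts are at most $1$), so the primary in $A$ is tied and void. Hence $t\ge2$ is necessary, and on $H_{m,n}$ it is also sufficient for $1$ to win in $A$.
\item The same reasoning shows that $B$ has a unique winner exactly when $s\ge 2$, and that this winner is then $l+1$. If $s\le 1$, the primary in $B$ is tied and the election is void.
\item When $s,t\ge2$, the final is $1$ against $l+1$, and candidate $1$ wins iff $m-s>s$, i.e.\ $s<m/2$.
\end{itemize}
Hence $F\cap H_{m,n}\cap G_n=F_{\infty,m,\eta}\cap H_{m,n}\cap G_n$. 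It follows that
\[
\bigl|p_1(n,m(n),l_n)-p(\infty,m(n),\eta)\bigr|\le \mathsf P(H_{m,n}^c)+\mathsf P(G_n^c)\to 0,
\]
where $\mathsf P(H_{m,n}^c)\to0$ by Lemma \ref{compleanni}. This bound does not require $m(n)$ to converge.

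\textbf{Conclusion.} For (i), $m(n)$ is integer valued and converges to $m_0$, so it is eventually equal to $m_0$, and the displayed estimate gives the limit $p(\infty,m_0,\eta)$ (assuming $m_0\ge 2$; the degenerate case $m_0=1$ would need a separate remark). For (ii), it remains to show that $p(\infty,m,\eta)\to1$ as $m\to\infty$. Here $N_{(0,\eta)}$ and $N_{(1-\eta,1)}$ are each $\mathrm{Bin}(m,\eta)$. Since $\eta<1/2$, the law of large numbers, or Lemma \ref{chernoff_bound} for an exponential rate, shows that $\mathsf P(N_{(0,\eta)}\ge m/2)\to0$. Clearly $\mathsf P(N\le 1)\to0$ for either count. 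Combined with the displayed estimate, this gives $p_1\to1$.

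The step I expect to be the main obstacle is the case analysis showing exactly $F=F_{\infty}$ on the good event, particularly the tie cases $t=1$ and $s\le 1$. The hypothesis $l_n=\eta n+o(\sqrt n)$ is used only to control $G_n$.
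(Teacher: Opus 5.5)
Your proposal is correct and follows essentially the paper's own coupling argument: seeds $s_j=\lceil nU_j\rceil$, exclusion of $H_{m,n}^c$ and of the boundary strips via a union bound giving $m\cdot o(n^{-1/2})\to 0$, and the law of large numbers for (ii), with a more explicit treatment of the tie cases than the paper gives. One small slip: in the $B$-primary, candidate $l+1$ collects seeds $\{1,\dots,l+1\}$, while $\{2,\dots,l+1\}$ is its share only in the final against $1$; this is harmless because $G_n$ already excludes seed $1$ via the $A$-arc (the interval $[0,1/n]$ lies in its symmetric difference with $(1-\eta,1)$), and the same observation settles $m_0=1$ at once, since then $F\cap G_n=\emptyset=F_{\infty,1,\eta}$.
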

\begin{proof}
The convergence relies on a coupling argument defined on a common probability space. Let $U_1, U_2, \dots$ be a sequence of independent and identically distributed (i.i.d.) random variables following a uniform distribution on $[0, 1)$. We couple the continuous preferences with the discrete seeds by mapping $s^{(n)}_j = \lceil n U_j \rceil$ for each voter $j \in \{1, \dots, m(n)\}$ and for any number of candidates $n \in \mathbb{N}$.

\textit{Proof of (i).} Assume $\lim_{n \to \infty} m(n) = m_0 \in \mathbb{N}$. Since $m(n)$ is a sequence of integers, there exists $n_0 \in \mathbb{N}$ such that $m(n) = m_0$ for all $n \geq n_0$. Let $\mathbf{U} = (U_1, \dots, U_{m_0})$.

Under this coupling, the discrete winning event $F_{1,n, m_0, l_n}$ and the continuous event $F_{\infty, m_0, \eta}$ can differ only if either a collision occurs in the discrete model (meaning the event $H_{m_0,n}$ does not hold) or at least one variable $U_j$ falls into a region where the continuous and discrete classifications do not coincide. Given the mapping $s^{(n)}_j = \lceil n U_j \rceil$, this misclassification occurs if $U_j$ falls in the symmetric difference between the continuous cluster intervals (defined by $\eta$) and their discrete counterparts (defined by the normalized boundary $l_n/n$). Thus, the discrepancy is induced by the boundary shift $l_n/n - \eta$ combined with the discretization error, which has a maximum size of $1/n$. Consequently, the difference between the discrete winning event $F_{1,n, m_0, l_n}$ and the continuous event $F_{\infty, m_0, \eta}$ is contained within the union of the collision event $H_{m_0,n}^c$ and the event where at least one variable $U_j$ falls into these discretization-sensitive regions. Let $\epsilon_n = \left| \frac{l_n}{n} - \eta \right| + \frac{1}{n}$ represent the maximum error margin, given by the sum of the absolute boundary shift and the maximum discretization error. We define the boundary zones as the following neighborhoods around the critical points:
\begin{equation*}
\mathcal{B}_n = \bigcup_{x \in \{0, \eta, 1-\eta\}} (x - \epsilon_n, x + \epsilon_n) \pmod 1.
\end{equation*}

If $U_j \notin \mathcal{B}_n$ for all $j=1, \dots, m_0$ and the event $H_{m_0, n}$ occurs, the two events coincide. Therefore, applying the union bound:
\begin{align*}
\left| \mathsf{P}(F_{1,n, m_0, l_n}) - \mathsf{P}(F_{\infty, m_0, \eta}) \right| &\leq \mathsf{P} \left( H_{m_0,n}^c \cup \bigcup_{j=1}^{m_0} \{U_j \in \mathcal{B}_n\} \right) \\
&\leq \mathsf{P}(H_{m_0,n}^c) + \sum_{j=1}^{m_0} \mathsf{P}(U_j \in \mathcal{B}_n) \\
&\leq \frac{2m_0^2}{n} + m_0 \cdot \lambda(\mathcal{B}_n) \\
&\leq \frac{2m_0^2}{n} + 6 m_0 \left( \left| \frac{l_n}{n} - \eta \right| + \frac{1}{n} \right),
\end{align*}
where $\lambda(\mathcal{B}_n) \leq 3 \cdot 2\epsilon_n = 6\epsilon_n$ is the Lebesgue measure of the three boundary neighborhoods.

Since $m_0$ is constant, $\lim_{n \to \infty} \frac{1}{n} = 0$, and $\lim_{n \to \infty} \frac{l_n}{n} = \eta$, the right-hand side of the inequality explicitly vanishes as $n \to \infty$. This proves that:
$$
\lim_{n \to \infty} p_1(n, m(n), l_n) = p(\infty, m_0, \eta).
$$

\textit{Proof of (ii).}
 Assume $\lim_{n \to \infty} m(n) = \infty$ with the growth rate constraint $m(n) = o(\sqrt{n})$. Recall that $l_n = \eta n + o(\sqrt{n})$, which implies that the boundary shift is $\left| \frac{l_n}{n} - \eta \right| = o(n^{-1/2})$.

We apply the exact same coupling bound derived in Part 1, substituting the constant $m_0$ with the sequence $m(n)$:
$$
\left| p_1(n, m(n), l_n) - p(\infty, m(n), \eta) \right| \leq \frac{2m(n)^2}{n} + 6 m(n) \left( \left| \frac{l_n}{n} - \eta \right| + \frac{1}{n} \right).
$$

Let us analyze the asymptotic behavior of the right-hand side as $n \to \infty$:
\begin{itemize}
    \item Since $m(n) = o(\sqrt{n})$, the collision term $\frac{2m(n)^2}{n} \to 0$.
    \item The discretization error term $\frac{6m(n)}{n} \to 0$.
\item By the boundary shift condition, the cross term satisfies
    $$
    6 m(n) \left| \frac{l_n}{n} - \eta \right| = o(\sqrt{n}) \cdot o(n^{-1/2}) = o(1).
    $$
\end{itemize}

Therefore, the distance between the discrete and continuous probabilities asymptotically
 vanishes:
\begin{equation}\label{differenza}
\lim_{n \to \infty} \left| p_1(n, m(n), l_n) - p(\infty, m(n), \eta) \right| = 0.
\end{equation}

Finally, we analyze the continuous probability $p(\infty, m(n), \eta)$ as $m(n) \to \infty$. In the continuous limit, the votes are assigned to the recruitment clusters independently with fixed probability $\eta$. By the Law of Large Numbers, the proportion of uniform random variables falling into the interval $\tilde{A}$ converges to its measure $\eta < 1/2$. Thus, the probability that the required quotas are met approaches $1$ as the number of variables $m(n)$ goes to infinity. Thus
\begin{equation}\label{a1}
\lim_{m \to \infty} p(\infty, m, \eta) = 1.
\end{equation}

Combining the two limits \eqref{differenza} and \eqref{a1}, we obtain the desired result:
$$
\lim_{n \to \infty} p_1(n, m(n), l_n) = 1.
$$
\end{proof}

The result of Theorem \ref{T3} establishes the asymptotic equivalence between the discrete and continuous models in the regime where the number of candidates $n$ grows much faster than the number of voters $m$ (specifically, $m = o(\sqrt{n})$). On the other hand, the results of the previous sections (e.g., Theorem \ref{limsup_1} and Theorem \ref{Bound}) govern the regime where $m$ grows significantly faster than $\log n$. 

Since these two regimes overlap, combining these complementary approaches allows us to establish a universal convergence result. The following corollary demonstrates that the probability of the preferred candidate winning converges to $1$ whenever the voter population $m$ goes to infinity, regardless of the specific behavior of the sequence of candidates $n$.
\begin{corollary}[Universal Convergence to Victory] \label{Cor:uniform}
    Let $[\eta_{\min}, \eta_{\max}] \subset \left(0, \frac{1}{2}\right)$ be a fixed compact interval. Let $(m_k)_{k \in \mathbb{N}}$ be a sequence of voters such that $\lim_{k \to \infty} m_k = \infty$, and let $(n_k)_{k \in \mathbb{N}}$ be an arbitrary sequence of candidates ($n_k \geq 6$). 
    
    If $(l_k)_{k \in \mathbb{N}}$ is a sequence of discrete cluster sizes such that their relative width satisfies 
    $$
    \frac{l_k}{n_k} \in [\eta_{\min}, \eta_{\max}] \quad \text{for all } k,
    $$
    then the probability of the preferred candidate winning converges to $1$. That is,
    $$
    \lim_{k \to \infty} p_1(n_k, m_k, l_k) = 1.
    $$
\end{corollary}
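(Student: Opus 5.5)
The plan is to argue by contradiction, passing to subsequences and splitting into the two regimes already handled in the paper: the Chernoff regime of Theorem \ref{limsup_1} and Theorem \ref{Bound}, and the continuous regime of Theorem \ref{T3}. The only extra work is making each regime uniform in the relative width $l_k/n_k\in[\eta_{\min},\eta_{\max}]$. Throughout I assume the standing constraint $2\le l_k\le n_k/2-1$ (which $\eta_{\max}<1/2$ gives for large $n_k$).

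Suppose the claim fails. Then there are $\delta>0$ and a subsequence along which $p_1(n_k,m_k,l_k)\le 1-\delta$. Passing to a further subsequence, exactly one of the following holds.
\begin{enumerate}
\item[(a)] $m_k/\log n_k\to\infty$. This includes the case where $n_k$ stays bounded.
\item[(b)] $m_k\le C\log n_k$ for some constant $C$. In this case $n_k\to\infty$ and $m_k=o(\sqrt{n_k})$.
\end{enumerate}

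\emph{Case (a).} I will use inequality \eqref{diseq_1} directly:
\begin{equation*}
1-p_1\le \exp\bigl\{\log(n_k-1)-m_k\,c_k\bigr\},
\end{equation*}
where $-c_k$ is the logarithm of the maximum appearing there. The key point is a uniform lower bound $c_k\ge c_*>0$.
\begin{itemize}
\item If $n_k$ is large, then $\frac1{n}(\sqrt{l+1}-\sqrt2)^2\ge(\sqrt{\eta_{\min}}-\sqrt{2/n})^2$ and $\frac1n(\sqrt{n-l}-\sqrt l)^2\ge(\sqrt{1-\eta_{\max}}-\sqrt{\eta_{\max}})^2$. Both are bounded away from zero.
\item If $n_k$ is bounded, only finitely many pairs $(n,l)$ occur. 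Each has a strictly negative exponent by the proof of Theorem \ref{limsup_1}.
\end{itemize}
Since $m_k c_*-\log n_k\to\infty$, we get $1-p_1\to0$, a contradiction.

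\emph{Case (b).} I cannot quote Theorem \ref{T3} verbatim, because the ratio $\eta_k:=l_k/n_k$ varies with $k$ rather than converging to a fixed $\eta$ at rate $o(n^{-1/2})$. Instead I will rerun its coupling argument with the $k$-dependent width $\eta_k$ itself. Then the boundary shift is zero and only the discretization error $1/n_k$ remains, so the same estimate gives
\begin{equation*}
\bigl|p_1(n_k,m_k,l_k)-p(\infty,m_k,\eta_k)\bigr|\le \frac{2m_k^2}{n_k}+\frac{6m_k}{n_k}\longrightarrow0 .
\end{equation*}
This uses Lemma \ref{compleanni}. It remains to show that $p(\infty,m,\eta)\to1$ uniformly for $\eta\in[\eta_{\min},\eta_{\max}]$. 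From \eqref{eq:F_infinity_conditions},
\begin{equation*}
1-p(\infty,m,\eta)\le 2\,\mathsf P\bigl(\mathrm{Bin}(m,\eta)\le1\bigr)+\mathsf P\bigl(\mathrm{Bin}(m,\eta)\ge m/2\bigr).
\end{equation*}
By monotonicity of the binomial in $\eta$, this is at most
\begin{equation*}
2\,\mathsf P\bigl(\mathrm{Bin}(m,\eta_{\min})\le1\bigr)+\mathsf P\bigl(\mathrm{Bin}(m,\eta_{\max})\ge m/2\bigr).
\end{equation*}
Both terms vanish as $m\to\infty$; for the second, Lemma \ref{chernoff_bound} applies since $\eta_{\max}<1/2$. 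Because $m_k\to\infty$, this again contradicts $p_1\le1-\delta$.

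The main obstacle I expect is in case (b): checking that the coupling estimate in the proof of Theorem \ref{T3} really holds for a width $\eta_k$ that changes with $n$. In particular, the boundary neighbourhoods must be taken around the points $0,\eta_k,1-\eta_k$, and their total measure must still be $O(1/n_k)$ uniformly in $k$. Case (a) is routine once the uniform constant $c_*$ is in hand. The small-$n$ subcase only needs the finiteness argument above.
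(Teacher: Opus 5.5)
Your proof is correct and follows the paper's argument. It uses the same two regimes: the Chernoff bound \eqref{diseq_1} with a constant uniform over $[\eta_{\min},\eta_{\max}]$, and the coupling with the $k$-dependent width $\eta_k=l_k/n_k$ (zero boundary shift, error $\frac{2m_k^2}{n_k}+\frac{6m_k}{n_k}$), closed by monotonicity of the binomial tails in $\eta$. The differences are cosmetic: you split at $m_k\asymp\log n_k$ through subsequences rather than at $m_k=n_k^{1/3}$, and you handle bounded $n_k$ by a finiteness argument, which the paper's ``asymptotically equivalent'' step in Regime 2 tacitly skips.
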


\begin{proof}
    To prove the result for an arbitrary sequence of pairs $\big( (n_k, m_k) \big)_{k \in \mathbb{N}}$, we partition the sequence into two complementary subsequences based on their relative growth rates. We fix a threshold exponent, for instance $\alpha = 1/3$, and define:
    \begin{itemize}
        \item \textit{Regime 1:} The subsequence of indices $k$ where $m_k \leq n_k^{1/3}$.
        \item \textit{Regime 2:} The subsequence of indices $k$ where $m_k > n_k^{1/3}$.
    \end{itemize}
    
    \textit{Analysis of Regime 1.} For this subsequence, since $m_k \leq n_k^{1/3}$, it strictly holds that $m_k = o(\sqrt{n_k})$. We couple the $k$-th discrete model directly with a continuous model having boundary $\eta_k = l_k/n_k$. Because $\eta_k$ matches the normalized discrete boundary exactly, the boundary shift is identically zero. Following the coupling argument from Theorem \ref{T3}, the error between the probabilities is bounded solely by the collision and discretization terms:
    $$
    \left| p_1(n_k, m_k, l_k) - p(\infty, m_k, \eta_k) \right| \leq \frac{2m_k^2}{n_k} + \frac{6m_k}{n_k}.
    $$
    Since $m_k \leq n_k^{1/3}$, this coupling error strictly vanishes as $k \to \infty$. To show that the continuous probability $p(\infty, m_k, \eta_k)$ approaches $1$, we analyze the complement of the winning event. The continuous event fails if $N_{\tilde{B}} < 2$, $N_{\tilde{A}} < 2$, or $N_{\tilde{A}} \geq m_k/2$. Since $\eta_k \in [\eta_{\min}, \eta_{\max}] \subset \left(0, \frac{1}{2}\right)$, the probabilities of these failure modes are uniformly bounded: the probability of any candidate receiving fewer than two votes is bounded by the case $\eta = \eta_{\min} > 0$, while the probability of the challenger reaching the threshold ($N_{\tilde{A}} \geq m_k/2$) is bounded by the case $\eta = \eta_{\max} < 1/2$. As $m_k \to \infty$, these uniform upper bounds vanish 
exponentially. Consequently, $p(\infty, m_k, \eta_k) \to 1$, which implies that $p_1(n_k, m_k, l_k) \to 1$ along this subsequence.
    
    \textit{Analysis of Regime 2.} For this subsequence, we have $m_k > n_k^{1/3}$. This implies $n_k < m_k^3$, and therefore:
    $$
    \lim_{k \to \infty} \frac{\log(n_k - 1)}{m_k} \leq \lim_{k \to \infty} \frac{3 \log m_k}{m_k} = 0.
    $$
    This condition allows us to apply the Chernoff upper bound established in Equation \eqref{diseq_1}. For the failure probability, substituting $n = n_k$, $m = m_k$, and $l = l_k$, we have:
    \begin{multline*}
        1 - p_1(n_k, m_k, l_k) \leq \exp \Biggl\{ m_k \Biggl[ \frac{\log(n_k-1)}{m_k} \\
        + \log\left( \max \left\{ 1-\frac{\big( \sqrt{l_k+1}-\sqrt{2}\big)^2}{n_k}, \, 1-\frac{\big( \sqrt{n_k-l_k}-\sqrt{l_k}\big)^2}{n_k} \right\} \right) \Biggr] \Biggr\}.
    \end{multline*}
    As $k \to \infty$, the arguments inside the maximum are asymptotically equivalent to $1 - \frac{l_k}{n_k}$ and $2\sqrt{\frac{l_k}{n_k} \left(1 - \frac{l_k}{n_k}\right)}$. By hypothesis, the ratio $\frac{l_k}{n_k}$ is strictly confined within the compact interval $[\eta_{\min}, \eta_{\max}] \subset \left(0, \frac{1}{2}\right)$. Over this compact domain, both of these continuous functions are strictly bounded above by a value less than $1$ (since $\eta_{\min} > 0$ ensures $1-\eta < 1$, and $\eta_{\max} < 1/2$ ensures $2\sqrt{\eta(1-\eta)} < 1$).  Consequently, their maximum is also strictly bounded away from $1$, meaning its logarithm is bounded above by some strictly negative constant $c^* < 0$. 
    
    Since $\frac{\log(n_k-1)}{m_k} \to 0$, the argument of the exponential is asymptotically dominated by this strictly negative constant. Thus, the failure probability is bounded by $e^{m_k c^*}$, which exponentially vanishes as $m_k \to \infty$, implying $p_1(n_k, m_k, l_k) \to 1$.

    Since the probability $p_1(n_k, m_k, l_k)$ converges to $1$ along both complementary subsequences, the limit $\lim_{k \to \infty} p_1(n_k, m_k, l_k) = 1$ holds for the entire sequence.
\end{proof}

\begin{rem}[Boundary Behavior of the Success Probability]
The restriction of the relative cluster width to a compact interval $[\eta_{\min}, \eta_{\max}] \subset \left(0, \frac{1}{2}\right)$ in Corollary \ref{Cor:uniform} is not merely a technical artifact needed for uniform convergence, but reflects the intrinsic limitations of the voting mechanism at the extremes. 

Indeed, examining the continuous probability as $\eta$ approaches the boundaries clarifies why successful configurations must avoid them:
\begin{itemize}
    \item As $\eta \to 0$, the target candidate's coverage area shrinks to zero. Consequently, they receive almost no votes, leading to $\lim_{\eta \to 0} p(\infty, m, \eta) = 0$.
    \item As $\eta \to 1/2$, the intervals for the target candidate and the potential challenger evenly partition the entire circle. The election degenerates into a symmetric two-way race where voters act as independent coin tosses. Thus, the probability of the target candidate strictly winning is bounded by symmetry, yielding $\lim_{\eta \to 1/2} p(\infty, m, \eta) \leq \frac{1}{2}$.
\end{itemize}
Therefore, achieving a winning probability that converges to $1$ fundamentally requires the parameter $\eta$ to be strictly separated from both $0$ and $1/2$.
\end{rem}

Having established the asymptotic equivalence between the discrete 
and continuous success probabilities, a natural question arises from a 
mechanism design perspective: how should the election organizer 
choose the discrete cluster size $l$ to maximize the target candidate's chances? 

The following theorem demonstrates that optimizing the discrete model is asymptotically equivalent to optimizing its continuous counterpart. Specifically, for a fixed number 
of voters $m$, as the pool of candidates $n$ grows arbitrarily large, the optimal 
normalized width of the discrete cluster naturally converges to the 
optimal continuous parameter $\eta$.

\begin{theorem}[Asymptotic Convergence of the Optimal Cluster Width] \label{Teo:ottimo_continuo}
Let $m \in \mathbb{N}$ be a fixed number of voters. Let $\mathcal{E}^*(m) \subset \left(0, \frac{1}{2}\right)$ be the set of global maximizers for the continuous success probability:
$$
\mathcal{E}^*(m) = \underset{\eta \in (0, 1/2)}{\arg\max} \, p(\infty, m, \eta).
$$
For each $n \in \mathbb{N}$, let $l_{opt}(n, m) \in \underset{l}{\arg\max} \, p_1(n, m, l)$ be an optimal discrete cluster width. 
Then, as the number of candidates $n \to \infty$, the sequence of normalized optimal widths approaches the set of continuous maximizers:
$$
\lim_{n \to \infty} \inf_{\eta^* \in \mathcal{E}^*(m)} \left| \frac{l_{opt}(n, m)}{n} - \eta^* \right| = 0.
$$
Furthermore, if the continuous probability $p(\infty, m, \eta)$ admits a unique global maximum $\eta_{opt}(m)$ over $\left(0, \frac{1}{2}\right)$, the sequence strongly converges:
$$
\lim_{n \to \infty} \frac{l_{opt}(n, m)}{n} = \eta_{opt}(m).
$$
\end{theorem}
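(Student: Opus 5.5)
The plan is a standard argmax-continuity argument, built on a \emph{uniform} version of the coupling bound from the proof of Theorem \ref{T3}. Fix $m$. For any $n$ and any admissible $l$, couple the $k$-th discrete model with the continuous model of parameter $\eta = l/n$. The boundary shift is then zero, exactly as in Regime 1 of Corollary \ref{Cor:uniform}, and the argument of Theorem \ref{T3} gives
\begin{equation*}
\sup_{l \in \{2,\ldots,\frac n2 -1\}} \left| p_1(n,m,l) - p\!\left(\infty,m,\tfrac{l}{n}\right)\right| \le \frac{2m^2}{n} + \frac{6m}{n} =: \delta_n \to 0 .
\end{equation*}
This uniformity in $l$ is the key point, since the optimal $l$ depends on $n$. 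I will verify that the collision bound of Lemma \ref{compleanni} and the boundary-measure bound do not depend on $l$; this is clear from their proofs.

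Next I record the properties of $g(\eta):=p(\infty,m,\eta)$. By \eqref{sommasemplice}, $g$ is a polynomial, hence continuous on $[0,1/2]$. It satisfies $g(0)=0$, and by the Remark on boundary behavior, $g(1/2)\le 1/2$. One caveat: for small $m$ (e.g.\ $m\le 4$) the sum may be empty and $g\equiv 0$, so that $\mathcal{E}^*(m)=(0,1/2)$ and the claim is trivial. I will therefore assume $\sup g = g^* > 0$. Before using the endpoint value at $1/2$, I must check that $g^* > g(1/2)$, i.e.\ that the supremum is attained in the interior. If instead the maximum were approached only at the endpoint, $\mathcal{E}^*(m)$ could be empty. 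I expect this boundary issue to be the main obstacle. The first thing to try is to show directly from \eqref{sommasemplice} that $g$ is attained strictly inside $(0,1/2)$ whenever $g^*>0$: near $1/2$ the factor $(1-2\eta)^{m-s-t}$ forces only the terms $s+t=m$ to survive, and these give at most the probability that a symmetric binomial is below $m/2$, which is at most $1/2$. So it suffices to exhibit an interior $\eta$ with $g(\eta)>1/2$, or else to argue that $g$ is decreasing near $1/2$. If this fails, I would restate the conclusion in terms of the closed argmax set over $[0,1/2]$.

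Granting that the maximizers form a compact set $\mathcal{E}^*\subset(0,1/2)$, the argument runs by contradiction. Suppose there exist $\varepsilon>0$ and a subsequence along which $\eta_n := l_{opt}(n,m)/n$ satisfies $\operatorname{dist}(\eta_n,\mathcal{E}^*)\ge\varepsilon$. By compactness of $\{\eta\in[0,1/2]:\operatorname{dist}(\eta,\mathcal{E}^*)\ge\varepsilon\}$ and continuity of $g$, we get $g(\eta_n)\le g^*-\gamma$ for some $\gamma>0$. On the other hand, pick $\eta^*\in\mathcal{E}^*$ and set $l_n=\lfloor \eta^* n\rfloor$, which is admissible for large $n$. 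Continuity gives $g(l_n/n)\to g^*$. Combining these with the uniform bound,
\begin{equation*}
p_1(n,m,l_{opt}) \le g(\eta_n)+\delta_n \le g^*-\gamma+\delta_n < g(l_n/n)-\delta_n \le p_1(n,m,l_n)
\end{equation*}
for large $n$, which contradicts the optimality of $l_{opt}$. The strong-convergence statement follows immediately when $\mathcal{E}^*(m)=\{\eta_{opt}(m)\}$, since the distance to the set is then just $|\eta_n-\eta_{opt}(m)|$.
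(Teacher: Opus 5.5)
Your argument is the paper's. You use the same $l$-uniform coupling bound $|p_1(n,m,l)-p(\infty,m,l/n)|\le 2m^2/n+6m/n$, and you compare $l_{opt}(n,m)$ with $\lfloor \eta^* n\rfloor$ in the same way. The only difference is packaging. You derive a contradiction from a gap $\gamma>0$ on $\{\operatorname{dist}(\cdot,\mathcal{E}^*)\ge\varepsilon\}$, while the paper takes a Bolzano--Weierstrass limit point $\bar\eta$ and shows $p(\infty,m,\bar\eta)$ equals the maximum. The two are equivalent.

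The endpoint issue you flag is a real gap, and the paper's proof skips it too. The paper picks $\eta^*\in\mathcal{E}^*(m)$ without checking that this set is nonempty. It also concludes $\bar\eta\in\mathcal{E}^*(m)$ for a limit point $\bar\eta\in[0,1/2]$ without excluding $\bar\eta=1/2$.

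Your first proposed fix, exhibiting an interior $\eta$ with $g(\eta)>1/2$, fails for small $m$. For $m=5$ one has $g(\eta)=30\eta^4-50\eta^5$, whose maximum is $g(0.48)\approx 0.3185$.

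Your second route, showing $g$ is decreasing near $1/2$, works:
\begin{itemize}
\item At $\eta=1/2$ only the terms with $m-s-t\in\{0,1\}$ contribute to $g'(1/2)$.
\item For $m\ge 5$ every $s\le\lfloor (m-1)/2\rfloor$ satisfies $m-1-s\ge 2$. Using $\frac{m!}{s!(m-1-s)!}=m\binom{m-1}{s}$ and Pascal's rule $\binom{m}{s}-2\binom{m-1}{s}=\binom{m-1}{s-1}-\binom{m-1}{s}$, the sum telescopes to $g'(1/2)=m\,2^{1-m}\bigl[(m-1)-\binom{m-1}{\lfloor (m-1)/2\rfloor}\bigr]$.
\item This is negative, because $\binom{m-1}{\lfloor (m-1)/2\rfloor}\ge\binom{m-1}{2}>m-1$ when $m\ge 5$.
\end{itemize}
Hence $\max_{[0,1/2]}g>g(1/2)\ge 0=g(0)$. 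So $\mathcal{E}^*(m)$ is a nonempty compact subset of $(0,1/2)$, and your contradiction argument (and the paper's) goes through. For $m\le 4$ you are right that $g\equiv 0$ and the statement is trivial.
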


\begin{proof}
By the coupling argument established in Theorem \ref{T3}, for a fixed number of voters $m$, the absolute difference between the discrete and continuous probabilities satisfies:
$$
\left| p_1(n, m, l) - p\left(\infty, m, \frac{l}{n}\right) \right| \leq \frac{2m^2}{n} + \frac{6m}{n}.
$$
Crucially, this upper bound depends only on $m$ and $n$, and is completely independent of $l$. Therefore, as $n \to \infty$, the discrete probability uniformly converges to the continuous probability across the entire domain $l/n \in \left[0, \frac{1}{2}\right]$. 

Let $x_n = \frac{l_{opt}(n, m)}{n}$ be the sequence of normalized optimal discrete widths, where we suppress the explicit dependence on the fixed parameter $m$ for notational simplicity. Since $x_n \in \left[0, \frac{1}{2}\right]$ for all $n$, by the Bolzano-Weierstrass theorem, the sequence must have at least one limit point. 

Let $\bar{\eta}$ be any limit point of the sequence $(x_n)_{n \in \mathbb{N}}$, and let $(x_{n_k})_{k \in \mathbb{N}}$ be a subsequence converging to $\bar{\eta}$. We want to show that $\bar{\eta} \in \mathcal{E}^*(m)$. 
Let $\eta^* \in \mathcal{E}^*(m)$ be a true continuous global maximizer. By the definition of $l_{opt}$ as the discrete maximizer, for any $k$ we have the inequality: 
$$
p_1(n_k, m, x_{n_k} n_k) \geq p_1(n_k, m, \lfloor \eta^* n_k \rfloor).
$$
We now take the limit as $k \to \infty$ on both sides. For the right-hand side, since $\frac{\lfloor \eta^* n_k \rfloor}{n_k} \to \eta^*$, the uniform convergence yields $\lim_{k \to \infty} p_1(n_k, m, \lfloor \eta^* n_k \rfloor) = p(\infty, m, \eta^*)$. 
For the left-hand side, by the uniform convergence and the continuity of the continuous probability function $p(\infty, m, \cdot)$, we obtain $\lim_{k \to \infty} p_1(n_k, m, x_{n_k} n_k) = p(\infty, m, \bar{\eta})$. 

Preserving the inequality in the limit yields:
$$
p(\infty, m, \bar{\eta}) \geq p(\infty, m, \eta^*).
$$
Since $\eta^*$ is, by definition, a global maximum of $p(\infty, m, \cdot)$, it must also hold that $p(\infty, m, \eta^*) \geq p(\infty, m, \bar{\eta})$. Therefore, $p(\infty, m, \bar{\eta}) = p(\infty, m, \eta^*)$, which implies $\bar{\eta} \in \mathcal{E}^*(m)$.
Thus, all limit points of $x_n$ belong to the optimal set $\mathcal{E}^*(m)$, proving the first statement.

The second statement follows immediately from the first: if $\mathcal{E}^*(m)$ is a singleton $\{\eta_{opt}(m)\}$, the infimum simplifies to $\left| x_n - \eta_{opt}(m) \right|$. 
Since the limit of this distance is $0$, the sequence strongly converges to $\eta_{opt}(m)$.
\end{proof}

\begin{rem} \label{rem:unique_max}
While an analytical proof of the uniqueness of the global maximum $\eta_{opt}(m)$ for the polynomial $p(\infty, m, \eta)$ remains elusive due to the algebraic complexity of the multinomial sums, numerical evaluations consistently demonstrate a single, well-defined peak in the interval $\left(0, \frac{1}{2}\right)$ for all tested values of $m$. Consequently, assuming this uniqueness, the sequence of discrete optimal cluster proportions converges to this unique continuous maximum.
\end{rem}

\subsection{Asymptotic Behavior of the Universal Victory Event}

The results established so far characterize the asymptotic behavior of $p_1(n, m, l)$, the success probability for a specific target candidate under a fixed initial partition. However, as introduced in Section \ref{main}, a significantly stronger condition from a mechanism design perspective is the \textit{Universal Victory Event}, denoted as $F^{(2)}_{n,m,l}$. 

Recall that $F^{(2)}_{n,m,l} = \bigcap_{i=1}^n F_{i,n,m,l}$ represents the event where a single, fixed voting profile $R \in O_{m,n}$ guarantees that \textit{every} candidate $i \in [n]$ can be made to win, provided the initial sets $A^{(i,n,l)}$ and $B^{(i,n,l)}$ are chosen according to the corresponding rotational shift. 

To extend our asymptotic analysis to this universal event, we first define its continuous counterpart. Let $\theta \in [0, 1)$ represent an arbitrary continuous rotational shift of the cluster. The continuous universal event requires the victory conditions to hold simultaneously for all possible rotations. Let $F_{\theta, \infty, m, \eta}$ be the continuous victory event when the target candidate is shifted by $\theta$. We define $F^{(2)}_{\infty, m, \eta}$ as:
\begin{equation}\label{F(2)}
F^{(2)}_{\infty, m, \eta} := \bigcap_{\theta \in [0, 1)} F_{\theta, \infty, m, \eta} = \bigcap_{\theta \in [0, 1)} \left\{ N_{(\theta, \theta+\eta)} \ge 2, \quad N_{(\theta-\eta, \theta)} \ge 2, \quad \text{and} \quad
 N_{(\theta, \theta+\eta)} < \frac{m}{2} \right\}.
\end{equation}
Notice that by imposing $N_{(\theta, \theta+\eta)} \ge 2$ for all $\theta \in [0,1)$, the requirement for the adjacent backward interval is inherently satisfied by symmetry. This allows us to express the continuous universal event elegantly in terms of the global minimum and maximum voter counts over any sliding window of length $\eta$
\begin{equation} \label{eq:F2_continuous_simplified}
F^{(2)}_{\infty, m, \eta} = \left\{ \min_{\theta \in [0, 1)} N_{(\theta, \theta+\eta)} \ge 2 \quad \text{and} \quad \max_{\theta \in [0, 1)} N_{(\theta, \theta+\eta)} < \frac{m}{2} \right\}.
\end{equation}

Since the universal event $F^{(2)}$ implies the standard resolution event $F_1$, it imposes stricter structural constraints on the uniformity of the underlying voter distribution. Nevertheless, the probability of this universal event exhibits similar asymptotic convergence properties.

\begin{rem}[Universal Victory in Randomized Mechanisms] \label{rem:debaets_desantis}
It is worth noting that universal asymptotic behaviors of a similar nature have been investigated in other randomized voting frameworks. For instance, in the context of single-elimination tournaments (see De Baets and De Santis \cite{desantisdebaets}), it has been shown that a structural manipulation of the tournament bracket can guarantee the victory of \textit{any} target candidate, provided the number of voters $m$ grows at a specific rate relative to the number of candidates $n$ (specifically, requiring $m$ to grow as $(\log n)^3$). While the underlying mechanics and the structural rules of our spatial model differ, both approaches formally illustrate how a sufficiently balanced voter distribution allows the mechanism designer to universally determine the outcome by adjusting the initial configuration (either the bracket seeding or the geometric partition) as the electorate size expands.
\end{rem}

\begin{proposition}[Asymptotic Behavior of the Universal Event] \label{prop:robust_victory}
Under the same assumptions as Theorem \ref{T3} and Corollary \ref{Cor:uniform}, the probability of the universal event satisfies:
\begin{enumerate}
    \item[(i)] \textit{Pathwise discrete-continuous coupling:} For any fixed $m$, $\lim_{n \to \infty} p^{(2)}(n, m, l_n) = p^{(2)}(\infty, m, \eta)$.
    \item[(ii)] \textit{Universal sure victory:} $\lim_{m \to \infty} p^{(2)}(\infty, m, \eta) = 1$. Consequently, as $m \to \infty$, the discrete universal probability converges to $1$.
\end{enumerate}
\end{proposition}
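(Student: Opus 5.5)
For part (i) I would reuse the coupling of Theorem \ref{T3}: take i.i.d.\ $U_1,\dots,U_m$ uniform on $[0,1)$ and set $s^{(n)}_j=\lceil nU_j\rceil$. Then all $n$ rotated discrete events $F_{i,n,m,l_n}$ are defined on the same sample. By the rotation \eqref{defAi}, $F_{i,n,m,l_n}$ is the discrete version of $F_{\theta,\infty,m,\eta}$ with $\theta=(i-1)/n$. Since $m$ is fixed, the configuration $\{U_j\}$ has only finitely many points. As $\theta$ varies, the window counts $N_{(\theta,\theta+\eta)}$ are piecewise constant, and they change only when $\theta$ or $\theta\pm\eta$ crosses some $U_j$.

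The key step is to define a ``good'' event $G_n$ on which the discrete universal event and the continuous one coincide. On $G_n$ I require, first, that there be no collisions for any rotation. This is a rotation-uniform version of Lemma \ref{compleanni}: the condition is that all the points $nU_j$ lie in distinct cells and also in non-adjacent cells. Its failure probability is at most $2m^2/n$ by the same pair union bound, because the pair bound does not depend on which rotation is used. Second, I require that all pairwise gaps $|U_j-U_k|$ and $|U_j-U_k\pm\eta|$ (mod 1) exceed $3\epsilon_n$, where $\epsilon_n=|l_n/n-\eta|+1/n$. There are $O(m^2)$ such constraints, and each fails with probability $O(\epsilon_n)$, so $\mathsf P(G_n^c)=O(m^2/n+m^2\epsilon_n)\to0$.

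On $G_n$, every combinatorial configuration of the window counts realized by a continuous $\theta$ lies on an interval of $\theta$-length greater than $3\epsilon_n$. Such an interval therefore contains a grid point $(i-1)/n$, and there the discrete and continuous classifications agree, exactly as in the argument of Theorem \ref{T3}. Conversely, each grid rotation is within $\epsilon_n$ of a continuous rotation with the same counts. Hence $F^{(2)}_{n,m,l_n}\,\triangle\,F^{(2)}_{\infty,m,\eta}\subset G_n^c$, and (i) follows. I expect this step to be the main obstacle. The difficulty is that the intersection runs over $n$ rotations (respectively uncountably many $\theta$), so the per-candidate union bound $6m\epsilon_n$ used for a single target cannot simply be summed. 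The argument must instead exploit the fact that the events are determined by finitely many crossing points.

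For part (ii) I would use the simplified form \eqref{eq:F2_continuous_simplified} and control the sliding-window extremes by a finite net. Fix $\delta>0$ with $\eta-\delta>0$ and $\eta+\delta<1/2$, and consider the grid $\theta_r=r\delta$ for $r<1/\delta$. Every window $(\theta,\theta+\eta)$ contains some $(\theta_r,\theta_r+\eta-\delta)$ and is contained in some $(\theta_r,\theta_r+\eta+\delta)$. It follows that $\min_\theta N\ge\min_r N_{(\theta_r,\theta_r+\eta-\delta)}$ and $\max_\theta N\le \max_r N_{(\theta_r,\theta_r+\eta+\delta)}$. Each of these finitely many counts is Binomial$(m,\eta\mp\delta)$. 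By Lemma \ref{chernoff_bound}, or the plain law of large numbers, the probabilities $\mathsf P(N<2)$ and $\mathsf P(N\ge m/2)$ both decay exponentially in $m$. A union bound over $O(1/\delta)$ windows then gives $p^{(2)}(\infty,m,\eta)\to1$.

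For the discrete consequence, in the regime $m\gg\log n$ I would invoke directly \eqref{tutti} together with \eqref{diseq_1}, which gives $1-\mathsf P(F^{(2)})\le n(n-1)e^{mc^*}\to0$ since $\log n/m\to0$. In the regime $m=o(\sqrt n)$, I would combine the coupling of part (i), extended to growing $m$ with $m^2\epsilon_n\to0$ as in Theorem \ref{T3}(ii), with the continuous limit just proved. Finally, I would split the sequence into the two regimes as in Corollary \ref{Cor:uniform}.
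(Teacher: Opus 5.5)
Your proposal is correct. For (i) it uses the same mechanism as the paper: once the voters $U_j$ and the shifted points $U_j-\eta$ are pairwise well separated on the scale $\epsilon_n$, and collisions are excluded, the grid of rotations cannot ``jump over'' a voter. Hence $F^{(2)}_{n,m,l_n}$ and $F^{(2)}_{\infty,m,\eta}$ coincide.

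\emph{Part (i).} The paper argues pathwise. Almost surely the minimal gap $\delta$ is positive, so eventually $\epsilon_n<\delta/2$ and the two indicators agree, with bounded convergence left implicit. This suffices for fixed $m$ but gives no rate. Your bad event $G_n$, with $\mathsf P(G_n^c)=O(m^2/n+m^2\epsilon_n)$, is the quantitative version of the same idea, and it is what lets you treat growing $m$.

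\emph{Part (ii).} Here the routes genuinely differ. The paper invokes Glivenko--Cantelli for $\sup_\theta|N_{(\theta,\theta+\eta)}/m-\eta|\to0$. Your finite net of shrunken and enlarged windows, combined with binomial tail bounds, is more elementary. It also gives exponential decay in $m$, uniformly for $\eta$ in compact subsets of $(0,1/2)$.

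\emph{Discrete consequence.} The paper essentially only asserts it. Your combination of \eqref{tutti} and \eqref{diseq_1}, giving $1-\mathsf P(F^{(2)})\le n(n-1)e^{mc^*}$, together with the coupling, actually proves it.

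Two details need tightening.

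First, when you enumerate breakpoints, work with the single family $\theta\mapsto N_{(\theta,\theta+\eta)}$, as in \eqref{eq:F2_continuous_simplified}, or treat the min and max conditions separately. Its breakpoints are $U_j$ and $U_j-\eta$, and their mutual gaps are exactly the ones $G_n$ controls. Your sentence about ``$\theta\pm\eta$'' suggests tracking forward and backward windows jointly. That introduces gaps $|U_j-U_k\pm2\eta|$, which $G_n$ does not control.

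Second, the universal coupling costs $m^2\epsilon_n$ rather than the $m\epsilon_n$ of a single target. The hypotheses of Theorem~\ref{T3} alone do not give $m^2\epsilon_n\to0$ on all of $m=o(\sqrt n)$: take $m=n^{0.3}$ and $|l_n/n-\eta|=n^{-0.55}$. So a split at $n^{1/3}$, as in Corollary~\ref{Cor:uniform}, combined with coupling at a fixed $\eta$ would not close. Either of the following repairs works:
\begin{enumerate}
\item Place the split at $m=K\log n$ with $K|c^*|>2$. On the small side, $m^2\epsilon_n\to0$ is then automatic, and on the large side the union bound still works.
\item As in Corollary~\ref{Cor:uniform}, couple at $\eta_n=l_n/n$, so that $\epsilon_n=1/n$, and use the uniformity in $\eta$ of your net argument.
\end{enumerate}
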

\begin{proof}
We highlight the differences from the proofs of Theorem \ref{T3} and Corollary \ref{Cor:uniform}, focusing on the uniformity over the rotational shift $\theta$.

\textit{Proof of (i).} In the fixed $m$ regime, the continuous-to-discrete coupling must hold simultaneously for all $\theta \in [0,1)$. Since the $m$ voters $U_1, \dots, U_m$ are i.i.d. uniform random variables on the unit circle, almost surely no two voters coincide, and no two voters are separated by an arc of exact length $\eta$ or $1-\eta$. Let $d(x,y) = \min(|x-y|, 1-|x-y|)$ denote the circular distance. We can define a minimum strictly positive gap $\delta > 0$ between any voter $U_i$ and the critical boundaries generated by any other voter $U_j$:
$$
\delta = \min_{i \neq j} \min \Big\{ d(U_i, U_j), \big| d(U_i, U_j) - \eta \big| \Big\} > 0 \quad \text{a.s.}
$$
The maximum boundary shift between the discrete and continuous grids is $\epsilon_n = \left| \frac{l_n}{n} - \eta \right| + \frac{1}{n}$, which vanishes as $n \to \infty$. 
Thus, there almost surely exists an $N_0$ such that for all $n > N_0$, $\epsilon_n < \delta/2$. Once this threshold is crossed, the discrete grid resolution is fine enough that the discrete and continuous boundaries cannot "jump over" any voter $U_j$, regardless of the rotation $\theta$. This implies that the continuous and discrete universal events coincide almost surely (pathwise) for sufficiently large $n$, proving the first statement.

\textit{Proof of (ii).} As $m \to \infty$, we evaluate the continuous event. By the Glivenko-Cantelli theorem, the empirical cumulative distribution function of the uniform voters converges uniformly to the true distribution (which is $F(x)=x$ for the uniform distribution on $[0,1)$). This implies that the empirical measure of \textit{any} arc of length $\eta$ converges uniformly to $\eta$:
$$
\lim_{m \to \infty} \sup_{\theta \in [0, 1)} \left| \frac{1}{m} N_{(\theta, \theta+\eta)} - \eta \right| = 0 \quad \text{a.s.}
$$
Since $\eta \in \left(0, \frac{1}{2}\right)$, the asymptotic maximum concentration in any such interval is strictly bounded away from $1/2$ (specifically, $m\eta < m/2$) and strictly bounded away from $0$. 
Thus, for sufficiently large $m$, the interval count $N_{(\theta, \theta+\eta)}$ will almost surely concentrate around $m\eta$. Because $m\eta$ eventually strictly exceeds $2$ and remains strictly below $m/2$, the conditions $\min_\theta N_{(\theta, \theta+\eta)} \ge 2$ and $\max_\theta N_{(\theta, \theta+\eta)} < \frac{m}{2}$ are simultaneously satisfied for all $\theta \in [0,1)$ with probability approaching $1$ as $m \to \infty$. 
\end{proof}

\begin{corollary}[Asymptotic Optimality for the Universal Event] \label{cor:ottimo_universale}
Let $m \in \mathbb{N}$ be a fixed number of voters. Let $\mathcal{E}^{(2)}_{*}(m) \subset \left(0, \frac{1}{2}\right)$ be the set of global maximizers for the continuous universal probability $p^{(2)}(\infty, m, \eta)$. For each $n \in \mathbb{N}$, let $l^{(2)}_{opt}(n, m) \in \underset{l}{\arg\max} \, p^{(2)}(n, m, l)$ be an optimal discrete cluster width for the universal event. 
Then, as $n \to \infty$, the sequence of normalized optimal widths approaches the set of continuous maximizers:
$$
\lim_{n \to \infty} \inf_{\eta^* \in \mathcal{E}^{(2)}_{*}(m)} \left| \frac{l^{(2)}_{opt}(n, m)}{n} - \eta^* \right| = 0.
$$
\end{corollary}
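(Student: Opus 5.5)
The plan is to mirror the proof of Theorem \ref{Teo:ottimo_continuo}, with $p_1$ replaced by $p^{(2)}$. The one new ingredient is a coupling bound for the universal event that is \emph{uniform in $l$}.

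\textbf{Step 1: a uniform coupling bound.} Use the coupling $s^{(n)}_j=\lceil nU_j\rceil$ and set $\eta=l/n$, so the boundary shift vanishes. The discrete universal event $F^{(2)}_{n,m,l}$ and the continuous event $F^{(2)}_{\infty,m,l/n}$ can then differ only on two kinds of events. The first is a collision: two voters share a seed, or two seeds are adjacent on the cycle. This has probability at most $C m^2/n$ by the birthday-type argument of Lemma \ref{compleanni}, now applied to all rotations at once, which is harmless since collisions are defined on the seeds and not on $A$. The second is a near-tie: some pair $U_i,U_j$ has circular distance within $O(1/n)$ of $0$ or of $\eta$, or within $O(1/n)$ of the discrete windows that appear when $\theta$ ranges over the $n$ rotations $(i-1)/n$ and not over all of $[0,1)$. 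For each pair this has probability $O(1/n)$, so by a union bound over $\binom m2$ pairs
$$\bigl|p^{(2)}(n,m,l)-p^{(2)}(\infty,m,l/n)\bigr|\le \frac{C(m)}{n},$$
with $C(m)$ independent of $l$.

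The point that needs care here is that the continuous event takes the infimum over all $\theta\in[0,1)$, while the discrete event only uses $n$ rotations. Off the near-tie event, every sliding-window count $N_{(\theta,\theta+\eta)}$ is piecewise constant in $\theta$, with jumps only where $\theta$ or $\theta+\eta$ crosses some $U_j$. Off the near-tie event these jump points are separated by more than $1/n$, so every value of the count function is attained at a grid rotation. Hence the extrema over $\theta$ in \eqref{eq:F2_continuous_simplified} coincide with the extrema over the grid. This is the quantitative form of part (i) of Proposition \ref{prop:robust_victory}.

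\textbf{Step 2: continuity of the limit.} Show that $\eta\mapsto p^{(2)}(\infty,m,\eta)$ is continuous on $(0,1/2)$. Given $\eta_k\to\bar\eta$, almost surely no pair of voters is at distance exactly $\bar\eta$. So for $k$ large the window-count functions at $\eta_k$ and $\bar\eta$ have the same range, which makes the indicator of $F^{(2)}_{\infty,m,\eta_k}$ converge almost surely. Dominated convergence then gives continuity.

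\textbf{Step 3: compactness.} Set $x_n=l^{(2)}_{opt}(n,m)/n\in[0,1/2]$. Take any limit point $\bar\eta$ along a subsequence $n_k$ and any $\eta^*\in\mathcal E^{(2)}_*(m)$. Optimality gives
$$p^{(2)}(n_k,m,x_{n_k}n_k)\ge p^{(2)}(n_k,m,\lfloor\eta^*n_k\rfloor).$$
Pass to the limit using Steps 1 and 2 to obtain $p^{(2)}(\infty,m,\bar\eta)\ge p^{(2)}(\infty,m,\eta^*)$, so $\bar\eta$ is a maximizer. Since every limit point lies in $\mathcal E^{(2)}_*(m)$ and the sequence is bounded, the infimum distance tends to $0$.

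\textbf{Boundary cases.} Limit points $\bar\eta\in\{0,1/2\}$ must be excluded, because $\mathcal E^{(2)}_*(m)\subset(0,1/2)$. At $\eta\to0$ the event $\min_\theta N\ge2$ fails, so the limsup of $p^{(2)}$ is $0$. As $\eta\to1/2$ the universal event forces $N<m/2$ for complementary windows, which again has limsup probability $0$ by the argument in the Remark. Both limits fall below the maximum, provided the maximum is positive, which holds once $m$ is large enough that $F^{(2)}$ is possible. If the maximum is $0$ every $\eta$ maximizes and the statement is trivial (up to the endpoint convention).

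I expect the main obstacle to be Step 1, in particular showing that the grid of $n$ rotations faithfully captures the supremum and infimum over the continuum of $\theta$, uniformly in $l$.
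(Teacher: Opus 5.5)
Your proposal is correct and takes the same route as the paper: a discrete--continuous coupling bound for the universal event that is uniform in $l$, followed by the Bolzano--Weierstrass argument of Theorem~\ref{Teo:ottimo_continuo}. The paper's own proof only cites the pathwise coupling of Proposition~\ref{prop:robust_victory} for this uniformity; your quantitative $C(m)/n$ estimate, the continuity of $\eta\mapsto p^{(2)}(\infty,m,\eta)$, and the exclusion of limit points at $0$ and $1/2$ supply details that the paper leaves implicit.
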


\begin{proof}
The result follows directly from the same arguments used in the proof of Theorem \ref{Teo:ottimo_continuo}. The pathwise coupling established in Proposition \ref{prop:robust_victory} ensures that the absolute difference between the discrete and continuous universal probabilities is uniformly bounded by a term that vanishes as $n \to \infty$, independently of the cluster width. Therefore, applying the exact same Bolzano-Weierstrass argument to the sequence of discrete universal maximizers yields the claim.
\end{proof}

\begin{rem}[On Uniqueness and Analyticity]
The assumption of a unique maximizer $\eta_{opt}(m)$ is supported by both the algebraic structure of the problem and numerical evidence. Since the probability of victory is given by a non-constant polynomial, its maxima are isolated, precluding any ``flat'' regions. Furthermore, numerical evaluations for $m \in \{11, \dots, 101\}$ consistently exhibit a strictly unimodal profile, with a single well-defined peak shifting monotonically towards $\frac{1}{5}$ as $m$ increases. This regularity ensures the well-behaved convergence of the normalized discrete optima $\frac{l_{opt}(n,m)}{n}$, confirming that the ratio $\frac{1}{5}$ is an intrinsic geometric property of the two-round cyclic mechanism, dictating the optimal strategy even for small electorates.
\end{rem}

\section{Numerical Analysis of the Target Candidate's Victory}
\label{computational}

In this section, we present a numerical analysis to validate the theoretical framework developed previously. Specifically, we investigate the winning probability of a designated candidate, $p_1(n, m, l)$, to determine the optimal strategic block size $l_{opt}(n, m)$ that maximizes this probability for a given number of candidates $n$ and voters $m$.

Our investigation covers a parameter grid where $n \in \{30, 34, \dots, 100\}$ and $m \in \{21, 25, \dots, 101\}$. We restricted the electorate size $m$ strictly to odd integers to eliminate structural parity effects; an even number of voters increases the probability of a tie, which in our model results in the defeat of all candidates, thereby artificially deflating the winning probabilities and introducing oscillations in the trends.

Preliminary exploration revealed a critical property: for any fixed $n$ and $m$, $p_1(n, m, l)$ is strictly unimodal with respect to the partition parameter $l$. To efficiently pinpoint this maximum across vastly different scales, we implemented a unified geometric approach based on the Wilson score interval: the \textit{Wilson Centroid Method}.
While for small $m$ the probability distribution $\hat{p}_1(l)$ exhibits a well-defined peak, as $m$ increases, the system undergoes a phase transition where the peak saturates into a wide "Wilson Plateau" of statistical certainty ($\hat{p}_1 \approx 1$). Instead of employing different heuristics for different regimes, the algorithm dynamically identifies the optimum using a single universal principle:

\begin{enumerate}
    \item \textit{High-Resolution Scan:} We evaluate $\hat{p}_1(n, m, l)$ across a fine grid to identify the empirical maximum probability $\hat{p}_{max}$.
    
    \item \textit{Wilson Boundary Extraction (Edge-Tracking):} Using a 95\% Wilson lower confidence bound associated with $\hat{p}_{max}$, the algorithm tracks the left and right boundaries ($l_{left}, l_{right}$) of the region that is statistically indistinguishable from the maximum. For small $m$, this region forms a tight confidence interval around the peak; for large $m$, it perfectly maps the saturated plateau.
    
    \item \textit{Geometric Centroid:} The exact discrete optimum is robustly estimated as the geometric midpoint of this stability region, maximizing the distance from failure on both sides: 
    \begin{equation}
        l_{opt} = \left\lfloor \frac{l_{left} + l_{right}}{2} \right\rceil
    \end{equation}
\end{enumerate}

To minimize variance when comparing different cluster sizes, the preliminary stages utilized Common Random Numbers (CRN), applying identical random seeds across all $l$ values for a fixed $(n, m)$ pair.
To prevent selection bias, the optimization concludes with a high-precision validation phase: an independent set of $100{,}000$ simulations is generated to evaluate the candidates and pinpoint $l_{opt}(n, m)$, ensuring an unbiased estimate of the true probability. \par
Figure \ref{fig:hybrid_regimes} illustrates this methodology applied to discrete simulation data. It contrasts a peaked regime at $m=20$ — where the plateau reduces to a narrow confidence interval near the vertex of a pseudo-parabolic peak — with a saturated regime at $m=100$, characterized by a vast stability region. Crucially, the Wilson score interval serves a dual purpose: it is actively employed within the algorithm to dynamically track the boundaries of the optimum, and it is subsequently used to establish a rigorous lower bound for the final error estimation. 

Figure \ref{fig:heatmap_prob} maps the resulting macroscopic dynamics across the parameter space, revealing two primary trends. First, for a fixed candidate pool $n$, increasing turnout $m$ drives a rapid, monotonic convergence of $p_1$ toward absolute certainty. As the voting population grows, the structural advantage of the strategic block overwhelmingly suppresses stochastic noise; for $m \ge 41$ and $n \ge 52$, the winning probability systematically exceeds $0.99$. Second, expanding the candidate pool $n$ exerts a positive, albeit more gradual, effect on $p_1$ by fragmenting non-aligned votes, thereby lowering the threshold required to survive the first round.

\begin{figure}[htbp]
    \centering
    \begin{minipage}{0.48\textwidth}
        \centering
        \includegraphics[width=\linewidth]{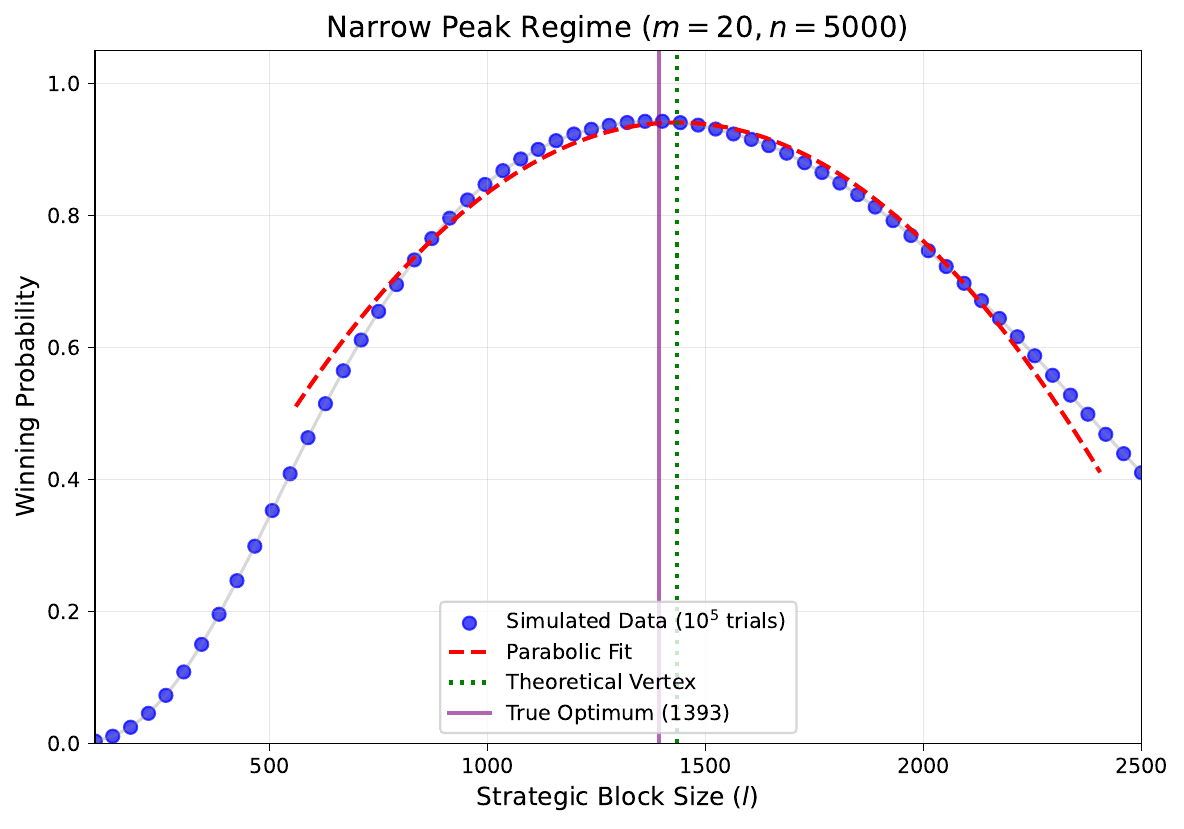}
    \end{minipage}\hfill
    \begin{minipage}{0.48\textwidth}
        \centering
        \includegraphics[width=\linewidth]{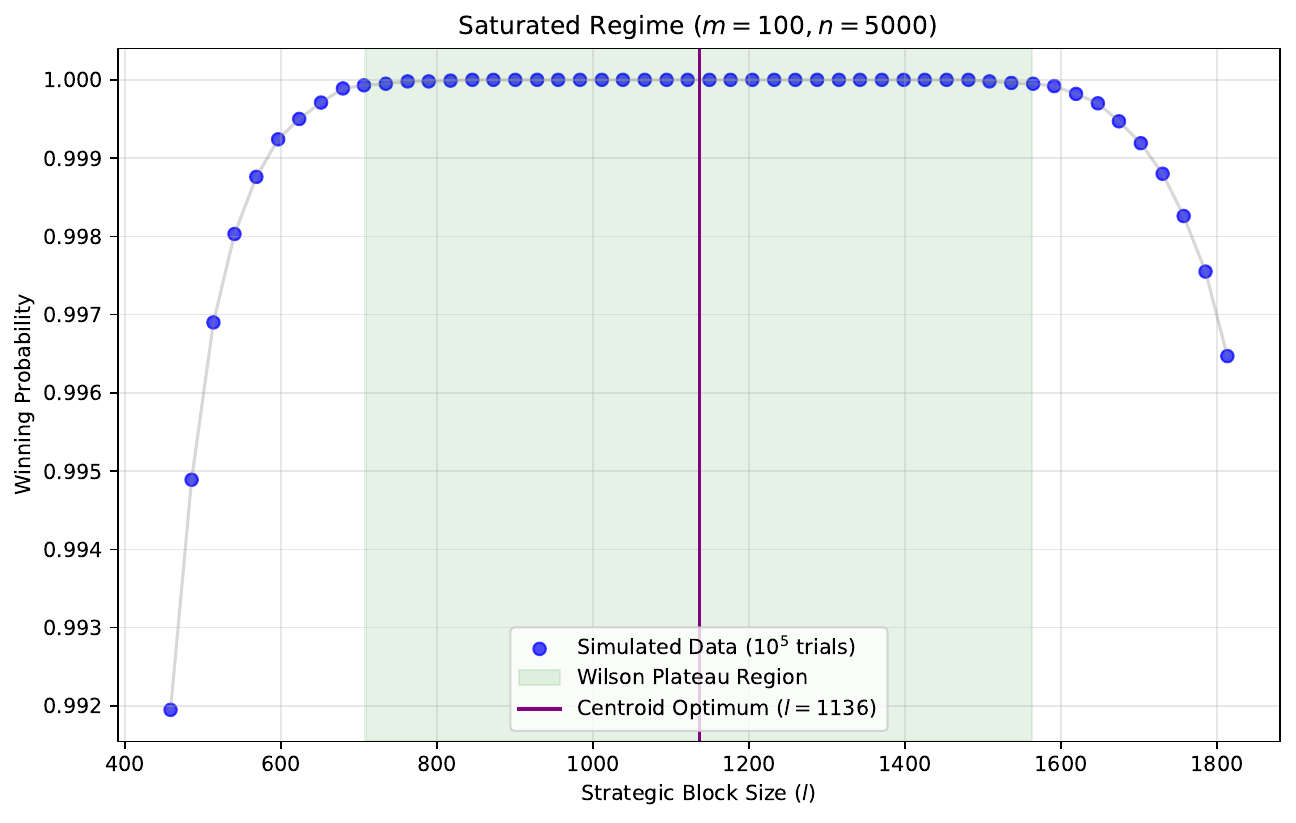}
    \end{minipage}
    \caption{Comparison of optimization regimes for $n=5000$. \textit{Left:} Peaked regime at $m=20$, where the probability follows a parabolic trend and the Wilson boundaries form a tight interval around the theoretical vertex. \textit{Right:} Saturated regime at $m=100$, where the emergence of the ``Wilson Plateau" expands the stability region across hundreds of values. In both cases, the optimal strategy $l_{opt}$ is universally identified as the centroid of this region.}
    \label{fig:hybrid_regimes}
\end{figure}

\begin{figure}[htbp]
    \centering
    \includegraphics[width=0.5\textwidth]{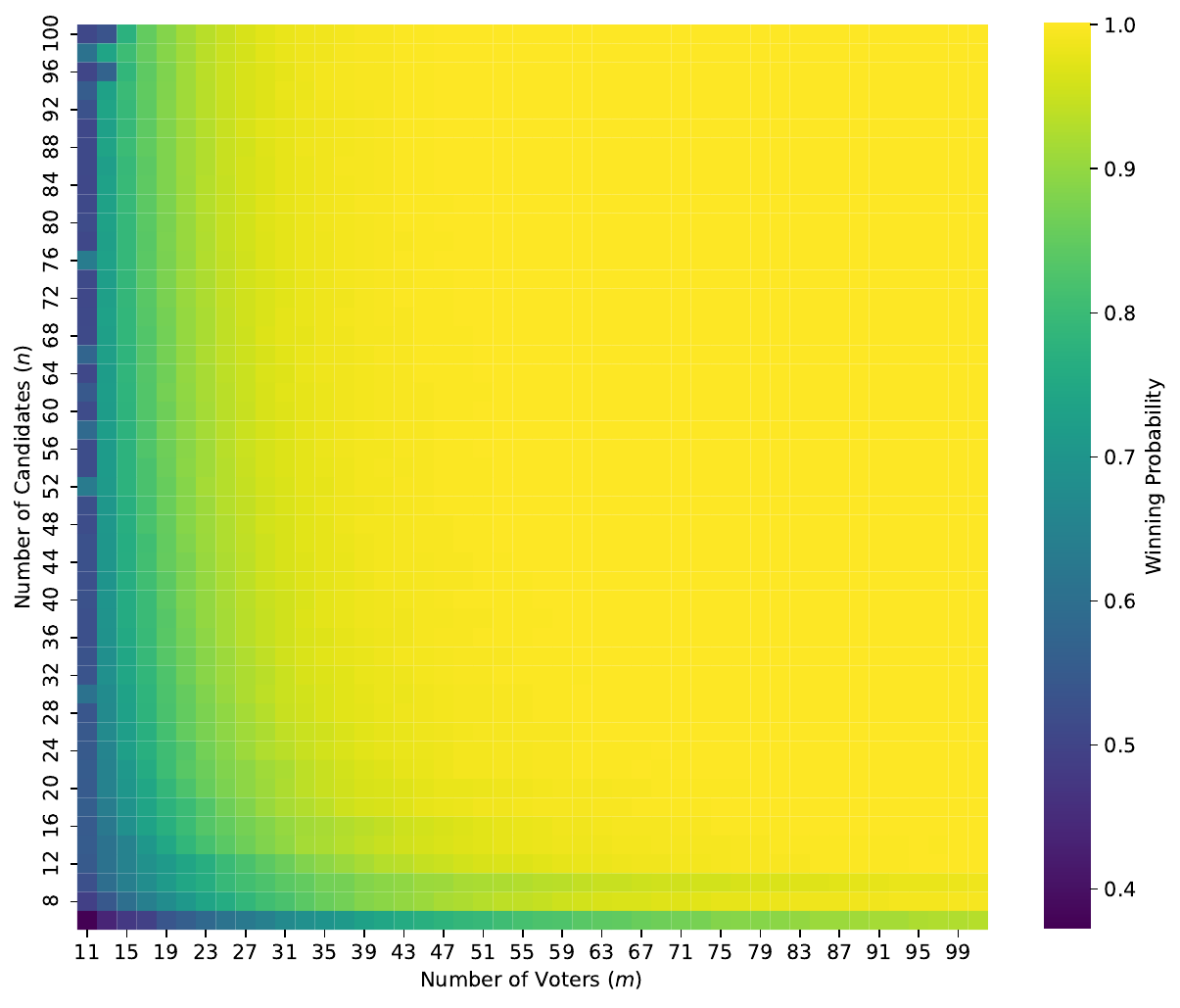}
    \caption{Heatmap of the optimized winning probability $\hat{p}_1(n, m, l_{opt}(n, m))$.}
    \label{fig:heatmap_prob}
\end{figure}

\begin{table}[htbp]
    \centering
    \renewcommand{\arraystretch}{1.2}
    \begin{tabular}{@{}ccccc@{}}
        \toprule
        \textbf{Choices} ($m$) & \textbf{Optimal Cutoff} ($l_{opt}$) & \textbf{Ratio} ($\eta = l_{opt}/n$) & \textbf{Plateau Range} & $\hat{p}_1 \pm \text{SE}_{W}$ \\
        \midrule
        \multicolumn{5}{c}{\textit{Narrow Peak Regime}} \\
        \midrule
        10 & 1698 & 0.3396 & [1618, 1777] & 0.6311 $\pm$ 0.0015 \\
        20 & 1393 & 0.2786 & [1347, 1439] & 0.9438 $\pm$ 0.0007 \\
        30 & 1297 & 0.2594 & [1240, 1354] & 0.9913 $\pm$ 0.0003 \\
        40 & 1248 & 0.2496 & [1187, 1308] & 0.9987 $\pm$ 0.0001 \\
        50 & 1213 & 0.2426 & [1123, 1303] & 0.9998 $\pm$ 0.0001 \\
        60 & 1229 & 0.2458 & [1081, 1377] & 0.9999 $\pm$ 0.0001 \\
        \midrule
        \multicolumn{5}{c}{\textit{Saturated Plateau Regime}} \\
        \midrule
        70 & 1219 & 0.2438 & [1044, 1394] & 1.0000 \\
        80 & 1140 & 0.2280 & [909, 1372] & 1.0000 \\
        90 & 1131 & 0.2262 & [801, 1461] & 1.0000 \\
        100 & 1136 & 0.2272 & [708, 1563] & 1.0000 \\
        110 & 1146 & 0.2292 & [718, 1575] & 1.0000 \\
        120 & 1111 & 0.2222 & [656, 1566] & 1.0000 \\
        130 & 1116 & 0.2232 & [582, 1650] & 1.0000 \\
        140 & 1116 & 0.2232 & [535, 1696] & 1.0000 \\
        150 & 1134 & 0.2268 & [550, 1719] & 1.0000 \\
        160 & 1104 & 0.2208 & [473, 1735] & 1.0000 \\
        \bottomrule
    \end{tabular}
    \caption{Empirical results for the continuous limit simulation with $n=5000$, based on $10^5$ independent trials per configuration. To maintain rigorous statistical consistency across all probability regimes, uncertainties ($\text{SE}_{W}$) are derived entirely from the 95\% Wilson score interval. For bounded peaks, this symmetrically aligns with standard error margins. For the saturated regime ($m \ge 70$, where $\hat{p}_1 = 1.0000$), the asymmetric Wilson interval naturally provides a strict lower confidence bound of $p_1 \ge 0.99996$, correctly modeling the exponential decay of the true failure probability without methodological switching. The uncertainty on the optimal ratio $\eta$ can be derived from the reported Plateau Range $[l_{left}, l_{right}]$ as $\Delta \eta \le (l_{right} - l_{left}) / (2n)$.}
    \label{tab:continuous_limit_results}
\end{table}

To verify the asymptotic behavior, the simulation was scaled to a candidate pool of $n = 20{,}000$ against an electorate of $m = 71$. Since increasing $n$ fragments non-aligned votes and lowers the winning threshold, this configuration yields an empirical winning probability saturated at $\hat{p}_1 = 1$ across a plateau of $l \in [4333, 4666]$. Continuing with our unified Wilson methodology, the absence of observed failures across the $10^5$ independent trials yields a 95\% Wilson upper bound for the true failure probability ($1 - p_1$) of $3.8 \times 10^{-5}$. Correspondingly, the normalized optimal block size $\eta = l_{opt}/n$ is intrinsically bounded between $0.217$ and $0.233$, confirming the convergence toward the predicted asymptotic limit.

\subsection{Continuous Limit and Empirical Validation}

To analytically ground these observations, we numerically evaluated the continuous limit established in Eq. \eqref{sommasemplice}. To prevent overflow errors from large factorials, the algorithm processes all multinomial coefficients in logarithmic space. We determined the optimal continuous density $\eta_\infty(m)$ via a grid search with step size $\Delta\eta = 0.0007$ over the interval $[0.1, 0.45]$. This bounds the numerical uncertainty on the continuous optimum to $\pm 0.00035$. By computing all combinatorial terms in logarithmic space, numerical accumulation errors in the analytic probabilities ($p_1$) are strictly negligible at the reported four-decimal precision. Table \ref{tab:comparison} compares these analytical results with our discrete simulations, providing direct empirical validation for Theorem \ref{Teo:ottimo_continuo} and Corollary \ref{Cor:uniform}.

\begin{table}[htbp]
\centering
\caption{Empirical Validation: Convergence of Discrete Optima to the Continuous Limit. For empirical probabilities reaching unity (e.g., $m=51$ at $n=10{,}000$), the absence of observed failures corresponds to a 95\% Wilson upper confidence bound of approximately $3.8 \times 10^{-5}$ for the failure rate.} 
\label{tab:comparison}
\begin{tabular}{c | cc | cc | cc}
\toprule
\multirow{2}{*}{$m$} & \multicolumn{2}{c|}{$n=98$ (Discrete)} & \multicolumn{2}{c|}{$n=10{,}000$ (Discrete)} & \multicolumn{2}{c}{Continuous Limit (Analytic)} \\
& $\eta_{98}(m)$ & $\hat{p}_1 \pm \text{SE}_W$ & $\eta_{10k}(m)$ & $\hat{p}_1 \pm \text{SE}_W$ & $\eta_\infty(m)$ & $p_1$ \\ \midrule
11 & 0.3673 & 0.6662 $\pm$ 0.0015 & 0.3488 & 0.7608 $\pm$ 0.0013 & 0.3458 & 0.7606 \\
31 & 0.3061 & 0.9787 $\pm$ 0.0005 & 0.2636 & 0.9954 $\pm$ 0.0002 & 0.2586 & 0.9959 \\
51 & 0.2755 & 0.9988 $\pm$ 0.0001 & 0.2395 & 1.0000 & 0.2385 & 0.9999 \\ \bottomrule 
\end{tabular}
\end{table}

Reading the table horizontally confirms Theorem \ref{Teo:ottimo_continuo} (Asymptotic Convergence of the Optimal Cluster): while finite-size effects exist at $n=98$, scaling the system to $n=10{,}000$ forces the empirical density $\eta_{10k}(m)$ to converge toward the continuous maximizer $\eta_\infty(m)$. Reading vertically confirms Corollary \ref{Cor:uniform} (Universal Convergence to Victory): as $m$ grows, statistical noise collapses, and the targeted coalition strategy asymptotically guarantees victory ($\lim_{m \to \infty} p_1 = 1$).

\subsection{The Law of Exponential Decay: The 4/5 Constant}

\begin{table}[htbp]
\centering
\small
\begin{tabular}{rcc|rcc}
\toprule
\textbf{$m$} & \textbf{$\eta_\infty$} & \textbf{$\log_{10}(1-p_1)$} & \textbf{$m$} & \textbf{$\eta_\infty$} & \textbf{$\log_{10}(1-p_1)$} \\ \midrule
11 & 0.3455 & $-0.62$ & 41 & 0.2460 & $-3.31$ \\
15 & 0.3100 & $-0.95$ & 45 & 0.2425 & $-3.69$ \\
21 & 0.2815 & $-1.48$ & 51 & 0.2385 & $-4.25$ \\
25 & 0.2700 & $-1.84$ & 55 & 0.2360 & $-4.63$ \\
31 & 0.2585 & $-2.39$ & 61 & 0.2330 & $-5.19$ \\
35 & 0.2525 & $-2.76$ & 69 & 0.2300 & $-5.95$ \\ \bottomrule
\end{tabular}
\caption{Exponential decay of risk in the continuous model. The failure probability scales linearly on the logarithmic axis, while $\eta_\infty$ approaches the theoretical $0.20$ limit.}
\label{tab:failure_decay}
\end{table}

The continuous model exhibits a constant decay rate for the failure probability. As demonstrated in Table \ref{tab:failure_decay}, the risk drops linearly on a logarithmic scale, yielding an asymptotic decay law of the form:
\begin{equation} \label{eq:decay_law}
    1 - p_1 \approx C \cdot \left( \frac{4}{5} \right)^m
\end{equation}
This dictates that each additional voter recruited into the process yields a constant 20\% reduction in the remaining strategic risk. By $m=69$, the failure probability falls to $10^{-5.95}$. 

This decay is linked to Theorem \ref{Teo:ottimo_continuo}, which established $\eta = 1/5$ as the optimal asymptotic partition density. The failure rate scales with the complement of this density ($1 - 1/5 = 4/5$), indicating that the institutional geometry acts as a high-pass filter. By shifting to preserve this exponential slope across varying $m$, the optimal $\eta_\infty(m)$ proves that the 4/5 constant is a structural invariant, dissipating electoral noise at a predictable rate.

\subsection{High-Precision Validation in the Extreme Asymptotic Regime}

Evaluating the continuous optimum $\eta_\infty$ requires resolving marginal probability differences as small as $10^{-30}$ during the optimization process. Because these variations fall far below standard 64-bit floating-point precision limits (machine epsilon $\approx 2.2 \times 10^{-16}$), the entire optimization routine was implemented using arbitrary-precision arithmetic via the Python \texttt{mpmath} library. This mitigates numerical flattening around the maximum and allows us to verify the exponential decay deep into the asymptotic regime, evaluating electorates up to $m = 501$.

\begin{table}[htbp]
\centering
\caption{Exact Optimal $\eta_\infty$ and Distance from Certainty via Arbitrary-Precision Arithmetic.}
\label{tab:high_precision_eta}
\begin{tabular}{|c|c|c|}
\hline
\textbf{Voters ($m$)} & \textbf{Optimal $\eta_\infty$} & \textbf{Distance from Certainty ($1 - p_1$)} \\ \hline
51  & 0.238478 & $5.61 \times 10^{-5}$ \\ \hline
151 & 0.215680 & $1.71 \times 10^{-14}$ \\ \hline
251 & 0.210277 & $4.27 \times 10^{-24}$ \\ \hline
351 & 0.207762 & $9.95 \times 10^{-34}$ \\ \hline
501 & 0.206350 & $2.65 \times 10^{-48}$ \\ \hline
\end{tabular}
\end{table}

\begin{figure}[htbp]
    \centering
    \includegraphics[width=0.65\textwidth]{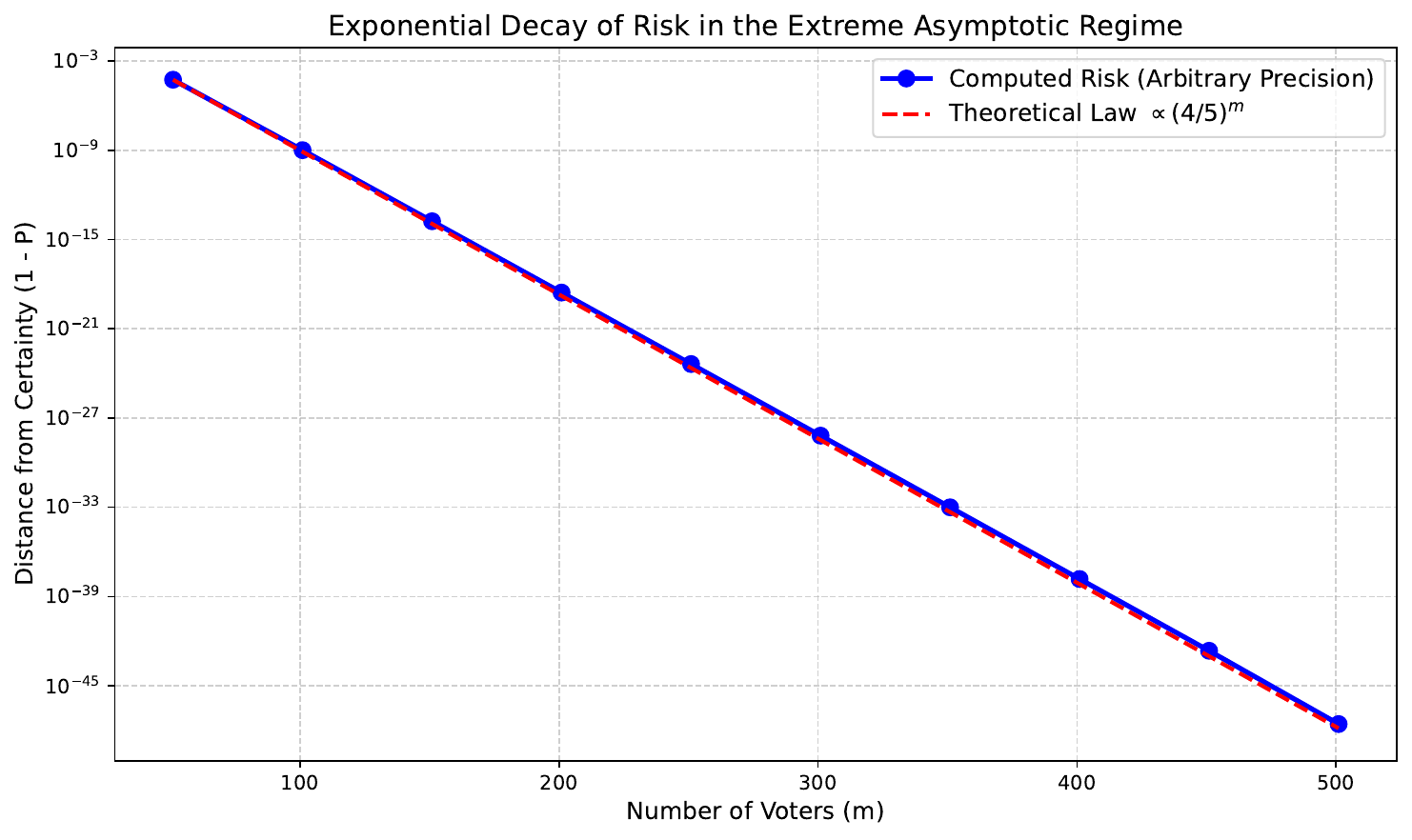}
    \caption{Semi-log plot of the distance from certainty ($1 - p_1$). The numerical values (blue dots) align with the theoretical $(4/5)^m$ law (red dashed line) across 43 orders of magnitude.}
    \label{fig:extreme_decay}
\end{figure}

As shown in Table \ref{tab:high_precision_eta} (condensed for brevity) and Figure \ref{fig:extreme_decay}, the data confirms our framework. The optimal density $\eta_\infty$ steadily approaches $0.206$ at $m=501$, converging toward the $0.20$ asymptote. Concurrently, across an increment of $\Delta m = 450$ voters, the empirical drop in risk — from $10^{-4.25}$ to exactly $10^{-47.57}$ — matches the theoretical prediction of $\Delta m \cdot \log_{10}(0.8) \approx -43.6$.

The results indicate that the failure probability of the proposed strategy follows an exponential decay law. The dominance of the partition is a structural property of the system: as voter turnout increases within the evaluated bounds, the probability of failure vanishes, reaching scales as low as $10^{-48}$. This confirms that the strategic partitioning of the electorate effectively suppresses stochastic noise, leading to asymptotic certainty of the desired outcome.

\subsection{Observational Trends and Heuristic Considerations}

The transition from discrete simulations to the continuous limit reveals consistent trends that can guide the selection of the optimal partition. Although the presence of the ``Wilson Plateau" for large $m$ and the parity effects preclude a strictly monotonic characterization of $l_{opt}$, two general tendencies emerge from the numerical data:

\begin{itemize}
    \item \textit{Scale Convergence:} As the candidate pool $n$ increases, the empirical optimal ratio $\eta_n = l_{opt}/n$ tends to stabilize, approaching the analytical predictions of the continuous model $\eta_\infty(m)$.
    \item \textit{Asymptotic Floor:} Across the evaluated parameter space, the optimal density $\eta$ consistently remains above the theoretical limit of $1/5$, with a general downward trend as the electorate $m$ grows.
\end{itemize}

These observations suggest that the continuous limit $\eta_\infty(m)$ serves as a reliable benchmark for the discrete case. Even in high-dimensional scenarios where an exhaustive search is impractical, the analytical results provide a localized region where the strategic efficiency of the partition is optimized.

\section{Numerical Analysis of the Universal Victory Event}
\label{sec:numerical_universal}

Following the theoretical framework established in Section \ref{continuo} for the standard victory, we now extend our numerical investigation to the universal victory event. We first analyze the discrete model by estimating the probability $p^{(2)}(n, m, l) = \mathsf{P}(F^{(2)}_{n,m,l})$, which represents the likelihood of the target candidate securing victory under the strict $F^{(2)}$ condition---namely, prevailing against all possible cyclic rotations of the electoral preferences. Subsequently, we evaluate its continuous counterpart $F^{(2)}_{\infty, m, \eta}$.

Our primary objective in the discrete framework is to treat the partition size $l$ as a control variable to identify the optimal configuration $l_{opt}$ that maximizes this probability, ultimately yielding the optimized value $p^{(2)}(n, m, l_{opt})$.

\subsection{Discrete Structural Constraints and High-Resolution Optimization}

Before identifying the optimal partition, it is crucial to outline the structural boundaries of the universal event. To maintain the topological advantage required for victory across all $n$ possible rotations, the cluster size must strictly satisfy $2 \le l < n/2$. If we apply a naive partition where $l=1$ or exactly $l=n/2$, the candidate space is divided into perfectly balanced sets, inevitably generating a structural tie in at least one rotation. Consequently, the strict universal condition is violated, and $\mathsf{P}(F^{(2)}_{n,m,l})$ collapses to 0.

To rigorously pinpoint $l_{opt}$ within these viable boundaries, we conducted high-resolution Monte Carlo simulations ($10^6$ independent trials per configuration) for $n \in \{14, 16\}$. Operating at this scale strictly bounds the stochastic noise: uncertainties are formally constrained using the 95\% Wilson score interval ($\text{SE}_W < 0.0010$). Crucially, this precision ensures that the identification of $l_{opt}$ is statistically absolute, as the empirical probability gaps between adjacent configurations strictly exceed the maximum confidence margin.

Table \ref{tab:f2_analysis} compares the exact optimal partition sizes ($l_{opt}$) and the maximum empirical probabilities for a standard victory ($\hat{p}_1$) against the universal condition ($\hat{p}^{(2)}$).

\begin{table}[htbp]
    \centering
    \caption{Comparison of optimal partition sizes ($l_{opt}$) and maximum empirical probabilities. Given $N = 10^6$, the 95\% Wilson margin of error is bounded by $\pm 0.0010$. Probabilities are reported to four decimal places.}
    \label{tab:f2_analysis}
    \begin{tabular}{cc | cc | cc}
        \toprule
        \multirow{2}{*}{Candidates ($n$)} & \multirow{2}{*}{Voters ($m$)} & \multicolumn{2}{c|}{Standard Victory} & \multicolumn{2}{c}{$F^{(2)}$ Condition} \\
        & & Optimal $l_{opt}$ & Max $\hat{p}_1$ & Optimal $l_{opt}$ & Max $\hat{p}^{(2)}$ \\
        \midrule
        14 & 11 & 6 & 0.5901 & 5 & 0.0450 \\
        14 & 21 & 5 & 0.7852 & 5 & 0.2915 \\
        14 & 31 & 5 & 0.8970 & 5 & 0.5281 \\
        14 & 41 & 5 & 0.9479 & 5 & 0.6997 \\
        \midrule
        16 & 11 & 6 & 0.5877 & 6 & 0.0558 \\
        16 & 21 & 6 & 0.8229 & 6 & 0.2506 \\
        16 & 31 & 6 & 0.9063 & 5 & 0.4917 \\
        16 & 41 & 6 & 0.9443 & 5 & 0.6995 \\
        \midrule
        30 & 51 & 9 & 0.9924 & 9 & 0.9183 \\
        \bottomrule
    \end{tabular}
\end{table}

The severity of the structural constraints ($2 \le l < n/2$) becomes immediately apparent under extreme discretization, such as $n=6$. In this microscopic regime, the boundaries restrict the viable partition to a single exact option: $l=2$. As shown in Table \ref{tab:n6_results}, although convergence is significantly slower compared to macroscopic systems due to the heavy impact of discrete noise, the empirical probability systematically scales upwards as $m$ increases, confirming its trajectory towards the theoretical continuous limit.

\begin{table}[htbp]
\centering
\caption{Empirical probabilities for extreme discretization ($n=6$, $l=2$). The table contrasts the standard victory ($\hat{p}_1$) with the universal victory ($\hat{p}^{(2)}$). Results are based on $10^6$ Monte Carlo simulations per row, with a 95\% Wilson margin of error bounded by $\pm 0.0010$. As theoretically expected, partitions with $l \ge 3$ yield $\hat{p}^{(2)} = 0.0000$ due to the strict constraint $l < n/2$.}
\label{tab:n6_results}
\begin{tabular}{ccc}
\toprule
Voters ($m$) & Standard Victory Max $\hat{p}_1$ & Universal Victory Max $\hat{p}^{(2)}$ \\
\midrule
11 & 0.6010 & 0.3091 \\
21 & 0.8381 & 0.6395 \\
31 & 0.9334 & 0.8206 \\
41 & 0.9716 & 0.9105 \\
51 & 0.9878 & 0.9553 \\
61 & 0.9944 & 0.9773 \\
\bottomrule
\end{tabular}
\end{table}

\subsection{Convergence to the Continuous Limit and Computational Evaluation}

\begin{table}[htpb]
\centering
\caption{Numerical estimation of the optimal continuous width $\hat{\eta}^*$ and the maximum probability of the universal event $\hat{\mathsf{P}}(F^{(2)}_{\infty, m, \hat{\eta}^*})$. Results are averaged over 64 independent optimizations, aggregating $N = 6.4 \times 10^6$ Monte Carlo simulations per row. The 95\% Wilson score intervals are provided for the exact victory probability.}
\label{tab:universal_sim_montecarlo}
\begin{tabular}{ccc}
\toprule
Voters ($m$) & Estimated Optimal Width ($\hat{\eta}^* \pm \sigma_{\hat{\eta}^*}$) & Max Probability $\hat{\mathsf{P}}(F^{(2)}_{\infty})$ with 95\% CI \\
\midrule
11  & 0.3192 $\pm$ 0.0023 & 0.1013 \quad $[0.1009, 0.1014]$ \\
21  & 0.2725 $\pm$ 0.0021 & 0.6399 \quad $[0.6396, 0.6403]$ \\
31  & 0.2518 $\pm$ 0.0016 & 0.9181 \quad $[0.9179, 0.9183]$ \\
41  & 0.2408 $\pm$ 0.0021 & 0.9857 \quad $[0.9856, 0.9858]$ \\
51  & 0.2335 $\pm$ 0.0025 & 0.9978 \quad $[0.9978, 0.9979]$ \\
61  & 0.2286 $\pm$ 0.0048 & 0.9997 \quad $[0.9996, 0.9998]$ \\
101 & 0.2169 $\pm$ 0.0042 & 0.9999 \quad $[0.9999, 1.0000]$ \\
\bottomrule
\end{tabular}
\end{table}

As theoretically proven in Section \ref{continuo}, as the system size $n$ increases, the discrete model smoothly transitions toward the continuous framework $F^{(2)}_{\infty, m, \eta}$, where the ratio $l/n$ is replaced by the interval width $\eta$. 

Evaluating the continuous universal event natively poses a challenge, as it requires verifying the victory condition over an infinite continuous domain of rotations $\theta \in [0, 1)$. However, the piecewise constant nature of the underlying counting process provides a useful simplification.

 The following proposition demonstrates that this continuous verification can be rigorously reduced to a finite set of discrete checks anchored exclusively to the random positions generated by the voters' preferences.

\begin{proposition}[Computational Characterization of the Universal Event]
\label{prop:computational_F2}
The continuous universal victory event $F^{(2)}_{\infty, m, \eta}$ occurs if and only if
\begin{equation} \label{eq:discrete_check}
    \min_{j \in [m]} N_{(U_j, U_j + \eta)} \ge 2 \quad \text{and} \quad \max_{j \in [m]} N_{[U_j, U_j + \eta)} < \frac{m}{2}.
\end{equation}
\end{proposition}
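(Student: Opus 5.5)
The plan is to start from the reformulation \eqref{eq:F2_continuous_simplified}, which writes $F^{(2)}_{\infty,m,\eta}$ as a condition on $\min_\theta N_{(\theta,\theta+\eta)}$ and $\max_\theta N_{(\theta,\theta+\eta)}$. It then suffices to prove two identities, with arcs always read modulo $1$ and using $\eta<1/2$ so that arcs of length $\eta$ behave like ordinary intervals:
\begin{equation*}
\min_{\theta\in[0,1)} N_{(\theta,\theta+\eta)}=\min_{j\in[m]} N_{(U_j,U_j+\eta)},\qquad \max_{\theta\in[0,1)} N_{(\theta,\theta+\eta)}=\max_{j\in[m]} N_{[U_j,U_j+\eta)}.
\end{equation*}
Both identities rest on the fact that $\theta\mapsto N_{(\theta,\theta+\eta)}$ is piecewise constant. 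It can only change when an endpoint of the window crosses a voter position, so the extrema are attained at windows anchored at the $U_j$. I would argue by a direct sliding-window comparison rather than by analysing the jumps.

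For the minimum, the inequality ``$\le$'' is trivial, since each $U_j$ is an admissible $\theta$. For ``$\ge$'', I fix an arbitrary $\theta$ and slide the window backwards to the last voter $U_j$ lying at or before $\theta$ in the cyclic order; this exists because $m\ge 1$. By construction no voter lies in $(U_j,\theta]$, and $U_j+\eta\le\theta+\eta$. Hence
\begin{equation*}
(U_j,U_j+\eta)\cap\{U_1,\dots,U_m\}\subseteq(\theta,\theta+\eta)\cap\{U_1,\dots,U_m\},
\end{equation*}
which gives $N_{(U_j,U_j+\eta)}\le N_{(\theta,\theta+\eta)}$. Taking the infimum over $\theta$ proves the first identity, and so the condition ``$\min\ge 2$'' is equivalent under both formulations.

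For the maximum, I first prove ``$\le$''. Fix $\theta$. If $(\theta,\theta+\eta)$ contains no voter, its count is $0$ and is dominated by any term on the right. Otherwise, let $U_j$ be the first voter in $(\theta,\theta+\eta)$. Then every voter in $(\theta,\theta+\eta)$ lies in $[U_j,\theta+\eta)\subseteq[U_j,U_j+\eta)$, so $N_{(\theta,\theta+\eta)}\le N_{[U_j,U_j+\eta)}$. For ``$\ge$'', I show that each $N_{[U_j,U_j+\eta)}$ is attained by some open window. Take $\theta=U_j-\varepsilon$, with $\varepsilon>0$ smaller than the distance from $U_j$ to the nearest voter preceding it and also smaller than the distance from $U_j+\eta$ to the nearest voter strictly preceding $U_j+\eta$. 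Then $(\theta,\theta+\eta)$ picks up exactly the voters of $[U_j,U_j+\eta)$. In particular, a voter sitting exactly at $U_j+\eta$ is excluded by both windows, so no almost-sure genericity assumption is needed at this step.

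The main obstacle I expect is exactly this open-versus-half-open bookkeeping: explaining why the minimum check uses open windows anchored at $U_j$ while the maximum check uses half-open ones, and handling coincident boundaries under the cyclic wrap-around. The left-limit argument above is designed to settle it. Combining the two identities with \eqref{eq:F2_continuous_simplified} yields \eqref{eq:discrete_check}.
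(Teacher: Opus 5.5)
Your proposal is correct and is essentially the paper's sliding-window argument: move the window until its left endpoint meets a voter $U_j$, with $U_j$ excluded for the minimum and included for the maximum, here organized as two identities for $\min_\theta$ and $\max_\theta$. Your step with $\theta=U_j-\varepsilon$ also justifies the necessity of the half-open condition $\max_j N_{[U_j,U_j+\eta)}<m/2$, which the paper calls trivial even though $[U_j,U_j+\eta)$ is not itself one of the open windows $(\theta,\theta+\eta)$.
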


\begin{proof}
Since the discrete conditions in \eqref{eq:discrete_check} represent a subset of all possible rotations $\theta \in [0, 1)$, their satisfaction is trivially necessary for the continuous event $F^{(2)}_{\infty, m, \eta}$ to hold. 

To prove sufficiency, suppose by contraposition that $F^{(2)}_{\infty, m, \eta}$ does not occur. This implies that there exists some rotation $\theta$ such that either $N_{(\theta, \theta+\eta)} \le 1$ or $N_{(\theta, \theta+\eta)} \ge m/2$. 
In the first case, we can continuously rotate the interval $(\theta, \theta+\eta)$ until its left boundary just passes a voter position $U_j$, leaving $U_j$ slightly outside. This rotation realizes a local minimum, yielding $N_{(U_j, U_j + \eta)} \le 1$ for some $j \in [m]$, violating the first condition.
In the second case, if $N_{(\theta, \theta+\eta)} \ge m/2$, we can slide the window until its left boundary exactly captures a voter position $U_j$. This realizes a local maximum, ensuring that the half-closed interval count satisfies $N_{[U_j, U_j + \eta)} \ge m/2$ for some $j \in [m]$, which violates the second condition. This concludes the proof.
\end{proof}

Exploiting Proposition \ref{prop:computational_F2}, our algorithmic implementation eschews any continuous grid approximation. Instead, it evaluates the interval constraints strictly at the $m$ discrete boundaries defined by $U_j$, ensuring an exact and computationally efficient realization of the continuous event. 

Table \ref{tab:universal_sim_montecarlo} summarizes the optimization of the relative cluster width $\eta^*$ and the corresponding maximum victory probabilities. The Monte Carlo results robustly confirm the theoretical framework: as $m$ grows, the optimal interval width dynamically narrows, and the probability of the universal event sharply converges to 1. Remarkably, for an electorate of just $m=51$ voters, the universal victory configuration is structurally guaranteed with a probability exceeding 0.9970.

\section{Conclusions}
\label{sec:conclusions}

In this paper, we have explored the emergence of topological robustness in two-round elections under cyclic preference profiles. Our analytical and numerical results demonstrate that an agenda-setter's ability to ensure a specific candidate's victory is not merely a probabilistic advantage, but a structural certainty that crystallizes as the dimensions of the electoral space expand.

\subsection{The Cost of Granularity: Exact Optimization vs. Naive Partitioning}

While the continuous limit suggests that the optimal cluster width converges to approximately $1/5$ of the candidate space, applying a naive, unoptimized partition $l = \lfloor n/5 \rfloor$ in highly discrete environments (small $n$ and $m$) leads to significantly suboptimal winning probabilities. 

This gap between the idealized continuous model and the discrete reality is driven by two main factors:
\begin{itemize}
    \item \textit{Numerical Friction:} The Agenda Setter is restricted to integer values of $l$. For small $n$, the lack of candidate density forces the optimal discrete width to stay strictly above the continuous limit $\eta^*$. The cluster must be slightly over-sized to absorb rounding errors and tie-breaking mechanics.
    \item \textit{Voter-Candidate Collisions:} When $m$ is not negligible compared to $\sqrt{n}$, the probability of multiple voters sharing the exact same preferred candidate increases (violating the collision-free assumption of the continuous limit). This introduces correlations that heavily penalize the generalized cluster strategy.
\end{itemize}

Consequently, for small populations, exact parameter optimization is strictly required. Relying solely on the unoptimized asymptotic fraction fails to account for structural discretizations, drastically reducing the strategy's effectiveness.

\subsection{Theoretical Implications and Mechanism Design}

Beyond the computational aspects, the validation of the Universal Victory Event ($F^{(2)}$) carries profound implications for mechanism design. We have rigorously established that a single, fixed distribution of voter preferences can be mathematically manipulated to guarantee the victory of \textit{any} arbitrary candidate, provided the organizer controls the initial geometric partition of the ballot. This absolute agenda control relies on the convergence of the optimal continuous width $\eta^*$ towards a stable intrinsic constant ($\approx 0.20$) for large $m$, revealing a fundamental geometric property of the two-round cyclic mechanism.

Paradoxically, the transition from a stochastic regime to a deterministic one follows a strict sigmoidal phase transition, confirming that ``candidate atomization'' in high-resolution spaces ($n \to \infty$) does not democratize the outcome. Instead, it acts as a fundamental mechanism for consolidating institutional control: the denser the candidate space, the smoother the transition to the continuous limit, allowing the agenda-setter to perfectly exploit the symmetrical vulnerabilities of the voting rule. 

\subsection*{Declaration of generative AI and AI-assisted technologies in the manuscript preparation process}
During the preparation of this work the author(s) used Gemini (Google) in order to refine the manuscript's English language, improve the structural flow of the introduction, and assist in the statistical processing of simulation data. After using this tool, the author(s) reviewed and edited the content as needed and take(s) full responsibility for the content of the published article.

\vspace{-0.3cm}

\bibliographystyle{plain} % O 'abbrv' a seconda della rivista

\end{document}